\documentclass[10pt]{article}
\clubpenalty=10000 \widowpenalty=10000

\usepackage{setspace}
\usepackage{color}
\usepackage{lipsum}
\usepackage{verbatim}
\usepackage{amsfonts,amsmath,amssymb,mathrsfs,amsthm}
\usepackage{graphicx}
\usepackage{epstopdf}
\usepackage{natbib}
\usepackage{algorithmic}
\usepackage{amsopn}
\usepackage{mathtools}
\usepackage{cases}
\usepackage{enumerate}
\usepackage{graphicx,subfigure}
\newtheorem{theorem}{Theorem}[section]
\newtheorem{lemma}{Lemma}[section]
\newtheorem{definition}{Definition}[section]

\newtheorem{proposition}{Proposition}[section]
\newtheorem{corollary}{Corollary}[section]
\newtheorem{example}{Example}[section]
\usepackage{geometry}
\geometry{left=2.5cm,right=2.5cm,top=2.5cm,bottom=2.5cm}
\setlength{\oddsidemargin}{0in} \setlength{\evensidemargin}{0in}
\setlength{\headheight}{0.2in} \setlength{\topmargin}{-0.3in}
\setlength{\textheight}{8.65in} \setlength{\textwidth}{6.5in}

\setlength{\parskip}{.1in}
\setstretch{1}

\renewcommand{\epsilon}{\varepsilon}
\newcommand{\esssup}{\mathop{\operatorname{ess\,sup}}}

\newtheorem{thm}{Theorem}[section]

\newtheorem{rem}[thm]{Remark}

\numberwithin {equation} {section}
\def\be{\begin{equation}} %\be=\begin{equation}
\def\ee{\end{equation}} %\ee=\end{equation}

\begin{document}
\title{\LARGE \bf Optimal Consumption-Investment with Epstein-Zin Utility under Leverage Constraint}
\author{ Dejian Tian\thanks{School of Mathematics, China University of Mining and Technology, China. Email: djtian@cumt.edu.cn.} \qquad
Weidong Tian\thanks{Belk College of Business,
University of North Carolina at Charlotte, USA.  Email: wtian1@charlotte.edu.}
\qquad
Jianjun Zhou \thanks{Corresponding author, College of Science, Northwest A\&F University, China. Email: zhoujianjun@nwsuaf.edu.cn.} \qquad
Zimu Zhu\thanks{ Fintech Thrust, Hong Kong University of Science and Technology (Guangzhou), China. Email: zimuzhu@hkust-gz.edu.cn.}
}

\date{}
\maketitle
\thispagestyle {empty}

\newpage
\renewcommand{\abstractname}{
{\LARGE Optimal Consumption-Investment with Epstein-Zin Utility under Leverage Constraint }
\\[1in] Abstract}

\begin{abstract}
We study optimal portfolio choice under Epstein–Zin recursive utility in the presence of general leverage constraints. We first establish that the optimal value function is the unique viscosity solution to the associated Hamilton–Jacobi–Bellman (HJB) equation, by developing  a new dynamic programming principle under constraints. We further demonstrate that the value function admits smoothness and characterize the optimal consumption and investment strategies. In addition, we derive explicit solutions for the optimal strategy and explicitly delineate the constrained and unconstrained regions in several special cases of the leverage constraint. Finally, we conduct a comparative analysis, highlighting the differences relative to the classical time-separable preferences and to the setting without leverage constraints.
\vspace{0.9cm}

\noindent\textit{Keywords}:  Epstein-Zin utility, Leverage constraint, Viscosity and smooth solution, Dynamic programming  principle  \\
\noindent\textit{Mathematics Subject Classification (2020)}: 49L20, 60H20, 91G10, 
91G80, 93E20 \\
\noindent\textit{JEL Classification Codes}: C61, G11
\end{abstract}
\thispagestyle{empty}

\newpage

\setcounter{page}{1}
\section{Introduction}

In this paper, we study a new class of optimal portfolio choice problems under constraints in a continuous-time setting, motivated by a central strand of the asset pricing literature. Specifically, an investor has an Epstein-Zin preference with aggregator
\begin{align*}
f(c,v)=\frac{c^{1-S}}{1-S}\left((1-R)v\right)^{\rho},
\end{align*}
where $R > 0$ denotes the agent's relative aversion and $S>0$ captures the agent's elasticity of intertemporal complementarity (EIC). 
%\red{See \cite[page 133]{HHJ23a}, $S$ captures the agent’s elasticity of intertemporal complementarity (EIC), which is the reciprocal of the elasticity of intertemporal substitution. In the former EZ %model, $S$ is generally written as $1/\psi$,  and $\psi$ represents the EIS coefficient. } 
Here $\rho = \frac{S-R}{1-S}, R \ne 1, S \ne 1$. In a standard Black–Scholes financial market, the corresponding optimal portfolio choice problem {\em without any constraints} has been extensively studied in \cite{HHJ23a}, \cite{HHJ23b}, \cite{HHJ25} and \cite{S25}. We extend this literature by studying the optimal portfolio choice problem under a {\em leverage constraint} on the risky position, where the dollar investment in the risky asset is controlled by an exogenous general function of wealth at any time.

This paper offers the first theoretical analysis of the problem, except in the special case $R = S$, which corresponds to time-separable (CRRA) preferences. In that setting, \cite{VZ97} and \cite{Z94} obtain results that are closely related to those presented here. The main challenge in this paper is that the aggregator $f(c,v)$ is not Lipschitz continuous in either component. Moreover, the leverage constraint is time-varying and the set of feasible strategies is  not compact. Consequently, the arguments for the time-separable preference, which rely primarily on standard stochastic control and viscosity solution theory developed in works such as \cite{c92}, \cite{DN90}, and \cite{FS06}, cannot be applied directly in this setting. This paper develops a new analytical approach that integrates backward stochastic differential equations (BSDEs), Epstein–Zin utility, and stochastic control with time-varying and state-dependent constraints.

Specifically, this paper makes several major theoretical contributions to the literature. In the first main result, Theorem \ref{th-unique}, we establish that the optimal value function is the unique viscosity solution of the corresponding HJB equation under appropriate boundary conditions. Our novel approach is to prove the new type of dynamic programming principle (DPP). By exploiting the relationship between the Epstein–Zin utility process and the infinite-horizon BSDE, we show that the optimal value function satisfies the DPP when the consumption process is truncated at any fixed positive level. This result is a crucial step toward demonstrating that the optimal value function is indeed a viscosity solution of the HJB equation. The proof of uniqueness is also subtle and departs significantly from the classical case. In particular, establishing the comparison principle requires carefully handling both components of the Epstein–Zin aggregator.

The second main result, Theorem \ref{th-c2}, establishes that the optimal value function is a $C^2$ smooth function of the wealth. The proof relies heavily on the aforementioned new viscosity solution theorem and its construction. Additionally, our argument is highly technical, making full use of the specific structure of the Epstein–Zin aggregator and analyzing the second-order properties of a sequence of auxiliary value functions. Given the smoothness of the optimal value function, we are then able to explicitly characterize the optimal consumption and investment strategy.

Our third main contribution is to examine in detail a standard leverage constraint, in which the risky investment is bounded by a linear function of wealth. Although there is no closed-form expression for the optimal value function in the general linear case, two particular cases can be solved analytically in Theorem \ref{th-linear}. First, when the linear function reduces to a proportion of wealth, the constrained region is 
 $(0, \infty)$, and the optimal investment strategy follows the prescribed proportional leverage constraint. Second, when the linear function reduces to a positive constant, the constrained region is $(x^*, \infty)$, where $x^* >0$ is uniquely determined by a smooth-fit condition at this threshold. The latter result is highly nontrivial, as it relies on applying the comparison principle for elliptic differential operators, whereas the standard arguments used in the time-separable preference case fail. In general, we also present several comparative analyses of the optimal consumption and investment strategies.

%In the economic and finance literature, it is well documented that the reciprocal relationship between risk aversion and the intertemporal elasticity of substitution in time-separable utility does not account for numerous asset pricing anomalies in finance and economics. 

%\cite{EZ89} pioneers the concept of recursive utilities in the economic and finance literature in a discrete-time framework . The continuous-time equivalent is presented in \cite{DE92} under the guise of stochastic recursive preferences. Since their inception, these Epstein-Zin type preferences have provided a robust framework for addressing numerous asset pricing anomalies, as in the long-run risk finance literature (see, for instance, \cite{BCZ2014}, \cite{BY04},  \cite{BS20}, and \cite{SSY18}). 

There have been several important studies on the optimal portfolio choice problem for Epstein-Zin utility, see, for instance, \cite{AH21}, \cite{KSS17}, \cite{KSS13} , \cite{MX18}, \cite{SS99},  \cite{SS16}, \cite{WWY16}, and \cite{X17}. However, these studies, along with the aforementioned papers, do not consider constraints. Moreover, they either adopt a martingale approach or impose an {\em ex ante} smoothness assumption on the value function, subsequently verifying that a proposed analytical expression coincides with the value function in certain special cases. By contrast, this paper establishes the general existence and uniqueness of the viscosity solution, and subsequently proves the smoothness of the value function.
%\footnote{Given the non-Lipschitz nature of Epstein-Zin utility, most financial applications to date have been developed within a discrete-time framework. For instance, \cite{BS20}, \cite{B20}, \cite{GP15}. Moreover, several authors have shown the dynamic programming principle in the discrete-time setting, as in \cite{BJ18}, \cite{S22}, and \cite{SWZ24}. }

Portfolio constraints, in particular the leverage constraint considered in this paper, have been widely studied in both the probability and
mathematical finance literature because of their economic importance. See, for instance, \cite{C05}, \cite{CL00}, \cite{CK92}, \cite{D95}, \cite{DL11},  \cite{ET08}, \cite{GV92}, \cite{HI22}, \cite{JK24}, \cite{LZ13}, \cite{SS15}, \cite{TZ22}, and \cite{VZ97}. The leverage constraint is also referred to as a borrowing constraint, margin requirement, or liquidity constraint in different contexts. Nevertheless, research on portfolio constraints has focused primarily on time-separable preferences, with only a limited number of studies in the Epstein–Zin framework, including \cite{HHT24}, \cite{MMS20}, and \cite{TTZ25}. The first two investigate transaction costs, while the latter considers a consumption constraint. To the best of our knowledge, this paper is the first to provide a comprehensive theoretical solution to the problem under a general leverage constraint by introducing new analytical methods.

In the economic and finance literature, \cite{EZ89} pioneers the concept of recursive utilities in a discrete-time framework. Since their inception, these Epstein-Zin type preferences have provided a robust framework for addressing numerous asset pricing anomalies, as in the long-run risk finance literature (see, for instance, \cite{BCZ2014}, \cite{BY04},  \cite{BS20}, and \cite{SSY18}). 
While a continuous-time version of Epstein–Zin preferences was introduced in \cite{DE92} under the guise of stochastic recursive preferences, the non-Lipschitz nature of Epstein–Zin utility makes the mathematical theory challenging in continuous time. Consequently, most financial applications have been developed within a discrete-time framework. For instance, \cite{BS20}, \cite{B20}, \cite{GP15} and \cite{MM10}. Moreover, prior studies, such as \cite{BJ18}, \cite{S22}, and \cite{SWZ24}, have demonstrated the dynamic programming principle in the discrete-time setting. Consequently, the analytical approach developed in this paper may provide valuable insights for {\em continuous-time asset pricing} under Epstein–Zin preferences.

The structure of the paper is as follows. Section \ref{sec:model} introduces the problem. Section \ref{sec:basic pro} presents three main theorems of this paper.  Section \ref{sec:dpp} develops a dynamic programming principle tailored to our setting. Section \ref{sec:viscosity} establishes the first main theorem, while  Section \ref{sec:smoothness} proves the second. Section \ref{sec:linear} provides a detailed analysis of the linear leverage constraint and establishes the third main theorem. Section \ref{sec:conclusion} concludes. Some technical proofs are collected in Appendices A–D.

%This paper aims to provide a rigorous proof on the HJB approach for consumption-Investment models with leverage constraints under recursive utility. We will show the following two results:\\
%(1) The value function is the unique viscosity solution of the corresponding HJB equation.\\
%(2) The value function is $C^2$

\section{Model setup and the optimization problem}\label{sec:model}

 The probability space is defined as $\left(\Omega,\left(\mathcal{F}_t\right)_{t\geq0}, \mathcal{F}, \mathbb{P}\right)$, where the information flow $\left(\mathcal{F}_t\right)_{t\geq0}$ is generated by a one-dimensional standard Brownian motion $\left(B_t\right)_{t\geq0}$.   $\mathcal{F}$ is the sigma-algebra generated by $\left(\mathcal{F}_t\right)_{t\geq0}$.  We denote $\mathbb{R}_+=[0,+\infty)$, $\mathbb{R}_{++}=(0,+\infty)$ and
$\bar{\mathbb{R}}_+=[0,+\infty]$, respectively.

\subsection{The financial market}
We consider a continuous market with two assets. The first is a risky asset, representing the stock index in the equity market, whose price process follows the dynamics
\begin{equation}
d S_t=\mu S_t d t+\sigma S_t d B_t,  ~~t\geq0,   ~~S_0=s_0>0,
\end{equation}
where $\mu > 0$ and $\sigma > 0$ are constants. The second is a risk-free asset with a constant rate of return $r$, and $0 < r < \mu$. It is natural to use the risk-free asset to represent a bond market in cases where interest rate risk is not a major concern, or when the volatility of the interest rate is dominated by the equity market risk.

The total wealth process evolves according to the equation
\begin{equation}
\label{eq:wealth}
d X_t=r X_t d t+\pi_t (\mu-r) d t+\pi_t \sigma d B_t-c_t d t, ~~t\geq0, ~~~X_0 = x\geq0,
\end{equation}
where $x$ is the initial wealth, $\pi_t$ is the amount of wealth invested in stock and $c_t$ is the consumption rate at time $t$.
%The control processes are the portfolio $\pi$ and the consumption rate $c$.

Let ${\mathscr P}$ be the set of progressively measurable processes and ${\mathscr P}_{+}$ the restriction of ${\mathscr P}$ to processes that take nonnegative values, respectively.  We introduce the following sets:
\begin{align*}
   \mathcal{L}_+&=\left\{u : u\in\mathscr{P}_+ \text{~~and~} \int_0^tu_sds<+\infty ~a.s.,   \forall t\geq0 \right\},\\
   \mathcal{M}&=\left\{u : u\in\mathscr{P} \text{~~and~} \int_0^tu^2_sds<+\infty ~a.s.,   \forall t\geq0 \right\}.
\end{align*}

For any given $x\geq0$, let $\mathcal{A}(x)$ denote the set of \textit{admissible} consumption-investment strategies $\left(\pi_t, c_t\right)_{t\geq0}$ such that: \begin{itemize}
    \item[(i)] Integrable condition: $c\in\mathcal{L}_+$ and $\pi\in\mathcal{M}$;
    \item[(ii)] Non-negative condition: the wealth process $X_t\geq0$, a.s.,  for all $t\geq0$ under the given strategy $(\pi,c)$;
    \item[(iii)] Time-varying leverage constraints: $|\pi_t|\leq g(X_t)$, a.s., for all $t\geq0$, where the function $g:\mathbb{R}_{+}\rightarrow\mathbb{R}_{+}$ is an increasing, concave function such that $g(0)\geq 0$, $g(x)>0$, for all $x>0$,  and satisfying Lipschitz continuous condition with constant $K\geq0$.
\end{itemize}

\begin{rem}
   For any $x\geq0$, since $(0,rx)\in\mathcal{A}(x)$, the set  $\mathcal{A}(x)$ is not empty. We define a consumption stream $c\in \mathcal{C}(x)$ if there is an investment process $\pi$ such that $(\pi,c)\in \mathcal{A}(x)$.
\end{rem}

In the leverage constraint, there are two components. First, there is an upper bound on the risky investment relative to total wealth. This bound is often described using several  different terminologies.  For instance, when $\pi_t \le X_t$, it implies a long position in the risk-free asset since the dollar investment in the risky free asset, $X_t - \pi_t$ is nonnegative. When $\pi_t > X_t$, it is often referred to as a leverage or margin requirement constraint, since the investor cannot take on excessive leverage in the risky asset due to the condition $\pi_t \le g(X_t)$. For example, the constraint $\pi_t \le 2X_t$ means that the investor does not borrow more than their current wealth to invest in the risky asset. Because borrowing is required to establish such a leveraged position, this is also called a borrowing constraint (see \cite{C05}, \cite{CL00}, \cite{GV92},  and \cite{VZ97}). Moreover, in the corporate finance framework, when the risky asset is interpreted as equity and wealth as total assets, the leverage constraint is imposed as an upper bound on the equity-to-asset ratio (see, for instance, \cite{ACL19}). In this setting, it is sometimes referred to as a liquidity constraint. 

Second, the lower bound on risky investment  is often related to a lower bound on total wealth and ruled out the arbitrage opportunity. For instance, when $\pi_t \ge 0$, it rules out short positions in the risky asset, which corresponds to the standard short-sale constraint. Similarly, a condition such as $\pi_t \ge 0.5 X_t$ requires the investor to allocate at least 50 percent of wealth to the risky asset. Equivalently, this implies  $X_t - \pi_t \le 0.5 X_t$ in the risk-free asset. In our paper, we allow for different specifications of upper and lower bounds, such as $\pi_t \le g(X_t)$ and $\pi_t \ge h(X_t)$. The essential point for the lower bound is to ensure the optimal value function exists and possesses desirable properties. As we demonstrate in the main theorems below, at least within the Black–Scholes framework, the optimal risky allocation is always non-negative under these conditions.

\begin{rem}
   % In our paper, we allow for different specifications of upper and lower bounds, such as $\pi_t \le g(X_t)$ and $\pi_t \ge h(X_t)$. The essential point for the lower bound is to ensure the optimal value function exists and possesses desirable properties. As we demonstrate in the main theorems below, at least within the Black–Scholes framework, the optimal risky allocation is always non-negative under these conditions. 
     \cite{SS15} considers a compact set restriction on the risky investment, specifically $\pi_t \in [a, b]$ with $0 < a < b$ in a single–risky-asset economy. Their arguments cannot accommodate short-selling, since the dollar amount can be arbitrarily close to zero.
\end{rem}

\begin{example}
    In most applications, the function  $g(x)$ is specified in linear form, and we will discuss this case in detail in Section \ref{sec:linear}. To be specific, when $g(x) = kx + L$, the constraint implies that the risky investment cannot exceed a multiple $k$ of total wealth, with $L$ representing a fixed allowance.
\end{example}

\begin{example}
  We allow for a broad specification of the function $g(x)$. For example, the leverage may depend on several wealth thresholds. Let $W_0 =0 < W_1 < \cdots < W_{N} < W_{N+1} = +\infty$, and $g_i(x) = k_i x + L$ over the region $(W_i, W_{i+1})$. We assume that $k_i$ is decreasing, so that proportional leverage declines as wealth increases. This type of leverage specification can also be addressed by extending the discussion in Section \ref{sec:linear} in a straightforward manner.
\end{example}

\subsection{The Epstein-Zin utility}

Let $R$ and $S$ both lie in $(0,1) \cup (1,+\infty)$, and set $\mathbb{V}=(1-R)\mathbb{\bar{R}}_{+}$.  Let  $\nu=\frac{1-R}{1-S}$ and $\rho=\frac{S-R}{1-R}=1-\frac{1}{\nu}$. We define an aggregator $f:\mathbb{R}_+\times \mathbb{V}\rightarrow \mathbb{V}$  as follows:
\begin{equation}
\label{eq:aggregator}
f(c,v)=\frac{c^{1-S}}{1-S}\left((1-R)v\right)^{\rho},
\end{equation}
where $R$ denotes the agent’s relative risk aversion, and $S = \frac{1}{\psi}$, and the number $\psi$ represents the agent’s elasticity of intertemporal 
substitution (EIS). Since $S$ is the reciprocal of the elasticity of intertemporal substitution, it is often referred to as the elasticity of intertemporal complementarity (EIC). In the  particular case where $R = S$, i.e., $R \psi = 1$, we have $\nu = 1$ and $\rho = 0$, and the utility function reduces to the standard CRRA form.

\begin{rem}
Given the specification of the aggregator in \eqref{eq:aggregator},  it is possible that $\frac{c^{1-S}}{1-S}$, $\left((1-R)v\right)^{\rho}$ take values in $\{0,+\infty\}$. In these cases, we adopt the standard conventions from Section 4 of \cite{HHJ23b}. For example, when $\rho<0$ and $0<R<1$, we take $f(0,0) = 0$.
\end{rem}

Let $\delta > 0$ denote the subjective discount rate, reflecting the agent's rate of time preference.
The stochastic differential utility process $(V_t^{c})$ associates to the consumption process $c \in {\mathscr P}_{+}$ and the aggregator $f$ is one that satisfies:
\begin{equation}\label{eq:ez-utility}
V_{t}^{c}=\mathbb{E}\left[\int_{t}^{\infty}e^{-\delta s} f(c_{s},V_{s}^{c})ds~\big|~\mathcal{F}_{t}\right],~~t\geq 0.
\end{equation}

The aggregator (\ref{eq:aggregator}) differs slightly from the classical (minus) version in the Epstein-Zin literature; however, the corresponding differential utility processes retain all essential properties, as demonstrated in \cite{HHJ23a}. Moreover, the space $\mathscr{P}_{+}$ of consumption streams under the aggregator (\ref{eq:aggregator}) is broader than that of the classical aggregator, eliminating restrictive integral conditions (see, e.g., \cite{SS99}) that limit its applicability.
 In addition, \cite{HHJ23a, HHJ23b} demonstrate that the coefficients $R$ and $S$ must lie on the same side of unity (that is, $\nu > 0$) to ensure a well-defined utility process and prevent bubble formation.

 The existence and uniqueness of the stochastic recursive utility with the Epstein–Zin aggregator (\ref{eq:aggregator}) are far from obvious and are typically linked to a fixed-point problem in both discrete and continuous frameworks. \cite[Theorem 6]{HHJ23b} shows that for any consumption $c\in\mathscr{P}_{+}$, there exists a unique stochastic recursive utility process $V^{c}= \{V_{t}^{c}\}_{t\geq0}$ that solves \eqref{eq:ez-utility} under the assumption $\nu\in(0,1)$. For this reason, we adopt the Epstein–Zin utility formulation described in \cite{HHJ23a} and assume that $\nu\in(0,1)$ in this paper.
 
\begin{rem}
It is worth noting that \cite{S25} provides an economic interpretation for the case $\nu < 0$, in slightly different set of feasible portfolio sets. In contrast, \cite{HHJ25} examines the case $\nu > 1$ by considering an appropriate Epstein–Zin utility process. In this paper, we focus on the case $\nu \in (0,1)$ to remain consistent with previous studies on leverage constraints under time-separable preferences. 
\end{rem}

\begin{rem}
 The condition $\nu \in (0, 1)$ is equivalent to $\rho < 0$. Technically, the classical case $\rho=0$ is excluded in our setting, but this case can be either approximated by a sequence of negative values, such as $\rho_n = - \frac{1}{n}$, or verified directly as in  \cite{Z94}. The key challenge in our setting is that the aggregator involves two variables $(R,S)$, whereas in the classical case it depends on only one variable.
\end{rem}

\subsection{The optimal consumption-investment problem}
The optimal portfolio choice problem is as follows:
\begin{align}\label{eq:ez-problem}
J(x)=\sup _{(\pi, c) \in \mathcal{A}(x)} V_{0}^{c}=\sup _{c \in \mathcal{C}(x)} V_{0}^{c},
~~~~x>0,\end{align}where $V^{c}= \{V_{t}^{c}\}_{t\geq0}$ denotes an appropriate solution of \eqref{eq:ez-utility}.
We take $g\equiv+\infty$ as a benchmark model, which corresponds to the case without borrowing restrictions on the portfolio strategy.  The corresponding optimal portfolio choice problem is as follows:
\begin{align}\label{eq:ez-problem-no-constraints}
J^{ez}(x)=\sup _{(\pi, c) \in \mathcal{A}_{ez}(x)} V_{0}^{c},~~~~~x>0,
\end{align}where $\mathcal{A}_{ez}(x)$ corresponds to the feasible strategies set without borrowing constraints.  Theorem 8.1 in \cite{HHJ23b} establishes that
the optimal strategies are given by:
\begin{align}\label{eq:ez-without constraints}
\pi^{ez}=\frac{\mu-r}{R\sigma^{2}}X  \text{~~~~~and~~~~~}  c^{ez}=\eta X,\end{align} and the optimal value function in this situation is 
\begin{equation}
\label{eq:benchmark}
J^{ez}(x)=\eta^{-\nu S}\frac{x^{1-R}}{1-R},
\end{equation}
under assumption that
\begin{align}\label{eq:eta}
\eta \equiv \frac{1}{S}\left[\delta+(S-1)(r+\frac{\kappa}{R})\right] > 0,
\end{align}
where
\begin{align}\label{eq:kappa}
\kappa=\frac{(\mu-r)^{2}}{2\sigma^{2}}.
\end{align}

%Throughout this paper, we assume $\nu\in(0,1)$ (or equivalently $\rho<0$) and $\eta>0$. 
For technical reasons, we assume $0<R<1$ throughout the paper, although we note that several results extend to the more general case.

\section{Major Theorems}\label{sec:basic pro}
We present the main theorems of this paper as follows. The proof of each major theorem is given in subsequent sections.  

%We first show that the dynamic programming principle (DPP) holds for the optimal value function.
%
%\red{I guess the DPP is in the following way. How to prove it? Here, $J'(0)=+\infty$, for some feasible strategy $(\pi,c)$, the corresponding wealth process $X_t$ may hit zero point, be careful using the Ito formula.}
%
%\begin{proposition}
%For any stopping time $\tau\geq0$, the following dynamic programming principle holds:
%\begin{equation*}
%J(x)=\sup_{\mathcal{A}(x)}E[\int_0^{\tau}e^{-\delta u}f(c_u, e^{-\delta\nu u}J(X_{u}))du+e^{-\delta\nu \tau}J(X_{\tau})],~~~x\geq0.
%\end{equation*}
%\end{proposition}

\begin{theorem}\label{th-unique}
The value function $J$ is the unique viscosity solution of the following HJB equation
\begin{align}
\label{eq:HJB}
-\delta\nu J(x)+\mathbf{H}(x,J(x),J_x(x),J_{xx}(x))=0, ~ \ x\in \mathbb{R}_+,\ \ 
%\delta  V(x)=& \sup_{\pi\leq g(x),c\geq a}\{f(c,V)+\partial_xV[rx+\pi(\mu-r)-c]+{\pi^2\sigma^2\over 2}\red{\partial_{xx}V}\}  \nonumber \\
%=&\sup_{\pi\leq g(x)}[\pi(\mu-r)\partial_xV+{1\over 2}\sigma^2\pi^2 \red{\partial_{xx}V}]+\sup_{c\geq a}[f(c,V)-c\partial_xV]+rx\partial_xV,  x\geq {a\over r}.
\end{align}
where
\begin{eqnarray*}
\mathbf{H}(x,k,p,q)&=\sup_{{|\pi|\leq g(x)}}[\pi(\mu-r)p+\frac{1}{2}\sigma^2\pi^2q]+\sup_{c\geq 0}[f(c,k)-cp]+rxp,\\ 
&(x,k,p,q)\in \mathbb{R}_+\times \mathbb{R}_+\times \mathbb{R}\times \mathbb{R}.
\end{eqnarray*}
%\begin{align}
%\label{eq:HJB1}
%\delta\nu J=& \sup_{\pi\leq g(x), ~~c\geq0}\left\{f(c,J)+J_x[rx+\pi(\mu-r)-c]+{\pi^2\sigma^2\over 2}J_{xx}\right\}  \nonumber \\
%=&\sup_{\pi\leq g(x)}[\pi(\mu-r)J_x+{1\over 2}\sigma^2\pi^2 J_{xx}]+\sup_{c\geq 0}[f(c,J)-cJ_x]+rxJ_x, ~~~~~~ x>0.
%\end{align}
\end{theorem}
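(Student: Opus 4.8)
The plan is to characterize $J$ as the unique viscosity solution in three stages: first establish a dynamic programming principle (DPP) for $J$; then use the DPP to show that $J$ is simultaneously a viscosity sub- and supersolution of \eqref{eq:HJB}; and finally prove a comparison principle forcing any two admissible viscosity solutions to agree. A useful preliminary observation is that, for each fixed $x$, the investment constraint $|\pi|\le g(x)$ is a compact interval, so the first supremum in $\mathbf{H}$ is attained and depends continuously on $(x,p,q)$; likewise the consumption supremum $\sup_{c\ge0}[f(c,k)-cp]$ is an explicit Legendre transform, maximized at $c^{*}=p^{-1/S}\big((1-R)k\big)^{\rho/S}$, so $\mathbf{H}$ is an explicit, continuous function of its arguments away from the degenerate values of $k$ and $p$.

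For the DPP I would work through the BSDE representation \eqref{eq:ez-utility} of the utility process. The essential difficulty is the singularity of the aggregator in its second argument: the factor $\big((1-R)v\big)^{\rho}$ with $\rho<0$ blows up as $v\downarrow0$, so the driver of the BSDE is not Lipschitz near the origin. To regularize, I would first restrict to consumption streams bounded below by a fixed level $\epsilon>0$; a strictly positive consumption flow keeps the utility process $V^{c}$ bounded away from zero, so that on this class the driver is locally Lipschitz and the associated infinite-horizon BSDE enjoys stable comparison and flow estimates. These yield the semigroup identity, i.e.\ the DPP, for the $\epsilon$-truncated value. One then passes to the limit $\epsilon\downarrow0$, using monotone convergence of the truncated values together with the a priori bound $J\le J^{ez}$ from \eqref{eq:benchmark}, to recover the DPP for $J$ itself.

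Granting the truncated DPP and the continuity of $J$ (which I would obtain along the way from BSDE stability and the concavity and monotonicity of $g$), the viscosity sub- and supersolution inequalities follow from the standard test-function argument: touch $J$ by a smooth $\varphi$ at a point $x_{0}$, apply the DPP over a short horizon, subtract, divide by the horizon length, and send it to zero, with the truncation parameter removed by the limiting argument above. Compactness of the $\pi$-set and the explicit consumption maximizer make the passage to the infinitesimal generator routine once continuity is in hand; the boundary behaviour at $x=0$, where $g(0)\ge0$ constrains the dynamics, must be recorded as part of the admissibility of solutions.

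The main obstacle is the comparison principle, and I expect it to absorb most of the effort. After performing the consumption maximization, the zeroth-order part of the equation reads $-\delta\nu k+\frac{S}{1-S}\,p^{(S-1)/S}\big((1-R)k\big)^{\rho/S}$; here $\rho/S<0$ under the standing assumption $0<S<R<1$, so the second term is a singular power of $k$ that blows up as $k\downarrow0$ and is not Lipschitz. This breaks the classical monotonicity-in-$k$ argument underlying comparison. I would proceed by doubling the variables and invoking Ishii's lemma, but first apply a change of variable that regularizes the singular dependence on $k$ (exploiting the power-type scaling suggested by the benchmark $J^{ez}(x)\propto x^{1-R}$) and localize away from $k=0$; the discount term $-\delta\nu k$ then supplies the monotonicity needed to close the estimate. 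The delicate point, and the crux of the whole proof, is to control simultaneously both factors of the aggregator: bounding $\big((1-R)k\big)^{\rho/S}$ on the relevant range and dominating the cross terms generated by the doubling penalization, which, together with the growth condition at infinity and the boundary condition at $x=0$, yields the ordering of sub- and supersolutions and hence uniqueness.
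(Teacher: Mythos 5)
Your overall architecture (truncate, prove a DPP for the truncated value, test-function argument, pass to the limit, then doubling plus Ishii's lemma for comparison) matches the paper's, but the regularization step has a genuine gap. You truncate consumption from \emph{below} by $\epsilon$, arguing that this keeps $V^{c}$ away from zero and makes the driver locally Lipschitz. This does not address the actual obstruction in the case $0<R<1$ treated here: since $1-S>0$, the factor $c^{1-S}$ is unbounded as $c\to\infty$, so $\partial_v f(c,v)$ is unbounded over the admissible class even when $v$ is floored, and the finite-horizon BSDE defining the backward semigroup with terminal datum $J(X_\tau)$ is still not well posed. The paper truncates in the opposite direction, $c\le N$ (together with flooring $(1-R)v$ at $1/m$ inside the driver), precisely to make the driver bounded and globally Lipschitz in $v$; the lower truncation $c\ge 1/N$ is what the paper uses only for $R>1$ (Appendix D). Two further consequences: (i) the convergence of your $\epsilon$-truncated values to $J$ is not automatic, because replacing $c$ by $c\vee\epsilon$ changes the wealth dynamics and can violate $X_t\ge0$ near zero wealth (recall $\mathcal{A}(0)=\{(0,0)\}$), whereas $c\wedge N\uparrow c$ stays admissible and gives $J^N\uparrow J$ by the monotone convergence results of Herdegen--Hobson--Jerome; (ii) your plan to ``recover the DPP for $J$ itself'' and then run the test-function argument on $J$ is exactly the step the paper shows cannot be carried out -- the semigroup $G^{x,c}_{0,\tau}[J(X_\tau)]$ is not defined for general admissible $c$. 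The working order is the reverse: prove that each $J^N$ is a viscosity solution of the $N$-truncated HJB equation, then use viscosity stability ($J^N\to J$, $\mathbf{H}_N\to\mathbf{H}$, with perturbed test functions $\varphi+|x-\hat x|^2$ and convergence of the maximizing points) to conclude for $J$.

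On comparison, your diagnosis is off in a way that matters. After the consumption maximization the zeroth-order term is $\tfrac{S}{1-S}\,p^{1-1/S}\bigl((1-R)k\bigr)^{\rho/S}$ with $\rho/S<0$, so it is \emph{decreasing} in $k$; together with $-\delta\nu k$ the equation is already proper, and the only fact needed at the doubling maximum is $f(c,u(\bar x_0))-f(c,v(\bar x_0))\le 0$ when $u(\bar x_0)>v(\bar x_0)$, which holds by monotonicity alone -- no Lipschitz continuity in $k$, no change of variables, and no localization away from $k=0$ are required. Your proposed rescaling is therefore unnecessary and, as stated, unproven. The delicate points you should instead be controlling are: the maximum of the doubled function must be forced into the interior and kept bounded (the paper adds the linear penalization $\tfrac{\theta}{2}(x+y)$ and uses $u(0)\le0\le v(0)$ plus sublinear growth), and the state-dependent, unbounded control set $\{|\pi|\le g(x)\}$ produces cross terms of size $g(x_0)+g(y_0)$ multiplied by $|x_0-y_0|\,|Y|$ with $|Y|\le 6\lambda$, which are killed only by combining the Lipschitz continuity of $g$ with the estimate $\lambda|x_0-y_0|^2\to0$ and $\theta g(x_0)\to0$. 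Your proposal does not engage with this last point, which is where the leverage constraint actually enters the uniqueness proof.
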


\begin{theorem}\label{th-c2}
The optimal value function $J$ is the $C^2(\mathbb{R}_{++})$ solution of the equation (\ref{eq:HJB}). Moreover, the optimal strategy for \eqref{eq:ez-problem} is given in the feedback form as follows:
\begin{align}
    c^{*}(X)&= (J_x(X))^{-1/S} \big((1-R)J(X)\big)^{\frac{\rho}{S}},\\
    \pi^*(X)&=\min\left\{ -\frac{\mu-r}{\sigma^{2}} \frac{J_x(X)}{J_{xx}(X)}, g(X)\right\},
\end{align}where $X$ is the optimal wealth trajectory determined by \eqref{eq:wealth} with the above $c^*$ and $\pi^*$.
\end{theorem}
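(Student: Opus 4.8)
The plan is to take the viscosity characterization of Theorem \ref{th-unique} as the starting point and bootstrap the regularity of $J$ up to $C^2$, after which the feedback form of the optimizers drops out by verification. First I would record the two inner suprema in $\mathbf{H}$ explicitly. Assuming, as will be verified below, that $J>0$ and $J_x>0$ on $\mathbb{R}_{++}$ (the lower bound $J>0$ follows from evaluating $V_0^c$ at a fixed admissible strategy, and $J_x>0$ from monotonicity in wealth), the concave map $c\mapsto f(c,J)-cJ_x$ attains its maximum at $c^{*}=(J_x)^{-1/S}((1-R)J)^{\rho/S}$, with value $\frac{S}{1-S}(J_x)^{1-1/S}((1-R)J)^{\rho/S}$, which is smooth in $(J,J_x)$ on $\{J>0,\ J_x>0\}$. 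If moreover $J_{xx}<0$, the quadratic $\pi\mapsto\pi(\mu-r)J_x+\tfrac12\sigma^2\pi^2J_{xx}$ is strictly concave with unconstrained maximizer $-\frac{\mu-r}{\sigma^2}\frac{J_x}{J_{xx}}>0$, so the constrained maximizer is exactly $\pi^{*}=\min\{-\frac{\mu-r}{\sigma^2}\frac{J_x}{J_{xx}},g(x)\}$. Hence the stated feedback maps are forced once $J$ is shown to be $C^2$ with $J_x>0$ and $J_{xx}<0$, and the entire difficulty is the regularity.

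Next I would establish that $J$ is concave and strictly increasing. Concavity should follow from the linearity of the wealth dynamics \eqref{eq:wealth} in $(X,\pi,c)$, the concavity of the Epstein--Zin functional $c\mapsto V_0^c$, and the convexity of the feasible set; the latter is exactly where the assumed concavity of $g$ is used, since $\{(x,\pi):|\pi|\le g(x)\}$ is convex precisely when $g$ is concave, so that convex combinations of admissible strategies for two initial wealths are admissible for the averaged wealth. A concave $J$ has one-sided derivatives everywhere and is locally Lipschitz, and the standard argument that a concave viscosity solution of a second-order equation admits no downward kink—testing the super- and sub-solution inequalities against smooth functions touching $J$ at a putative kink—then yields $J\in C^1(\mathbb{R}_{++})$.

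With $J\in C^1$ in hand, the second-order regularity is obtained region by region, which is clean because $x$ is the only state variable and \eqref{eq:HJB} is a second-order ODE. On the unconstrained region $\{-\frac{\mu-r}{\sigma^2}\frac{J_x}{J_{xx}}\le g(x)\}$ the equation can be solved algebraically for $J_{xx}$, writing it as a continuous function of $(x,J,J_x)$ with denominator $\delta\nu J-rxJ_x-\frac{S}{1-S}(J_x)^{1-1/S}((1-R)J)^{\rho/S}$; once this denominator is shown strictly positive (forcing $J_{xx}<0$), continuity of the right-hand side upgrades $J$ to $C^2$ there. On the constrained region $\{\pi^{*}=g(x)\}$ the equation is a non-degenerate semilinear ODE with leading coefficient $\tfrac12\sigma^2 g(x)^2>0$ and smooth lower-order term, so classical ODE regularity gives $J\in C^2$. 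To glue the two pieces across the free boundary $\{-\frac{\mu-r}{\sigma^2}\frac{J_x}{J_{xx}}=g(x)\}$ I would verify a smooth-fit condition: matching $J$ and $J_x$ from both sides, together with the fact that the two expressions for $J_{xx}$ agree where $\pi^u=g$, forces $J_{xx}$ to be continuous across the boundary. The technical engine, following the route announced in the paper, is a sequence of auxiliary value functions $J_n$ obtained by truncating the consumption at a positive level (the same truncation underlying the dynamic programming principle behind Theorem \ref{th-unique}); each $J_n$ solves an HJB with bounded data and is classically $C^2$, the explicit benchmark \eqref{eq:benchmark}, which dominates $J$, supplies a priori upper bounds, and one derives uniform-in-$n$ bounds on $J_n$, $J_{n,x}$ and, crucially, $J_{n,xx}$ on compact subsets of $\mathbb{R}_{++}$ before passing to the limit $J_n\to J$ to transfer the $C^2$ regularity.

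The hard part will be the second-order estimate: controlling $J_{n,xx}$ uniformly and proving the strict sign $J_{xx}<0$—equivalently, that the denominator above never vanishes—because the aggregator $f$ is Lipschitz in neither argument, so neither Schauder-type bounds nor the usual comparison estimates apply directly, and the homogeneity and concavity structure of $f$ together with the consumption truncation must be exploited to close these bounds and to keep the free boundary bounded away from $0$ and $\infty$. Once $J\in C^2(\mathbb{R}_{++})$ with $J_x>0$, $J_{xx}<0$ is secured, the optimality claim follows by verification: the feedback maps define $(\pi^{*},c^{*})$, one checks admissibility—$|\pi^{*}|\le g(X)$ holds by the $\min$ and the induced wealth equation \eqref{eq:wealth} is well posed with $X\ge0$—and the infinite-horizon BSDE representation \eqref{eq:ez-utility} of the Epstein--Zin utility combined with It\^o's formula for $J(X_t)$ turns the utility comparison into an equality along $(\pi^{*},c^{*})$ and an inequality along any competing admissible strategy, giving $V_0^{c^{*}}=J(x)=\sup_{c\in\mathcal{C}(x)}V_0^c$.
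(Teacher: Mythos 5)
Your overall architecture (approximate, obtain uniform interior bounds on the second derivative, pass to the limit, then verify the feedback form) matches the paper's, and your reduction of the two inner suprema and the final verification step are fine. But there is a genuine gap at the step you yourself identify as the crux: you propose that the approximating sequence be the consumption-truncated value functions $J^N$ (the same ones used for the DPP behind Theorem \ref{th-unique}) and assert that ``each $J_n$ solves an HJB with bounded data and is classically $C^2$.'' That assertion fails. Truncating the consumption at a level $N$ removes the non-Lipschitz difficulty in $c$, but it does nothing to the second-order part of the operator: the term $\sup_{|\pi|\le g(x)}[\pi(\mu-r)p+\tfrac12\sigma^2\pi^2 q]$ remains degenerate elliptic (the diffusion coefficient is $\tfrac12\sigma^2\pi^2$, which vanishes at $\pi=0$ and, on the unconstrained region, collapses to the fully nonlinear term $-\kappa p^2/q$), so no uniformly elliptic regularity theory applies to $J^N$ and there is no reason it is classically $C^2$. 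The paper instead regularizes by adding an independent Brownian perturbation $\sigma\epsilon X_t^\epsilon\,d\tilde B_t$ to the wealth dynamics, which inserts a non-degenerate $\tfrac12\sigma^2\epsilon^2x^2 J^\epsilon_{xx}$ into the HJB equation; uniform ellipticity then lets Krylov's theory deliver a classical $C^2$ solution $J^\epsilon$, and the uniform-in-$\epsilon$ bound $|J^\epsilon_{xx}|\le K_1$ on compacts is obtained by a Bernstein-type argument applied to $Z=\xi^2(J^\epsilon_{xx})^2+\lambda_1(J^\epsilon_x)^2-\lambda_2 J^\epsilon$ with a cutoff $\xi$, exploiting the specific structure of the Epstein--Zin aggregator. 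Without such a non-degenerate approximation your scheme has no smooth objects on which to run the second-order estimate.

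A secondary issue is the circularity in your ``region by region'' bootstrap: the unconstrained and constrained regions are defined through $-\frac{\mu-r}{\sigma^2}\frac{J_x}{J_{xx}}$ versus $g(x)$, so you cannot partition $\mathbb{R}_{++}$ into these regions and apply ODE regularity on each piece before $J_{xx}$ is known to exist. The paper resolves this by carrying out the case analysis at the level of the smooth $J^\epsilon$, where the regions are well defined, and by showing the uniform lower bound $-\frac{\mu-r}{\sigma^2}\frac{J^\epsilon_x}{J^\epsilon_{xx}}\ge\min\bigl(g(x_1),\frac{\mu-r}{\sigma^2}\frac{K_2}{K_1}\bigr)>0$ before letting $\epsilon\to0$ and invoking the limiting argument of Zariphopoulou. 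Your instinct that the denominator $\delta\nu J-rxJ_x-\frac{S}{1-S}(J_x)^{1-1/S}((1-R)J)^{\rho/S}$ must be controlled is correct, but the control has to come from the uniform estimates on the regularized problems, not from the truncated ones.
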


Since the optimal value function is  $C^2$ smooth in Theorem \ref{th-c2}, we define the unconstrained domain $\mathcal{U}$ and constrained domain $\mathcal{B}$ as
\begin{equation*}
\mathcal{U}=\left\{x: \Big{|}-{\mu-r\over \sigma^2}{J_x(x)\over J_{xx}(x)}\Big{|} < g(x) \right\},
\end{equation*}
and
\begin{equation*}
\mathcal{B}=\left\{x: \Big{|}-{\mu-r\over \sigma^2}{J_x(x)\over J_{xx}(x)}\Big{|}> g(x) \right\}.
\end{equation*}
Since $J_x>0$ and $J_{xx}<0$,  it is clear that $\Big{|}-{\mu-r\over \sigma^2}{J_x(x)\over J_{xx}(x)}\Big{|}=-{\mu-r\over \sigma^2}{J_x(x)\over J_{xx}(x)}$.

In each constrained and unconstrained region, the optimal value function satisfies the following highly nonlinear ODE as follows: 
\begin{align}
\label{eq:J-U1}
\delta\nu {J} =-\kappa \frac{ (J_{x})^2}{{J_{xx}}}+ {S\over 1-S}((1-R)J)^{\frac{\rho}{S}} (J_x)^{1-\frac{1}{S}}+rx J_x, ~~~x\in \mathcal{U}
\end{align}
and
\begin{align}
\label{eq:J-B}
\delta\nu {J} =& (\mu-r) g(x) J_x + \frac{1}{2} \sigma^2 g(x)^2 J_{xx}+{S\over 1-S}((1-R)J)^{\frac{\rho}{S}} (J_x)^{1-\frac{1}{S}}\\
&+rx J_x , ~~~x\in \cal{B}. \nonumber
\end{align}

We provide a detailed characterization of the solution under a linear leverage bound, specified as $g(x) = kx + L$, where at least one of  $k$ and $L$ is positive. In particular, we obtain the following result.

\begin{theorem}
    \label{th-linear}
    Assume $kL = 0$, the constrained region is an infinite open interval. Specifically, if $L = 0$, we have  
    ${\cal B} = (0, +\infty)$ when $k<\frac{\mu-r}{R\sigma^2}$, and  ${\cal U} = (0, +\infty)$ when $k\geq\frac{\mu-r}{R\sigma^2}$. If $k =0$, then ${\cal B} = (x^*, +\infty)$ where  $x^* > 0$ is finite and uniquely determined by the smooth-fit condition.
\end{theorem}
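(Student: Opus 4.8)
The plan is to treat the two regimes $L=0$ and $k=0$ separately, exploiting the explicit form of $g$ in each. When $L=0$, so that $g(x)=kx$, the whole problem is positively homogeneous of degree one: the map $(\pi,c)\mapsto(\lambda\pi,\lambda c)$ carries $\mathcal{A}(x)$ onto $\mathcal{A}(\lambda x)$ because both the dynamics \eqref{eq:wealth} and the bound $|\pi|\le kx$ are homogeneous of degree one, while the aggregator \eqref{eq:aggregator} satisfies $f(\lambda c,\lambda^{1-R}v)=\lambda^{1-R}f(c,v)$, whence $V_0^{\lambda c}=\lambda^{1-R}V_0^{c}$. Therefore $J(\lambda x)=\lambda^{1-R}J(x)$, which forces $J(x)=A\,\frac{x^{1-R}}{1-R}$ for some constant $A>0$. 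By Theorem \ref{th-c2} we have $J_x>0>J_{xx}$, and a direct computation gives the unconstrained demand $\phi(x):=-\frac{\mu-r}{\sigma^2}\frac{J_x(x)}{J_{xx}(x)}=\frac{\mu-r}{R\sigma^2}\,x$, independent of $A$. Since $\phi(x)-g(x)=\bigl(\frac{\mu-r}{R\sigma^2}-k\bigr)x$ keeps a constant sign on $(0,\infty)$, we get $\mathcal{B}=(0,\infty)$ when $k<\frac{\mu-r}{R\sigma^2}$ and $\mathcal{U}=(0,\infty)$ when $k\ge\frac{\mu-r}{R\sigma^2}$. It then remains only to identify $A$: in the slack regime it is the benchmark value $A=\eta^{-\nu S}$ of \eqref{eq:benchmark}, while in the binding regime $A$ is the unique positive root obtained by substituting the power ansatz into \eqref{eq:J-B}; the uniqueness of the $C^2$ solution in Theorems \ref{th-unique} and \ref{th-c2} then closes this case.

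When $k=0$, so $g\equiv L$, homogeneity is broken and a genuine free boundary appears. First I would locate the two ends. Near the origin the non-negativity requirement (ii) forces the risky position to vanish as wealth tends to zero, so the cap $L$ is slack near the origin; concretely, the boundary behavior $J(0)=0$ established in proving Theorem \ref{th-c2} gives $J(x)\sim A_0\frac{x^{1-R}}{1-R}$ and hence $\phi(x)\sim\frac{\mu-r}{R\sigma^2}x\to 0<L$. At the other end, for $x\to\infty$ the fixed cap $L$ is negligible relative to wealth, so the dominant balance in \eqref{eq:J-B} is the bond-plus-consumption balance $\frac{\delta\nu}{1-R}-r=\frac{S}{1-S}B^{-1/(\nu S)}$, that is, \eqref{eq:eta} with the Merton term $\kappa/R$ deleted, giving $B=\bar\eta^{-\nu S}$ with $\bar\eta=\frac{1}{S}[\delta+(S-1)r]>0$. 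Consequently $J(x)\sim B\frac{x^{1-R}}{1-R}$ and $\phi(x)\sim\frac{\mu-r}{R\sigma^2}x\to\infty>L$, so a neighborhood of $\infty$ is constrained.

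By Theorem \ref{th-c2} the function $x\mapsto\phi(x)-L$ is continuous, negative near $0$ and positive near $\infty$, so the intermediate value theorem produces a transition point; set $x^*:=\sup\{x:(0,x)\subset\mathcal{U}\}$. On $(0,x^*)$ the value function solves \eqref{eq:J-U1}, on a right-neighborhood of $x^*$ it solves \eqref{eq:J-B}, and at $x^*$ the tangency $\phi(x^*)=L$, i.e. $-\frac{\mu-r}{\sigma^2}\frac{J_x(x^*)}{J_{xx}(x^*)}=L$, holds; together with the $C^2$ continuity of $(J,J_x,J_{xx})$ across $x^*$ guaranteed by Theorem \ref{th-c2}, this is precisely the smooth-fit condition that determines $x^*$.

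The main obstacle is to show that $x^*$ is unique, equivalently that $\mathcal{B}$ has no bounded component, so that $\phi$ never drops back below $L$ once it has exceeded it. This is exactly where the time-separable arguments fail, since the Epstein–Zin consumption term $\frac{S}{1-S}((1-R)J)^{\rho/S}(J_x)^{1-1/S}$ couples $J$ and $J_x$ nonlinearly and admits no closed-form reduction of \eqref{eq:J-B}. The plan is to recast the question through the risk tolerance $T(x):=-J_x/J_{xx}=\frac{\sigma^2}{\mu-r}\phi(x)$, for which \eqref{eq:J-B}---uniformly elliptic because $\frac{1}{2}\sigma^2L^2>0$---yields a first-order relation, and to prove that wherever $T$ attains the critical level $T^*:=\frac{\sigma^2L}{\mu-r}$ it does so with $T'(x)>0$; a single upward crossing then forces $\{T>T^*\}=\{\phi>L\}=(x^*,\infty)$. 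Securing this strict sign is where a comparison (maximum) principle for the elliptic operator governing $J$ on the constrained region does the work: assuming a second crossing and comparing $J$ with the solution of \eqref{eq:J-B} carrying the same Cauchy data there yields an ordering of $(J,J_x,J_{xx})$ incompatible with the tangency $\phi=L$ recurring, a contradiction. Uniqueness of $x^*$, combined once more with the uniqueness of the $C^2$ solution in Theorems \ref{th-unique} and \ref{th-c2}, then identifies $\mathcal{B}=(x^*,\infty)$ and completes the proof.
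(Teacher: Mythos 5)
Your treatment of the $L=0$ case is essentially the paper's: the paper (Proposition \ref{pro:L=0} and Corollary \ref{cor:L=0}) also verifies a power ansatz $J^0(x)=\lambda^0 x^{1-R}/(1-R)$ against the HJB equation and invokes the uniqueness from Theorems \ref{th-unique} and \ref{th-c2}; your homogeneity argument is a clean way to motivate the ansatz, and the Merton-ratio computation $\phi(x)=\frac{\mu-r}{R\sigma^2}x$ being independent of the constant is exactly why the sign of $\phi-g$ is global. Likewise, your localization of the free boundary near the origin in the $k=0$ case reaches the right conclusion, though the paper's route (Proposition \ref{prop:unconstrained is small-k>0 L}) is more elementary and fully rigorous: if $x_n\to 0$ with $x_n\in\mathcal{B}$, then \eqref{eq:J-B} forces $\delta\nu J(x_n)\geq \frac{(\mu-r)L}{2}J_x(x_n)$, which is impossible since $J(0)=0$ and $J_x(0)=+\infty$; no asymptotic expansion $J(x)\sim A_0 x^{1-R}/(1-R)$ is needed (and none is established in the paper).

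The genuine gap is in the single-crossing step for $k=0$, which is the heart of the theorem. Two problems. First, your proposed mechanism --- ``comparing $J$ with the solution of \eqref{eq:J-B} carrying the same Cauchy data'' --- cannot produce a contradiction: by uniqueness of the Cauchy problem that comparison solution \emph{is} $J$ on the constrained component, so no ordering of $(J,J_x,J_{xx})$ is generated. Moreover, a comparison principle for the fully nonlinear equation \eqref{eq:J-B} itself is delicate, because the Epstein--Zin consumption term $\frac{S}{1-S}((1-R)J)^{\rho/S}(J_x)^{1-1/S}$ is not monotone in $(J,J_x)$ in the way such principles require. Second, your reformulation ``wherever $T$ attains $T^*$ it does so with $T'(x)>0$'' asks for a strict derivative sign at every crossing, which the comparison principle does not deliver without additional structure (one would need something like a Hopf boundary-point argument, and you do not supply the equation for $T$ that would support it; the relation you get by differentiating \eqref{eq:J-B} is second order, not first order as you assert). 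What the paper actually does is linearize by differentiating the two regional ODEs once and twice to obtain \emph{linear} second-order elliptic operators $\mathcal{L}^{\mathcal{U}},\mathcal{L}^{\mathcal{B}}$ annihilating $Y(x)=\mu J_x(x)+\sigma^2 L J_{xx}(x)$ (whose sign characterizes $\mathcal{U}$ and $\mathcal{B}$), and then proves the key computational fact $h(x):=\mathcal{L}^{\mathcal{B}}[0]=\mathcal{L}^{\mathcal{U}}[0]<0$ for all $x>0$ via a completing-the-square/discriminant inequality. With $h<0$, the constant $0$ is a subsolution of $\mathcal{L}^{\mathcal{U}}[y]=0$ on any putative unconstrained interval $(x^{**},x^{***})$ sandwiched after a bounded constrained component; since $Y$ vanishes at its endpoints, the comparison principle gives $Y\geq 0$ there, contradicting $Y<0$ on $\mathcal{U}$. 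This explicit construction of the linearized operator and the verification $h<0$ are exactly the ingredients your sketch is missing, and without them the uniqueness of $x^*$ is not established.
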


%\red{I add all the theorem here, then in the following sections provides the proofs for the unique viscosity solution and $C^2$ property. }

\section{Dynamic programming principle}
\label{sec:dpp}

The goal of this section is to prove a new type of dynamic program principle for Epstein-Zin utility, which is essential to show Theorem \ref{th-unique} in the next section. The challenge in proving the dynamic program principle is the non-Lipschitz feature of the Epstein-Zin aggregator on both components. We develop a new approach to resolve this issue. 

\subsection{Basic properties of the value function}
We begin with several basic propositions concerning the value function. These properties will be used in the subsequent discussion.

\begin{proposition}\label{pro:basic pro}
The value function $J$, defined in \eqref{eq:ez-problem},  is increasing and satisfies the following growth property:
 %  \begin{itemize}
 %      \item[(i)] $J(0)=0$ and $J$ is strictly increasing, strictly concave and uniformly continuous on $\mathbb{R}_+$;
   %    \item[(ii)] $J$ has the sublinear growth property, 
       \begin{align}
    \label{eq:ez-bound}
    \delta^{-\nu} \frac{(rx)^{1-R}}{1-R} \le J(x) \le \eta^{-\nu S}\frac{x^{1-R}}{1-R},~~~~x>0.
\end{align}
Moreover, $(1-R)J(x)$ is positive for $x > 0$. In particular, when $R\in(0,1)$, then $J(0) = 0$ and  
$J_x(0)=+\infty$.
 %  \end{itemize}
\end{proposition}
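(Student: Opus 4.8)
The plan is to obtain every assertion by comparing $J$ with two explicit strategies: the unconstrained optimizer from \eqref{eq:benchmark}, and the ``freeze wealth in the bond'' strategy $(0,rx)$.

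\emph{Monotonicity and the upper bound.} First I would prove the inclusion $\mathcal{C}(x_1)\subseteq\mathcal{C}(x_2)$ for $0\le x_1<x_2$. Given $(\pi,c)\in\mathcal{A}(x_1)$ with wealth $X^{(1)}$, I keep the same dollar investment $\pi$ and the same consumption $c$ but start from $x_2$. Subtracting the two copies of \eqref{eq:wealth}, the gap $D_t=X^{(2)}_t-X^{(1)}_t$ solves $\tfrac{d}{dt}D_t=rD_t$, so $D_t=(x_2-x_1)e^{rt}>0$; hence $X^{(2)}_t>X^{(1)}_t\ge 0$ and, since $g$ is increasing, $|\pi_t|\le g(X^{(1)}_t)\le g(X^{(2)}_t)$. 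Thus $(\pi,c)\in\mathcal{A}(x_2)$, so $c\in\mathcal{C}(x_2)$; because $V_0^{c}$ depends only on $c$, taking suprema gives $J(x_1)\le J(x_2)$. The upper bound is even softer: every leverage-constrained strategy is admissible for the unconstrained problem, i.e. $\mathcal{A}(x)\subseteq\mathcal{A}_{ez}(x)$, so $J(x)\le J^{ez}(x)=\eta^{-\nu S}\frac{x^{1-R}}{1-R}$ by \eqref{eq:benchmark}.

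\emph{The lower bound.} For the left inequality I evaluate the utility of the constant strategy $(\pi,c)\equiv(0,rx)$, which belongs to $\mathcal{A}(x)$ as already observed and keeps wealth frozen at $x$. Since $c\equiv rx$ is deterministic, the associated $V^{c}$ is deterministic, and I look for it in the form $V_t=Be^{-\delta\nu t}$. Substituting into the differentiated recursion $\tfrac{d}{dt}V_t=-e^{-\delta t}f(rx,V_t)$ with $f(rx,v)=\frac{(rx)^{1-S}}{1-S}((1-R)v)^{\rho}$, the time-dependence matches exactly because $1-\rho=1/\nu$, and the constant is pinned down by $\delta\nu B=\frac{(rx)^{1-S}}{1-S}((1-R)B)^{\rho}$. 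Solving this and simplifying via the identities $\nu(1-S)=1-R$ and $\rho\nu=\nu-1$ collapses the answer to $B=\delta^{-\nu}\frac{(rx)^{1-R}}{1-R}$. By uniqueness of the Epstein--Zin utility process (\cite{HHJ23b}, Theorem 6), this $V$ is the genuine utility of $c$, whence $J(x)\ge V_0^{c}=B$, the claimed bound.

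\emph{Positivity and boundary behaviour.} With $0<R<1$ the lower bound gives $(1-R)J(x)\ge\delta^{-\nu}(rx)^{1-R}>0$ for $x>0$. For $J(0)=0$ I note $J\ge 0$ (the aggregator is nonnegative, so $V_0^{c}\ge 0$), while the upper bound forces $\lim_{x\downarrow 0}J(x)=0$; monotonicity then squeezes $J(0)=0$. Finally, for $J_x(0)=+\infty$ I use the difference quotient at the origin, $\frac{J(x)-J(0)}{x}\ge\frac{\delta^{-\nu}r^{1-R}}{1-R}\,x^{-R}\to+\infty$ as $x\downarrow 0$, since $R>0$. I expect the only delicate step to be the lower-bound computation: one must check that the explicit $V_t=Be^{-\delta\nu t}$ is integrable against $e^{-\delta s}$ under the normalization in \eqref{eq:ez-utility} and hence equals the unique solution, and then carry out the exponent bookkeeping that reduces $B$ to $\delta^{-\nu}(rx)^{1-R}/(1-R)$; the remaining arguments are routine comparison and squeeze estimates.
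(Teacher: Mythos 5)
Your proposal is correct and follows essentially the same route as the paper: monotonicity via the inclusion $\mathcal{C}(x_1)\subseteq\mathcal{C}(x_2)$, the upper bound from the unconstrained benchmark \eqref{eq:benchmark}, the lower bound by solving the deterministic ODE for the strategy $(0,rx)$ to get $V_t^{(rx)}=\delta^{-\nu}e^{-\delta\nu t}\frac{(rx)^{1-R}}{1-R}$, and the boundary claims from \eqref{eq:ez-bound}. The only cosmetic difference is that you obtain $J(0)=0$ by squeezing between $0$ and the upper bound, whereas the paper notes directly that $\mathcal{A}(0)=\{(0,0)\}$; both are valid.
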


\begin{proof} For any $x_{2}\geq x_{1}>0$, it holds that $\mathcal{C}(x_{1})\subset\mathcal{C}(x_{2})$, which implies $J(x_{1})\leq J(x_{2})$. 

For any $x>0$, 
we consider a feasible strategy $(\pi, c) = (0, rx)$ and the corresponding stochastic differential utility process $V_t^{(rx)}$. In this case, the stochastic differential utility satisfies the ordinary differential equation: $$d\bar{V}_t=-\nu a^{1-S}e^{-\delta t}\bar{V}_t^{\rho}dt,~~~\bar{V}_\infty=0$$ with $\bar{V}_t=(1-R)V_{t}$. Solving this equation yields: 
\begin{align}
\label{eq:lower}
    V_t^{(rx)} = \delta^{-\nu} e^{-\delta \nu t} \frac{(rx)^{1-R}}{1-R},
\end{align}
which serves as the lower bound for $J(x)$ in equation (\ref{eq:ez-bound}).  The upper bounds are derived from equation (\ref{eq:benchmark}). Hence,  $(1-R)J(x)>0$ for all $x>0$. 

If $R\in(0,1)$, since $\mathcal{A}(0)$ contains only one element $(0,0)$, then $J(0)=V^{c=0}_0=0$. In addition, $J_x'(0)=+\infty$ can easily be obtained from inequality \eqref{eq:ez-bound}.
\end{proof}

\begin{proposition}\label{pro:ez-concave}  The optimal value function $J$ is  strictly concave on the domain $\mathbb{R}_{++}$.
%$(0, +\infty)$.  
\end{proposition}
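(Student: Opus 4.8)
The plan is to deduce the strict concavity of $J$ from two ingredients: (i) the feasible correspondence $x \mapsto \mathcal{C}(x)$ is convex, and (ii) the utility functional $c \mapsto V_0^c$ is concave. For (i), fix $x_1, x_2 > 0$, $\lambda \in (0,1)$, and admissible $(\pi^i, c^i) \in \mathcal{A}(x_i)$ with wealth processes $X^i$. Since the wealth equation \eqref{eq:wealth} is affine in $(X, \pi, c)$, the convex combination $(\pi^\lambda, c^\lambda) := \lambda(\pi^1, c^1) + (1-\lambda)(\pi^2, c^2)$ has wealth process $X^\lambda = \lambda X^1 + (1-\lambda)X^2 \geq 0$, starting from $x_\lambda := \lambda x_1 + (1-\lambda)x_2$. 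The integrability and nonnegativity conditions are preserved by convex combination, and the leverage constraint survives \emph{precisely because $g$ is concave}: $|\pi^\lambda_t| \le \lambda|\pi^1_t| + (1-\lambda)|\pi^2_t| \le \lambda g(X^1_t) + (1-\lambda)g(X^2_t) \le g(X^\lambda_t)$, the last step using concavity of $g$. Hence $c^\lambda \in \mathcal{C}(x_\lambda)$.

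The crux is (ii), and here the main obstacle appears: the aggregator $f(c,v) = \frac{c^{1-S}}{1-S}((1-R)v)^\rho$ is \emph{not} jointly concave---since $\rho < 0$ it is strictly convex in $v$---so the textbook route ``jointly concave aggregator $\Rightarrow$ concave utility'' fails outright. I would resolve this by passing to the ordinally equivalent scale $W_t := ((1-R)V_t^c)^{1/(1-R)}$, which is well defined, positive for $x>0$ by Proposition \ref{pro:basic pro}, and homogeneous of degree one in $c$. Applying It\^o's formula to \eqref{eq:ez-utility}, $W$ solves an infinite-horizon BSDE whose driver is $g(t,c,W,Z) = \frac{e^{-\delta t}}{1-S}c^{1-S}W^S - \frac{R}{2}\frac{Z^2}{W}$. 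The point of the transformation is that this normalized driver \emph{is} jointly concave in $(c, W, Z)$: recalling that $\nu \in (0,1)$ and $R \in (0,1)$ force $0 < S < R < 1$, the first term is a positive multiple of the Cobb--Douglas function $c^{1-S}W^S$ with exponents in $(0,1)$ summing to one (hence concave), while $-\frac{R}{2}Z^2/W$ is concave in $(W,Z)$ because $Z^2/W$ is the convex perspective of $Z^2$.

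Given joint concavity of $g$, I would conclude concavity of $c \mapsto W_0^c$ by a standard BSDE comparison argument: for $c^\lambda = \lambda c^1 + (1-\lambda)c^2$, the process $\bar W := \lambda W^{c^1} + (1-\lambda)W^{c^2}$ with control $\bar Z := \lambda Z^1 + (1-\lambda)Z^2$ has driver $\lambda g(\cdot, c^1, W^{c^1}, Z^1) + (1-\lambda)g(\cdot, c^2, W^{c^2}, Z^2) \le g(\cdot, c^\lambda, \bar W, \bar Z)$, so $\bar W$ is a subsolution and comparison yields $\bar W \le W^{c^\lambda}$, i.e. $W_0^{c^\lambda} \ge \lambda W_0^{c^1} + (1-\lambda)W_0^{c^2}$. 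Since $V_0^c = \frac{(W_0^c)^{1-R}}{1-R}$ and $t \mapsto \frac{t^{1-R}}{1-R}$ is concave and increasing on $\mathbb{R}_+$, composition gives that $c \mapsto V_0^c$ is concave. Combining with (i) yields concavity of $J = \sup_{c\in\mathcal{C}(x)}V_0^c$ in the usual way, via $\epsilon$-optimal streams and the inequality $J(x_\lambda) \ge V_0^{c^\lambda} \ge \lambda V_0^{c^1} + (1-\lambda)V_0^{c^2}$.

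Finally, for \emph{strict} concavity I would first record that $J$ is strictly increasing (with more initial wealth one can invest the surplus in the bond and consume strictly more, raising $V_0$ since $f$ is strictly increasing in $c$), so $J(x_1) < J(x_2)$ whenever $x_1 < x_2$, and correspondingly $W_0^{c^1} \ne W_0^{c^2}$ for $\epsilon$-optimal streams. Then the \emph{strict} concavity of $t \mapsto \frac{t^{1-R}}{1-R}$ upgrades the middle inequality to a strict one with a quantitative gap $\gamma_0 > 0$ that stays bounded away from zero as $\epsilon \to 0$; choosing $\epsilon < \gamma_0$ gives $J(x_\lambda) > \lambda J(x_1) + (1-\lambda)J(x_2)$. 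I expect the delicate technical point to be the rigorous justification of the comparison theorem for the infinite-horizon \emph{quadratic} BSDE satisfied by $W$ (quadratic growth in $Z$ together with the singularity at $W = 0$), which must be controlled using the a priori positivity and growth bounds from Proposition \ref{pro:basic pro}.
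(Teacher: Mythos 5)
Your proposal is correct and follows essentially the same route as the paper's proof in the standing case $R,S\in(0,1)$: convexity of the feasible correspondence from the concavity of $g$, a power transformation of the utility process so that the resulting BSDE driver (a concave $c$-term plus a perspective-type term $\propto Z^2/W$ with negative coefficient) becomes jointly concave, a comparison argument, and then inversion through a concave increasing power map. The only differences are cosmetic — you transform via $W=((1-R)V)^{1/(1-R)}$ where the paper uses $\tilde V=V^{1-\rho}$, and your treatment of strictness via a quantitative gap is somewhat more explicit than the paper's one-line argument — so the substance of the proof is the same.
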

\begin{proof}
Since $g$ is a concave function, then
it is evident that $\lambda \mathcal{C}(x_{1})+(1-\lambda)\mathcal{C}(x_{2})\subset \mathcal{C}(\lambda x_{1}+(1-\lambda) x_{2})$ for $x_{1}, x_{2}>0$ and $\lambda\in(0,1)$.   For any $c^{1}\in \mathcal{C}(x_{1})$ and $c^{2}\in \mathcal{C}(x_{2})$,  we establish that:
\begin{align}  
V_{t}^{c^{\lambda}}\geq \lambda V_{t}^{c^{1}}+(1-\lambda) V_{t}^{c^{2}},~~~~~t\geq0, 
\end{align}where $c^{\lambda} \equiv \lambda c^{1}+(1-\lambda)c^{2}$. 
By the Definition 6.4 and Theorem 6.5 of  \cite{HHJ23b}, it suffices to consider the situations where $\mathbb{E}\left[\int_{0}^{\infty}e^{-\delta s} |f(c^{i}_{s}, V_{s}^{c^{i}})| ds\right]<+\infty$ where $i=1,2,\lambda$, and  $V^{c^{1}}$, $V^{c^{2}}$  and $V^{c^{\lambda}}$  are RCLL.  We analyze the following two cases separably. 

{\em Case 1:  $R>1$ and $S > 1$.}  In this case, $f(c, v)$ is joint concave in $(c,v)$. Then, for any $t\geq0$,  we have: \begin{align*} &V_{t}^{c^{\lambda}}-[\lambda V_{t}^{c^{1}}+(1-\lambda) V_{t}^{c^{2}}]\\=&\mathbb{E}\left[\int_{t}^{\infty}e^{-\delta s}\left( f(c^{\lambda}_{s}, V_{s}^{c^{\lambda}})-\lambda  f(c^{1}_{s}, V_{s}^{c^{1}})- (1-\lambda) f(c^{2}_{s}, V_{s}^{c^{2}})\right)ds~\big|~\mathcal{F}_{t}\right]\\\geq & \mathbb{E}\left[\int_{t}^{\infty}e^{-\delta s}\left( f(c^{\lambda}_{s}, V_{s}^{c^{\lambda}})- f(c^{\lambda}_{s}, \lambda V_{s}^{c^{1}}+(1-\lambda) V_{s}^{c^{2}}) \right)ds~\big|~\mathcal{F}_{t}\right]\\\geq & \mathbb{E}\left[\int_{t}^{\infty} e^{-\delta s} \frac{\partial f}{\partial v}(c^{\lambda}_{s}, V_{s}^{c^{\lambda}}) \left( V_{s}^{c^{\lambda}}-[\lambda V_{s}^{c^{1}}+(1-\lambda) V_{s}^{c^{2}}] \right) ds~\big|~\mathcal{F}_{t}\right].\end{align*}Given that  $\frac{\partial f}{\partial v}(c^{\lambda}_{s}, V_{s}^{c^{\lambda}})\leq 0$, it follows from Lemma C3\footnote{After careful verifications,   Lemma C3 of \cite{SS99} remains valid when $T=+\infty$.}   in \cite{SS99} that  $V_{t}^{c^{\lambda}}-[\lambda V_{t}^{c^{1}}+(1-\lambda) V_{t}^{c^{2}}]\geq 0$ for all $t\geq0$.   Finally, taking the supremum over all feasible consumption strategies, this implies: $J(\lambda x_{1}+ (1-\lambda)x_{2})\geq \lambda J( x_{1})+ (1-\lambda)J(x_{2})$. 

{\em Case 2: $R$, $S\in (0,1)$.} In this situation, $f(c, v)$ is no longer joint concave in $(c,v)$. We approach the proof applying a transformation method as follows.  

Since $\rho<0$ and $R\in(0,1)$, and for $i=1,2,\lambda$,  $\mathbb{E}\left[\int_{0}^{\infty}e^{-\delta s} f(c^{i}_{s}, V_{s}^{c^{i}}) ds\right]<+\infty$, 
 it follows that $V^{c^{i}}_{t}$ is strictly positive for all $t \in (0,+\infty)$ and  $V^{c^{i}}_{\infty}=0$.  By the representation theorem (Chapter 3, Corollary 3, p189) of \cite{P05}, we have: 
$$V_{t}^{c^{i}}=\int_{t}^{+\infty}e^{-\delta s}f(c^{i}_{s}, V_{s}^{c^{i}})ds-\int_{t}^{+\infty}Z_{s}^{i}dB_{s},$$where $\int_{0}^{+\infty}(Z_{s}^{i})^{2}ds<+\infty$ for $i=1,2,\lambda$. 

For $i=1,2,\lambda$,  we define $\tilde{V}_{t}^{c^{i}}=(V_{t}^{c^{i}})^{1-\rho}$ and $\tilde{Z}^{i}_{t}=(1-\rho)(V_{t}^{c^{i}})^{-\rho}Z_{t}^{i}$. Applying  It\^{o}'s formula, we obtain:
$$\tilde{V}_{t}^{c^{i}}=\int_{t}^{+\infty}F(s, c^{i}_{s}, \tilde{V}_{s}^{c^{i}}, \tilde{Z}_{s}^{i})ds-\int_{t}^{+\infty}\tilde{Z}_{s}^{i}dB_{s},$$where
$$F(t,c,\tilde{v}, \tilde{z})=\frac{(1-\rho)(1-R)^{\rho}}{1-S}e^{-\delta t}c^{1-S}+\frac{\rho}{2(1-\rho)}\frac{|\tilde{z}|^{2}}{\tilde{v}}.$$  Since $\rho<0$ and $S, R\in(0,1)$, the function  $F(t,c,\tilde{v}, \tilde{z})$ is jointly concave in $(c,\tilde{v}, \tilde{z})$. %Note that $f(c, v)$ is no longer joint concave in $(c,v)$ in this situation.

Let $\Delta \tilde{V} :=\lambda \tilde{V}^{c^{1}}+(1-\lambda) \tilde{V}^{c^{2}}$ and $\Delta \tilde{Z} :=\lambda \tilde{Z}^{1}+(1-\lambda) \tilde{Z}^{2}$. Then, we have: 
$$\Delta \tilde{V}_{t}=\int_{t}^{+\infty} (F(s,c_{s}^{\lambda}, \Delta \tilde{V}_{s}, \Delta \tilde{Z}_{s})+A_{s})ds-\int_{t}^{+\infty}\Delta \tilde{Z}_{s}dB_{s},$$where $A_{s}=\lambda F(s, c^{1}_{s}, \tilde{V}_{s}^{c^{1}}, \tilde{Z}_{s}^{1})+(1-\lambda)F(s, c^{2}_{s}, \tilde{V}_{s}^{c^{2}}, \tilde{Z}_{s}^{2})-F(s,c_{s}^{\lambda}, \Delta \tilde{V}_{s}, \Delta \tilde{Z}_{s})\leq0$.  Moreover, applying It\^o's formula gives: 
$$(\Delta \tilde{V}_{t})^{\frac{1}{1-\rho}}=\mathbb{E}\left[\int_{t}^{+\infty}e^{-\delta s}f(c_{s}^{\lambda}, (\Delta \tilde{V}_{s})^{\frac{1}{1-\rho}})+\frac{(\Delta \tilde{V}_{s})^{\frac{\rho}{1-\rho}}}{1-\rho}A_{s}ds~\big|~\mathcal{F}_{t}\right],$$
Therefore, it follows that 
\begin{align*}
V_{t}^{c^{\lambda}}-(\Delta \tilde{V}_{t})^{\frac{1}{1-\rho}}\geq &\mathbb{E}\left[\int_{t}^{+\infty}e^{-\delta s}\left( f(c^{\lambda}_{s}, V_{s}^{c^{\lambda}})-f(c_{s}^{\lambda}, (\Delta \tilde{V}_{s})^{\frac{1}{1-\rho}})\right)ds~\big|~\mathcal{F}_{t}\right]\\
\geq &\mathbb{E}\left[\int_{t}^{+\infty}e^{-\delta s} \frac{\partial f}{\partial v}(c_{s}^{\lambda}, (\Delta \tilde{V}_{s})^{\frac{1}{1-\rho}}) \left(V_{s}^{c^{\lambda}}-(\Delta \tilde{V}_{s})^{\frac{1}{1-\rho}}\right)ds~\big|~\mathcal{F}_{t}\right],
\end{align*}
where the second inequality follows from the fact that $f(c,v)$ is convex in $v$.  Moreover, since $\frac{\partial f}{\partial v}\leq0$, it follows from Lemma C3 in \cite{SS99} and $\rho<0$ again that:
$$V_{t}^{c^{\lambda}}\geq(\Delta \tilde{V}_{t})^{\frac{1}{1-\rho}}\geq \lambda (\tilde{V}_{t}^{c^{1}})^{\frac{1}{1-\rho}}+(1-\lambda) (\tilde{V}_{t}^{c^{2}})^{\frac{1}{1-\rho}}=\lambda V_{t}^{c^{1}}+(1-\lambda) V_{t}^{c^{2}},$$
where the second inequality follows from the inequality $(\lambda x+(1-\lambda) y)^{p}\geq \lambda x^{p}+(1-\lambda) y^{p}$ for any $p,\lambda\in(0,1)$ and $x,y\geq0$.

Thus, we conclude that the optimal value function $J$ is concave.  Finally, if the function $J$ is not strictly concave in certain periods, then it must be constant, which contradicts \eqref{eq:ez-bound} in Proposition \ref{pro:basic pro}.
\end{proof}

\begin{proposition}
    \label{pro:ez-continuous}
    For $0<R<1$, the optimal value function is uniformly continuous. For $R>1$, the optimal value function is uniformly continuous on $[a, +\infty)$ for any positive number $a$.
\end{proposition}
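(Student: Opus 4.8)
The plan is to derive uniform continuity entirely from two structural facts already in hand: the concavity of $J$ (Proposition~\ref{pro:ez-concave}) and the two-sided power bound \eqref{eq:ez-bound} (Proposition~\ref{pro:basic pro}). No further probabilistic estimate on the value function should be needed, and the two regimes of $R$ are handled by different mechanisms dictated by the behavior of $J$ near the origin.

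First I would treat the range $0<R<1$. Here $1-R\in(0,1)$, and \eqref{eq:ez-bound} gives $0\le J(x)\le c_2\,x^{1-R}$ with $c_2=\eta^{-\nu S}/(1-R)>0$, together with $J(0)=0$ and continuity of $J$ at the origin. Since $J$ is concave on $(0,\infty)$ and continuous at $0$ with $J(0)=0$, it extends to a concave function on $[0,\infty)$. The key step is to observe that a concave function with $J(0)\ge 0$ is subadditive: for $a,b>0$, writing $a=\tfrac{a}{a+b}(a+b)+\tfrac{b}{a+b}\cdot 0$ and using concavity together with $J(0)=0$ yields $J(a)\ge \tfrac{a}{a+b}J(a+b)$, and symmetrically for $b$; adding the two gives $J(a+b)\le J(a)+J(b)$. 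Consequently, for any $0\le x\le y$ with $h:=y-x$, monotonicity and subadditivity give $0\le J(y)-J(x)=J(x+h)-J(x)\le J(h)\le c_2\,h^{1-R}$. This is a global H\"older estimate $|J(y)-J(x)|\le c_2|y-x|^{1-R}$ on $[0,\infty)$, which immediately yields uniform continuity on the whole half-line.

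For the case $R>1$, I would first note that $\tfrac{x^{1-R}}{1-R}\to-\infty$ as $x\to 0^+$, so \eqref{eq:ez-bound} forces $J(x)\to-\infty$ at the origin and uniform continuity on all of $(0,\infty)$ cannot hold; this is exactly why the statement restricts to $[a,\infty)$. Fix $a>0$. Since $J$ is concave on the open interval $(0,\infty)$, its right derivative exists and is finite at the interior point $a$, and $J_x$ is nonincreasing. Hence, using also that $J$ is increasing, for $a\le x\le y$ we have $0\le J(y)-J(x)\le J_x(a^+)\,(y-x)$, so $J$ is Lipschitz on $[a,\infty)$ with constant $J_x(a^+)$, and in particular uniformly continuous there.

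The only delicate point is the behavior near $x=0$ in the case $0<R<1$, where $J_x(0)=+\infty$ (Proposition~\ref{pro:basic pro}) rules out any Lipschitz bound on a neighborhood of the origin; the subadditivity observation is precisely what converts concavity plus the power growth bound into the required H\"older modulus of continuity. Before invoking it, I would carefully justify the extension of concavity to the closed half-line and the convention $J(0)=0$, both of which follow from the continuity of $J$ at $0$ guaranteed by the upper bound in \eqref{eq:ez-bound}.
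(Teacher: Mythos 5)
Your proof is correct, and it uses the same two ingredients the paper relies on (concavity of $J$ from Proposition~\ref{pro:ez-concave} and the two-sided power bound \eqref{eq:ez-bound}), but it packages them more sharply than the paper does. The paper's argument is only a sketch: it cites Proposition 2.2 of \cite{Z94}, deduces uniform continuity on bounded intervals $[0,M]$ from concavity together with $J(0)=0$ and the growth bound, and then handles the tail separately via the ``vanishing difference quotient'' implied by sublinear growth. Your subadditivity observation --- that a concave function vanishing at the origin satisfies $J(x+h)\le J(x)+J(h)$, hence $0\le J(y)-J(x)\le J(|y-x|)\le c_2|y-x|^{1-R}$ --- collapses these two cases into a single global H\"older estimate on all of $[0,\infty)$, which is both more quantitative (an explicit modulus of continuity) and more self-contained (no appeal to \cite{Z94}). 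Your treatment of $R>1$ via the finite right derivative of the concave function at the interior point $a$, giving a Lipschitz bound on $[a,\infty)$, is likewise what the paper's ``similar way'' remark intends, and you correctly identify why the restriction to $[a,\infty)$ is necessary (the upper bound in \eqref{eq:ez-bound} forces $J(x)\to-\infty$ as $x\to0^+$). The only point worth being explicit about, as you note yourself, is that concavity and monotonicity of $J$ for $R>1$ are asserted by the paper to carry over (see the remark following Theorem~\ref{existence} and Appendix D), so your use of them there is legitimate.
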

\begin{proof}
The uniformly continuous property on $\mathbb{R}_+$ can be derived similar to Proposition 2.2 in \cite{Z94}. For $0 < R < 1$, from \eqref{eq:ez-bound}, together with  $J(0)=0$ and the concavity of $J$, it follows that $J$ is uniformly continuous on bounded interval $[0,M]$ with  $M>0$. For the behavior on infinite intervals, uniform continuity follows from the vanishing difference quotient implied by the sublinear growth property. The case $R>1$ can be obtained in a similar way.
\end{proof}

\subsection{Dynamic programming principle }
In this subsection, we develop the dynamic programming principle (DPP), Proposition \ref{pro: dpp}, within our framework. From this subsection onward, we will focus on analyzing the case where $0<R<1$; for the case where 
$R>1$, the detailed discussion can be found in Remark \ref{rem:r>1}, Remark \ref{rem:r>1 uni}, and the Appendix D. 

The approach is as follows. First, we establish a relationship between Epstein–Zin utility and an infinite-horizon BSDE. Next, we use the BSDE to define a family of backward semigroups. Finally, we formulate the DPP in the case where the consumption process is truncated. This truncation is an essential step in handling the non-Lipschitz nature of the aggregator.

We now establish the relationship between utility function $V$ and the following infinite horizon BSDE:
\begin{eqnarray}\label{bsde1}
Y_t-Y_T=\int^{T}_{t}[-\delta \nu Y_s+f(c_{s},Y_s)]ds
 -\int^{T}_{t}Z_sdB_s, \quad 0\leq t< T<+\infty.
\end{eqnarray}
\begin{lemma}\label{lem:bsde1}
For any $c\in \mathcal{C}(x)$, BSDE (\ref{bsde1}) admits a unique solution $(Y_t^c,Z_t^c)_{t\geq0}$ with $Y^c_t=e^{\delta\nu t}V_t^c$ and $\int^\infty_0e^{-2\delta\nu t}|Z_t^c|^2dt< +\infty$.
\end{lemma}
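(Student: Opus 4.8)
The plan is to transfer between the Epstein--Zin utility $V^c$ and the BSDE (\ref{bsde1}) through the exponential rescaling $Y_t = e^{\delta\nu t}V_t^c$, exploiting that the aggregator is positively homogeneous of degree $\rho$ in its second argument, $f(c,\lambda v)=\lambda^{\rho}f(c,v)$ for $\lambda>0$. The single algebraic identity that makes this work is
\[
\delta\nu\rho = \delta\nu\Bigl(1-\tfrac{1}{\nu}\Bigr)=\delta\nu-\delta ,
\]
and it is precisely this cancellation that eliminates all explicit time factors after the rescaling, turning the discounted fixed-point equation (\ref{eq:ez-utility}) into the autonomous driver of (\ref{bsde1}).

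\textbf{Existence.} By Theorem 6 of \cite{HHJ23b} the utility process $V^c$ solving (\ref{eq:ez-utility}) exists and is unique, with $\int_0^\infty e^{-\delta s}|f(c_s,V_s^c)|\,ds$ integrable and $V_\infty^c=0$. Hence $M_t:=\int_0^t e^{-\delta s}f(c_s,V_s^c)\,ds+V_t^c=\mathbb{E}\bigl[\int_0^\infty e^{-\delta s}f(c_s,V_s^c)\,ds\mid\mathcal{F}_t\bigr]$ is a martingale, and the representation theorem invoked in Proposition \ref{pro:ez-concave} (Chapter 3, Corollary 3 of \cite{P05}) yields an adapted $\zeta$ with $\int_0^\infty\zeta_s^2\,ds<\infty$ a.s.\ and
\[
V_t^c=\int_t^\infty e^{-\delta s}f(c_s,V_s^c)\,ds-\int_t^\infty \zeta_s\,dB_s .
\]
Setting $Y_t^c:=e^{\delta\nu t}V_t^c$ and $Z_t^c:=e^{\delta\nu t}\zeta_t$ and applying It\^o's formula, the drift becomes $e^{(\delta\nu-\delta)t}f(c_t,V_t^c)=e^{(\delta\nu-\delta-\delta\nu\rho)t}f(c_t,Y_t^c)=f(c_t,Y_t^c)$ by homogeneity and the identity above, so $(Y^c,Z^c)$ solves (\ref{bsde1}). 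The stated bound is then immediate, since $\int_0^\infty e^{-2\delta\nu t}|Z_t^c|^2\,dt=\int_0^\infty \zeta_t^2\,dt<\infty$ a.s.

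\textbf{Uniqueness.} Conversely, given any admissible solution $(Y,Z)$ of (\ref{bsde1}), I would reverse the transformation by setting $V_t:=e^{-\delta\nu t}Y_t$ and $\zeta_t:=e^{-\delta\nu t}Z_t$; the same cancellation shows, for every $T>t$, that $V_t=V_T+\int_t^T e^{-\delta s}f(c_s,V_s)\,ds-\int_t^T\zeta_s\,dB_s$ with $\int_0^\infty\zeta_s^2\,ds<\infty$. Taking $\mathbb{E}[\,\cdot\mid\mathcal{F}_t]$, using that $\int_t^{\cdot}\zeta\,dB$ is a martingale, and letting $T\to\infty$ identifies $V$ as a solution of (\ref{eq:ez-utility}): monotone convergence handles the nonnegative driver $f\ge 0$, while the boundary term $\mathbb{E}[V_T\mid\mathcal{F}_t]\to 0$ is killed by the transversality condition attached to the admissible class. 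The uniqueness of $V^c$ from \cite{HHJ23b} then forces $V=V^c$, and hence $Y=Y^c$, $Z=Z^c$.

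\textbf{Main obstacle.} The genuinely delicate point is the infinite-horizon limit under the non-Lipschitz, indeed singular, generator $f(c,\cdot)\sim(\cdot)^{\rho}$ with $\rho<0$: absent a terminal condition, the finite-horizon relations alone do not pin down $\lim_{T\to\infty}V_T$, so spurious solutions with $V_\infty\neq 0$ must be excluded. I expect to resolve this using the monotone (dissipative) structure of the driver --- since $\rho<0$, the map $y\mapsto -\delta\nu y+f(c,y)$ is decreasing, which after an It\^o expansion of $(Y^1-Y^2)^2$ yields the a priori estimate $\mathbb{E}\,(Y_t^1-Y_t^2)^2\le \mathbb{E}\,(Y_T^1-Y_T^2)^2$ for two solutions --- combined with the decay of $V^c$ inherited from the growth bounds of Proposition \ref{pro:basic pro}, which supplies the correct transversality $e^{-\delta\nu T}Y_T\to 0$ and thereby closes the estimate.
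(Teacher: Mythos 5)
Your proposal is correct and follows essentially the same route as the paper: existence via the martingale representation theorem applied to $\mathbb{E}[\int_0^\infty e^{-\delta s}f(c_s,V_s^c)\,ds\mid\mathcal{F}_t]$ followed by It\^o's formula on $e^{\delta\nu t}V_t^c$, with the homogeneity identity $\delta\nu\rho=\delta\nu-\delta$ doing the cancellation (the paper uses it implicitly), and uniqueness by reversing the transformation, taking conditional expectations, letting $T\to\infty$, and invoking the uniqueness of $V^c$ from \cite{HHJ23b}. Your explicit attention to the transversality term $\mathbb{E}[e^{-\delta\nu T}Y_T\mid\mathcal{F}_t]\to 0$ is if anything more careful than the paper, which passes to the limit without comment.
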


\begin{proof}
%We assume $\mathbb{E}\left[\int^{\infty}_{0}\left(e^{\frac{-\delta s}{1-S}}C_s\right)^{1-R}ds\right]< \infty$. By Theorem 4.6 in [],
By Proposition \ref{pro:basic pro},
for any $c\in\mathcal{C}(x)$, $$
\mathbb{E}[V^c_0(x)]\leq J(x)<+\infty.
$$
Then, by the martingale representation theorem, there exists a unique $(\mathcal{F}_t)$-adapted process  $(Z^1_t)_{t\geq0}$ such that $\int^{\infty}_{0}|Z^1_s|^2ds<+\infty$
and
$$
\mathbb{E}\left[\int_{0}^{\infty}e^{-\delta s} f(c_{s},V_{s}^{c})ds~\big|~\mathcal{F}_{t}\right]
=\mathbb{E}\left[\int_{0}^{\infty}e^{-\delta s} f(c_{s},V_{s}^{c})ds\right]+\int^t_0Z^1_sdB_s,
$$
i.e.,
$$
\mathbb{E}\left[\int_{t}^{\infty}e^{-\delta s} f(c_{s},V_{s}^{c})ds~\big|~\mathcal{F}_{t}\right]
=\mathbb{E}\left[\int_{0}^{\infty}e^{-\delta s} f(c_{s},V_{s}^{c})ds\right]+\int^t_0Z^1_sdB_s
-\int_{0}^{t}e^{-\delta s} f(c_{s},V_{s}^{c})ds.
$$
Then
$$
 V_t^c-V_T^c=\int_{t}^{T}e^{-\delta s} f(c_{s},V_{s}^{c})ds
 -\int^T_tZ^1_sdB_s,\ \ 0\leq t< T<+\infty.
$$
Applying It\^o's formula to $e^{\delta \nu t}V_t^c$,
$$
 de^{\delta \nu t}V_t^c=[\delta \nu e^{\delta \nu t}V_t^c-e^{-\delta t}e^{\delta \nu t}f(c_{t},V_{t}^{c})]dt
 +e^{\delta \nu t}Z^1_tdB_t.
$$
Then
$(e^{\delta\nu  t}V_t^c, Z^c_t)_{t\geq0}:=(e^{\delta\nu  t}V_t^c, e^{\delta \nu t}Z^1_t)_{t\geq0}$
 is the  solution of infinite horizon BSDE (\ref{bsde1}).
 
 The uniqueness of $(Y^c, Z^c)$ follows from the uniqueness of $V^c$. In fact, assume that $(Y^{1,c}, Z^{1,c})$ is another solution of BSDE (\ref{bsde1}). Applying It\^o's formula to $e^{-\delta \nu t}Y_t^{1,c}$ and then taking conditional expectation with respect to $\mathcal{F}_t$,
 $$
  e^{-\delta \nu t}Y^{1,c}_t-\mathbb{E}\left[e^{-\delta \nu T}Y^{1,c}_T|\mathcal{F}_t\right]=\mathbb{E}\left[\int_{t}^{T}e^{-\delta s} f(c_{s},e^{-\delta \nu s }Y^{1,c}_s)ds|\mathcal{F}_t\right],\ \ 0\leq t< T<+\infty.
 $$
 Letting $T\rightarrow +\infty$, we have 
  $$
  e^{-\delta \nu t}Y^{1,c}_t=\mathbb{E}\left[\int_{t}^{\infty}e^{-\delta s} f(c_{s},e^{-\delta \nu s }Y^{1,c}_s)ds|\mathcal{F}_t\right],\ \ 0\leq t<+\infty,
 $$
 and $Y^{1,c}_t=e^{\delta\nu  t}V_t^c=e^{\delta\nu  t}V_t^c$ and  $Z^{1,c}_t=Z^{c}_t$.
 \end{proof}
 
%\begin{eqnarray}\label{bsde1}
%Y_t-Y_T=\int^{T}_{t}[-\delta \nu Y_s+f(c_{s},Y_s)]ds
% -\int^{T}_{t}Z_sdB_s.
%\end{eqnarray}

Let us study the well-posedness of the finite horizon BSDE for some fixed $T$:
\begin{align}\label{eq:bsde-fix-terminal}
 V_t=\zeta+\int_{t}^{T}e^{-\delta s} f(c_{s},V_{s})ds
 -\int^T_tZ_sdW_s,\ \ 0\leq t\leq T.
\end{align}
\begin{lemma}\label{lem:bsde-rs<1}
 For any fixed finite horizon $T>0$, $0<\zeta\in \mathcal{F}_{T}$ and $c \in {\cal L}_{+}$  such that $\mathbb{E}[\zeta^{\frac{1}{\nu}}+\int_{0}^{T}e^{-\delta s}c_{s}^{1-S}ds]<+\infty$, then the finite horizon BSDE
  (\ref{eq:bsde-fix-terminal}) 
admits a unique solution $(V, Z)$ such that $V$ is strictly positive, and of {class (D)}, with $\int_{0}^{T}Z_{s}^{2}ds<+\infty$. Moreover,  for any  $\zeta_{i}$ and $c^{i}$, $i=1,2$, such that $\zeta_{1}\geq \zeta_{2}$ and $c^{1}\geq c^{2}$, then we have that $V_{t}^{{1}}\geq V_{t}^{{2}}$. In particular, $V_{t}^{{1}}- V_{t}^{{2}}\leq \mathbb{E}[\zeta_{1}-\zeta_{2}|\mathcal{F}_t]$ when $c^{1}= c^{2}$.
\end{lemma}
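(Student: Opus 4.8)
The generator $g(s,v):=e^{-\delta s}f(c_s,v)=e^{-\delta s}\frac{c_s^{1-S}}{1-S}\big((1-R)v\big)^{\rho}$ is, for each fixed $s$ and $c_s\ge 0$, nonnegative, strictly decreasing and convex in $v$ on $(0,\infty)$, blows up as $v\downarrow 0$ (since $\rho<0$), and is independent of $z$. The plan is to exploit exactly these three features: the sign $g\ge 0$ will force strict positivity and give the lower bound, monotonicity in $v$ will yield comparison and uniqueness almost for free, and the singularity at $v=0$ is the only genuine difficulty, which I would tame by a monotone approximation together with the power transformation $\tilde V=V^{1-\rho}$ used in the proof of Proposition \ref{pro:ez-concave}. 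Note that $1-\rho=\tfrac1\nu$, so $\tilde V=V^{1/\nu}$ and $\tilde\zeta=\zeta^{1/\nu}$, which is precisely why the terminal integrability is posed through $\zeta^{1/\nu}$.

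\emph{Existence.} First I would regularize both the singularity and the data by setting $\zeta_n=\zeta\wedge n$ and
\[
g_n(s,v)=e^{-\delta s}\frac{(c_s\wedge n)^{1-S}}{1-S}\big((1-R)(v\vee\tfrac1n)\big)^{\rho},
\]
so that $g_n$ is bounded, Lipschitz in $v$ with a deterministic constant, still nonnegative and nondecreasing in $n$, while $\zeta_n\uparrow\zeta$. Standard Lipschitz BSDE theory gives a unique bounded solution $(V^n,Z^n)$ with $Z^n\in L^2$, and the comparison theorem for Lipschitz generators shows $V^n$ is nondecreasing in $n$. Taking conditional expectations and using $g_n\ge 0$ yields the lower bound $V^n_t\ge \mathbb{E}[\zeta_n\,|\,\mathcal F_t]>0$, which keeps $V^n$ away from the singularity and justifies passing to $\tilde V^n=(V^n)^{1-\rho}$. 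By the It\^o computation in Proposition \ref{pro:ez-concave}, $\tilde V^n$ solves a BSDE with generator $A_n(s)+\tfrac{\rho}{2(1-\rho)}\tfrac{|\tilde Z^n|^2}{\tilde V^n}$, where $A_n(s)\ge0$ has the same form as the $A$ there; since $\rho<0$ the quadratic term has the favorable sign, so taking expectations gives
\[
\mathbb{E}[\tilde V^n_t]\le \mathbb{E}[\zeta^{1/\nu}]+C\,\mathbb{E}\!\int_0^T e^{-\delta s}c_s^{1-S}\,ds<\infty
\]
uniformly in $n$. Monotone convergence then produces $V=\lim_n V^n$, strictly positive with $V_t\ge\mathbb{E}[\zeta\,|\,\mathcal F_t]$ and of class (D) (the uniform $L^1$ bound on $V^{1/\nu}=\tilde V$, with $1/\nu>1$, controls uniform integrability of $V$). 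A localization and BDG argument, using these bounds to pass $\int_t^T g_n(s,V^n_s)\,ds\to\int_t^T g(s,V_s)\,ds$ and $Z^n\to Z$ in $L^2$, shows $(V,Z)$ solves \eqref{eq:bsde-fix-terminal} with $\int_0^T Z_s^2\,ds<\infty$.

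\emph{Comparison and uniqueness.} For data $(\zeta_i,c^i)$ with $\zeta_1\ge\zeta_2$ and $c^1\ge c^2$, write $\delta V=V^1-V^2$ and linearize the generator difference as $e^{-\delta s}[f(c^1_s,V^1_s)-f(c^2_s,V^2_s)]=\beta_s\,\delta V_s+\psi_s$, where
\[
\beta_s=e^{-\delta s}\tfrac{(c^1_s)^{1-S}}{1-S}(1-R)^{\rho}\tfrac{(V^1_s)^{\rho}-(V^2_s)^{\rho}}{V^1_s-V^2_s}\le 0,\qquad \psi_s=e^{-\delta s}\tfrac{(c^1_s)^{1-S}-(c^2_s)^{1-S}}{1-S}(1-R)^{\rho}(V^2_s)^{\rho}\ge 0,
\]
the signs following from $v\mapsto v^\rho$ decreasing and $1-S>0$, respectively. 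Then $\delta V$ solves a linear BSDE with nonnegative terminal $\zeta_1-\zeta_2$, and its Feynman--Kac representation
\[
\delta V_t=\mathbb{E}\Big[e^{\int_t^T\beta_u\,du}(\zeta_1-\zeta_2)+\int_t^T e^{\int_t^s\beta_u\,du}\psi_s\,ds\;\Big|\;\mathcal F_t\Big]\ge 0
\]
gives $V^1\ge V^2$; the strict positivity bound keeps $\beta$ integrable enough, and $e^{\int\beta}\le 1$ since $\beta\le 0$, so the representation is legitimate. Uniqueness is the special case $\zeta_1=\zeta_2$, $c^1=c^2$ (then $\psi\equiv0$ and $\delta V\equiv0$). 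The final assertion is the case $c^1=c^2$: there $\psi\equiv0$, and using $e^{\int_t^T\beta}\le 1$ the representation yields $V^1_t-V^2_t\le \mathbb{E}[\zeta_1-\zeta_2\,|\,\mathcal F_t]$.

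\emph{Main difficulty.} I expect the crux to be controlling the singularity of $v^{\rho}$ near $v=0$, which is most dangerous close to $t=T$, where $V_T=\zeta$ may be arbitrarily small, so that one must ensure $\int_t^T g(s,V^n_s)\,ds$ neither blows up nor converges to the wrong limit. This is exactly where the transformation and the hypothesis $\mathbb{E}[\zeta^{1/\nu}]<\infty$ are indispensable: the favorable sign of the quadratic term after transformation delivers a uniform bound on $V^{1/\nu}$ that is simply not visible in the original equation. A secondary technical point is verifying class (D) and $Z\in L^2(0,T)$ for the limit under no more integrability than $\mathbb{E}[\zeta^{1/\nu}+\int_0^T e^{-\delta s}c_s^{1-S}\,ds]<\infty$.
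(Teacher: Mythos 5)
Your proposal is correct and follows essentially the same route as the paper: truncate the generator and terminal data to obtain a monotone family of Lipschitz BSDEs, derive a uniform bound by applying It\^o to $V^{1/\nu}=V^{1-\rho}$ and exploiting the favorable sign of the quadratic correction (which is exactly where $\mathbb{E}[\zeta^{1/\nu}]<\infty$ enters), pass to the limit by the localization argument, and obtain comparison and uniqueness from the monotonicity of $f$ in $v$. The only cosmetic differences are that you spell out the linearized comparison argument that the paper delegates to Xing (2017) and state the a priori bound in unconditional rather than conditional form.
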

\begin{proof}
Similar to Lemma 8.1 in \cite{hu24}, one can construct the generator  as follows:  
 $$f_{m}(c,v)=\frac{(c\wedge m)^{1-S}}{1-S}\left((1-R)v\vee \frac{1}{m}\right)^{\rho},$$
 where $m > 0$. Under this construction,  $f_{m}$ is globally Lipschitz continuous in $v$. Moreover,
 $0\leq f_{m}\leq \frac{m^{1-S-\rho}}{1-S}$.
 Then the truncated BSDE:
 \begin{align}\label{eq:bsde-fix-terminal-m}
 V_t^{c,m}=\zeta\wedge m+\int_{t}^{T}e^{-\delta s} f_{m}(c_{s},V_{s}^{c,m})ds
 -\int^T_tZ^{c,m}_sdW_s,\ \ 0\leq t\leq T,
\end{align}admits a unique solution $(V^{c,m}, Z^{c,m})$. Besides, $f_{m}$ and $\zeta\wedge m$ are both increasing with respect to $m$.  By the comparison theorem,  it follows that $V^{c,m}$ is increasing with $m$, and nonnegative (taking generator and terminal random value as $0$).

To finish the proof, we now establish an upper bound for $V^{c,m}$ that is independent of $m$.
Applying It\^o's formula to $(V^{c,m})^{\frac{1}{\nu}}$, taking the conditional expectation, one has that
\begin{align*}
(V_{t}^{c,m})^{\frac{1}{\nu}}&\leq\mathbb{E}\left[ (\zeta\wedge m)^{\frac{1}{\nu}}+\int_{t}^{T}\frac{1}{\nu} (V_{s}^{c,m})^{-\rho} e^{-\delta s}f_{m}(c_{s}, V_{s}^{c,m})~\big|~\mathcal{F}_{t}\right]\\
&\leq \mathbb{E}\left[ \zeta^{\frac{1}{\nu}}+(1-R)^{\rho-1}\int_{t}^{T}e^{-\delta s}c_{s}^{1-S}ds ~\big|~\mathcal{F}_{t}\right].
\end{align*}
Hence,
$$V_{t}^{c,m}\leq  \left(\mathbb{E}\left[ \zeta^{\frac{1}{\nu}}+(1-R)^{\rho-1}\int_{t}^{T}e^{-\delta s}c_{s}^{1-S}ds ~\big|~\mathcal{F}_{t}\right]\right)^{\nu}=:\bar{V}^{c}.$$
Then, by the localization argument in \cite{bh06}, one obtains the existence of a solution to the BSDE \eqref{eq:bsde-fix-terminal}.
  Since $f(c,v)$ is decreasing in $v$, uniqueness can similarly be obtained by following the procedure of Proposition 2.2 in \cite{X17}. 
 % The uniqueness can be similar obtained by the procedure of Xing (2017). \hfill$\Box$
\end{proof}

%Now we can define the family of backward semigroups, motivating by \cite{peng}.
%\textcolor{blue}{Given the initial value $x$, a fixed time $t>0$, $c\in {\mathcal{C}}(x)$ and
%a real-valued random variable $\zeta\in L^2(\Omega,{\mathcal{F}}_{t},P;\mathbb{R})$, we define
%\begin{eqnarray}\label{gdpp}
%G^{x,c}_{s,t}[\zeta]:=\tilde{Y}^{x,c}_s,\ \
%\ \ \ \ s\in[0,t],
%\end{eqnarray}
%where $(\tilde{Y}^{x,c}_s,\tilde{Z}^{x,c}_s)_{0\leq s\leq
%	t}$ is the solution of the following
%finite horizon BSDE:
%\begin{eqnarray}\label{bsdegdpp}
%\begin{cases}
%d\tilde{Y}^{x,c}_s =[\delta \nu \tilde{Y}^{x,c}_s-f(c_{s},\tilde{Y}^{x,c}_s)]ds %+\tilde{Z}^{x,c}_sdB_s, \quad s\in [0,t], \\
%\ \tilde{Y}^{x,c}_t=\zeta.
%\end{cases}
%\end{eqnarray}
%with $X^{x,c}(\cdot)$ being the solution of SDE (\ref{wealth}).\par
%By Lemma \ref{lem:bsde-rs<1}, there exists a unique solution to the above BSDE.}

For any $N>0$, define
% $$\hat{\mathcal{A}}(x):=\{c\in \mathcal{A}(x)| Y^c_0\geq \delta^{-\nu}\frac{(rx)^{1-R}}{1-R}\},$$
% $$
% \hat{\tau}^c:=\inf\{t>0| Y^c_t<\frac{1}{2}Y^c_0%\ \mbox{or}\ X^{x,c}_t<\frac{x}{2}
% \},
% $$
 \begin{equation}\label{valueN}
J^N(x)=\sup _{(\pi, c) \in \mathcal{A}^N(x)} V_{0}^{c}=\sup _{c \in \mathcal{C}^N(x)} V_{0}^{c},
~~~~x>0,
\end{equation}
where$${\mathcal{A}}^N(x):=\{(\pi,c)\in \mathcal{A}(x)~|~%\pi_t\geq -N\ \ 
 c_t\leq N, ~\forall t \ge 0\},\ \ N>0,$$
 %)~|~%\pi_t\geq -N\ \ 
 %c_t\geq 1/N, ~\forall t \ge 0\},\ \ N>0\ \quad %\mbox{if}\ R>1, $$
and the consumption stream $c\in \mathcal{C}^N(x)$ if there is an investment process $\pi$ such that $(\pi,c)\in \mathcal{A}^N(x)$.

Let $X^{x,\pi,c}$ denote the solution of (\ref{eq:wealth}). For any finite horizon $T>0$,  notice that $\mathbb{E}[(J^N(X^{x,\pi,c}_T))^{\frac{1}{\nu}}]<+\infty$ for any $(\pi,c)\in\mathcal{A}^N(x)$. By Lemma \ref{lem:bsde-rs<1}, the finite-horizon BSDE
\begin{eqnarray}\label{BSDE-J_N}
\begin{cases}
dY^{\pi,c,J^N}_s=[\delta \nu Y^{\pi,c,J^N}_s-f(c_{s},Y^{\pi,c,J^N}_s)]ds+Z^{\pi,c,J^N}_sdB_s, \\
Y^{\pi,c,J^N}_{ T}=J^N(X^{x,\pi,c}_{T}),
\end{cases}
\end{eqnarray}
admits a unique solution $(Y^{\pi,c,J^N},Z^{\pi,c,J^N})$. %Moreover,  $Y^{c,J}_s\geq Y^{c}_s$ for all $s\in [0,T]$.
  Then, we can define the family of backward semigroups, motivating by \cite{peng}:
  \begin{eqnarray}\label{gdpp}
G^{c}_{s,T}[J^N(X^{x,\pi,c}_{T})]:=Y^{\pi,c,J^N}_s,\ \
\ \ \ \ s\in[0,T].
\end{eqnarray}
%$G^{x,c}_{s, T}[J^N(X^{x,\pi,c}_{T})]$ is well defined.
%\textcolor{red}{Still do not understand the definition and structure here. Assuming $J^{N}(x)$ is a deterministic function, the wealth $X^{x,c}_{T}$ relies on the strategy $(\pi, c)$. Do we need to highlight $\pi$? Also, why it follows from the next lemma? Can we change the order and explain a bit more? }

We are now ready to state and prove the DPP.
%and
%$$
%dY^{c,J_N}_s=[\delta \nu Y^{c,J_N}_s-f(c_{s},Y^{c,J_N}_s)]ds
% +Z^{c,J_N}_sdB_s, \ \ Y^{c,J_N}_{\hat{\tau}^c\wedge t}=J_N(X^{x,c}_{\hat{\tau}^c\wedge t}).
%$$
%\textcolor{red}{Notice that, by the following equality (\ref{jsup}), $J(X^{x,c}_{\hat{\tau}^c\wedge t})\geq Y^{x,c}_{\hat{\tau}^c\wedge t}$, then we have $Y^{c,J}_s\geq a Y^c_0$, and thus $f$ is Lipschitz continuous with respect to the second variable on $[0,\hat{\tau}^c]$. By   the comparison theorem of BSDE, we get that
%$$
%       Y^{c,J}_0\geq Y^{c}_0.
%$$}
% and
% $$
% \mathbb{E}[\sup_{0\leq s\leq \hat{\tau}^c\wedge t}|Y^{c,J}_s|^2]+\mathbb{E}[\int^{\hat{\tau}^c\wedge t}_{0}|Z^{c,\hat{\tau}^c}_s|^2ds]<\infty.
% $$
\begin{proposition}\label{pro: dpp}
For any $N>0$, finite time $T>0$ and $x\geq 0$,  and stopping times  $0< \tau\leq T$, then the following DPP hold:
\begin{align}\label{dppzhou}
J^N(x)=\sup_{{(\pi,c)\in\mathcal{A}}^N(x)}G^{c}_{0, \tau}[J^N(X^{x,\pi,c}_{\tau})].
%E\left[J(X^{x,c}_{\hat{\tau}^c\wedge t})+\int^{\hat{\tau}^c\wedge t}_{0}[-\delta \nu Y^{c,J}_s+f(c_{s},Y^{c,J}_s)]ds\right].
%\nonumber\\
%&=&\sup_{\mathcal{A}(x)}E\left[V(X^{x,c}_t)+\int^t_{0}[-\delta \nu e^{\delta s}J_s^c+f(c_{s},J^c_s)]ds\right].
\end{align}
%\begin{eqnarray}\label{dppN}
%J_N(x)=\sup_{\hat{\mathcal{A}}^N(x)}G^{x,c}_{s,\hat{\tau}^c\wedge t}[J_N(X^{x,c}_{\hat{\tau}^c\wedge t})].
%%E\left[J(X^{x,c}_{\hat{\tau}^c\wedge t})+\int^{\hat{\tau}^c\wedge t}_{0}[-\delta \nu Y^{c,J}_s+f(c_{s},Y^{c,J}_s)]ds\right].
%%\nonumber\\
%%&=&\sup_{\mathcal{A}(x)}E\left[V(X^{x,c}_t)+\int^t_{0}[-\delta \nu e^{\delta s}J_s^c+f(c_{s},J^c_s)]ds\right].
%\end{eqnarray}
%where
%$$
%dY^V_s=[\delta \nu Y^V_s-f(c_{s},Y^V_s)]ds
% +Z^V_sdW_s, \ \ Y^V_\tau=V(X^{x,c}_\tau).
%$$
\end{proposition}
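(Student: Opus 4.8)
The plan is to prove the two inequalities in \eqref{dppzhou} separately, the key tools being the flow property of the backward semigroup $G$ and the time-homogeneity of the truncated problem \eqref{valueN}. Two preliminary facts drive everything. First, since \eqref{bsdegdpp} carries exactly the generator of the infinite-horizon BSDE \eqref{bsde1}, pasting its solution on $[0,\tau]$ with $(Y^c,Z^c)$ on $[\tau,\infty)$ again solves \eqref{bsde1}; by the uniqueness in Lemmas \ref{lem:bsde1}--\ref{lem:bsde-rs<1} this yields the flow identity $G^{x,c}_{0,\tau}[Y^c_\tau]=Y^c_0=V^c_0$. Second, because the wealth dynamics \eqref{eq:wealth}, the truncation $c\le N$, and the constraint $|\pi|\le g(X)$ are all autonomous, the rescaled process $e^{\delta\nu\tau}V^c_{\tau+\cdot}$ solves \eqref{eq:ez-utility} started afresh from $X_\tau$ (the exponent $\delta\nu$ being forced by $f(c,e^{-a}v)=e^{-a\rho}f(c,v)$ and $\nu(1-\rho)=1$). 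Hence $Y^c_\tau=e^{\delta\nu\tau}V^c_\tau$ is precisely a continuation utility from $X_\tau$, so $Y^c_\tau\le J^N(X^{x,\pi,c}_\tau)$ for every admissible $(\pi,c)$, with $J^N(X_\tau)$ the supremum of these continuation utilities. The inequality ``$\le$'' is then immediate: by the flow identity and the comparison property of Lemma \ref{lem:bsde-rs<1} (transported to the $Y$-scale through $Y=e^{\delta\nu t}V$), $V^c_0=G^{x,c}_{0,\tau}[Y^c_\tau]\le G^{x,c}_{0,\tau}[J^N(X^{x,\pi,c}_\tau)]$, and taking the supremum over $\mathcal A^N(x)$ gives $J^N(x)\le\sup_{(\pi,c)}G^{x,c}_{0,\tau}[J^N(X^{x,\pi,c}_\tau)]$.

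For the reverse inequality it suffices, fixing $(\pi,c)\in\mathcal A^N(x)$ and $\epsilon>0$, to show $G^{x,c}_{0,\tau}[J^N(X_\tau)]\le J^N(x)$. I would first construct a progressively measurable continuation control $(\pi^\epsilon,c^\epsilon)$ on $[\tau,\infty)$ whose continuation utility satisfies $e^{\delta\nu\tau}V^{c^\epsilon}_\tau\ge J^N(X_\tau)-\epsilon$ almost surely. Concretely, partition the wealth half-line into countably many small cells $\{I_k\}$, fix in each a reference point $y_k$ together with a nearly optimal control for initial wealth $y_k$, and let the continuation follow the control attached to the cell containing $X_\tau$; the uniform continuity of $J^N$ (Proposition \ref{pro:ez-continuous}) controls $|J^N(X_\tau)-J^N(y_k)|$, and a measurable-selection argument makes $\omega\mapsto(\pi^\epsilon,c^\epsilon)$ progressive. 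The paste $\bar c:=c\mathbf{1}_{[0,\tau)}+c^\epsilon\mathbf{1}_{[\tau,\infty)}$ (and $\bar\pi$ analogously) lies in $\mathcal A^N(x)$ and agrees with $c$ on $[0,\tau]$, so $Y^{\bar c}_\tau=e^{\delta\nu\tau}V^{c^\epsilon}_\tau\ge J^N(X_\tau)-\epsilon$. Combining the flow identity, monotonicity, and the stability estimate $V^1_t-V^2_t\le\mathbb E[\zeta_1-\zeta_2\mid\mathcal F_t]$ of Lemma \ref{lem:bsde-rs<1},
\[
J^N(x)\ge V^{\bar c}_0=G^{x,c}_{0,\tau}[Y^{\bar c}_\tau]\ge G^{x,c}_{0,\tau}[J^N(X_\tau)-\epsilon]\ge G^{x,c}_{0,\tau}[J^N(X_\tau)]-C\epsilon,
\]
with $C$ independent of $\epsilon$; letting $\epsilon\downarrow0$ and taking the supremum over $(\pi,c)$ completes the proof.

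The main obstacle is the ``$\ge$'' direction, specifically the simultaneous, progressively measurable selection of $\epsilon$-optimal continuations over all terminal states $X_\tau$ and the verification that the spliced control remains in $\mathcal A^N(x)$ under the \emph{state-dependent} leverage constraint $|\pi_t|\le g(X_t)$ and the truncation $c\le N$. Uniform continuity of $J^N$ (Proposition \ref{pro:ez-continuous}) is what renders the partition quantitative, while the BSDE stability estimate in Lemma \ref{lem:bsde-rs<1} is what converts an $\epsilon$-suboptimal terminal datum into an $O(\epsilon)$ loss at time $0$, thereby absorbing the non-Lipschitz behaviour of the aggregator that obstructs the classical arguments.
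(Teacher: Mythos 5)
Your two-inequality architecture coincides with the paper's: both proofs rest on (i) the flow/pasting identity $G^{x,c}_{0,\tau}[Y^c_\tau]=V^c_0$, (ii) the identification of $Y^c_\tau=e^{\delta\nu\tau}V^c_\tau$ as a continuation utility started from $X_\tau$ (your homogeneity computation $f(c,e^{-a}v)=e^{-a\rho}f(c,v)$ with $\nu(1-\rho)=1$ is exactly right), and (iii) the one-sided stability $G[\zeta+\varepsilon]\le G[\zeta]+\varepsilon$ from Lemma \ref{lem:bsde-rs<1} to absorb the $\varepsilon$-loss. Where you diverge is in how the $\varepsilon$-optimal continuation is produced. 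The paper works on the canonical Wiener space, uses the translation operator $\theta_t$ to identify shifted controls, invokes Peng's standard essential-supremum argument (the family of continuation values is directed upward, so a single $\varepsilon$-optimal $v$ with $J^N(X^{x,c}_t)\le Y^{x,v}_t+\varepsilon$ a.s.\ exists), establishes the DPP first for deterministic times $t$, and only then passes to stopping times by approximating $\tau$ with simple stopping times $\tau_n$. You instead work directly at $\tau$ (which requires the strong Markov property rather than just stationarity of increments) and replace the esssup argument by an explicit countable partition of the wealth half-line with reference controls, quantified by the uniform continuity of $J^N$ from Proposition \ref{pro:ez-continuous}. Both selection mechanisms are standard and ultimately rely on the same continuity input, so the routes are close cousins rather than genuinely different proofs.

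The one point you flag but do not close is, in this state-constrained setting, the actual crux: a control that is admissible and nearly optimal for the reference wealth $y_k$ need not be admissible when launched from the true wealth $X_\tau\in I_k$ with $X_\tau\neq y_k$, because the perturbed wealth path may violate $X\ge0$ or $|\pi|\le g(X)$. This is repairable with the paper's own Proposition \ref{pro:basic pro} ingredient: take $y_k$ to be the \emph{left} endpoint of the cell $I_k$, so that $\mathcal{A}^N(y_k)\subset\mathcal{A}^N(X_\tau)$ on $\{X_\tau\in I_k\}$ (the surplus $X_\tau-y_k$ sits in the riskless asset and, since $g$ is increasing, the leverage constraint only relaxes), and then use monotonicity plus uniform continuity of $J^N$ to get $V^{c^k}_0\ge J^N(y_k)-\varepsilon\ge J^N(X_\tau)-\varepsilon-\omega(\mathrm{diam}\,I_k)$. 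Without this (or an equivalent device), the splicing step is asserted rather than proved. A second, minor, technical remark: in your final chain you apply $G^{x,c}_{0,\tau}$ to $J^N(X_\tau)-\varepsilon$, which may be negative near $x=0$ and thus fall outside the hypotheses of Lemma \ref{lem:bsde-rs<1}; the paper's ordering of the same estimate, namely $G[J^N(X_\tau)]\le G[Y^{\bar c}_\tau+\varepsilon]\le G[Y^{\bar c}_\tau]+\varepsilon$, avoids this entirely and you should adopt it.
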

\begin{proof}
 %We only prove (\ref{dpp}).
 Let $\bar{J}^N(x):=\sup_{(\pi,c)\in{\mathcal{A}^N}(x)}G^{c}_{0, \tau}[J^N(X^{x,\pi,c}_{\tau})]$. We need to prove that $$J^N(x)=\bar{J}^N(x).$$
 
 Without loss of generality, we can suppose that $(\Omega,(\mathcal{F}_t)_{t\geq0},\mathcal{F}, \mathbb{P})$ is the standard Wiener space, $\Omega=\{\omega \in C(\mathbb{R}_+;\mathbb{R}): \omega_0=0\}$, $B$ the canonical process, $\mathbb{P}$ the Wiener measure, $\mathcal{F}$ the Borel $\sigma$-field over $\Omega$, completed  with respect to the Wiener measure $\mathbb{P}$ on this space. For $t>0$, let $\theta_{t}=\theta_{t}(\omega)$ be the translation operator on $\Omega$:
$$
                       \theta_{t}(\omega)_s=\omega(s+t)-\omega(t), \ \ \omega\in \Omega,\ s\geq0.
$$
Given $(\pi,{c})\in {\mathcal{M}}\times\mathcal{L}_+ $ we can identify $(\pi,c)$ with a  measurable mapping applying to $B$. Thus, for any $(\pi^0,c^0)\in \mathcal{M}\times\mathcal{L}_{+}$, we can define
$$\bar{\pi}_s:=\begin{cases} \pi^0, \ \ s\in [0,t),\\
\pi_{s-t}(\theta_{t}), \ \ s\geq t,
\end{cases} \ \ \
\bar{c}_s:=\begin{cases} c^0, \ \ s\in [0,t),\\
c_{s-t}(\theta_{t}), \ \ s\geq t.
\end{cases}
$$
Then, $(\bar{\pi},\bar{c})\in \mathcal{M}\times\mathcal{L}_+$.
%Given $c\in \mathcal{A}(x)$, it is obvious that $\bar{c}\in $\par
Let $t>0,x>0,$ we define $\mathcal{A}(t,x)$  in the spirit of $\mathcal{A}(x)$ and $\mathcal{C}^N(t,x)$ in the spirit of $\mathcal{C}^N(x)$. Let  $(\pi,c)\in \mathcal{A}(t,x)$ and $X^{t,x,\pi,c}$ be the unique solution of SDE:
\begin{eqnarray}\label{wealthtau}
         X_s=x+\int^{ s}_{t}rX_l+(\mu-r)\pi_l-c_ldl+\int^{s}_{t}\pi_l\sigma dB_l, \ \ s\geq t, x\geq0,
\end{eqnarray}
and $(Y^{t,x,c},Z^{t,x,c})$ be the unique solution of BSDE, for $t\leq s\leq T<+\infty$, 
\begin{eqnarray}\label{bsdetau}
Y^{t,c}_s=Y^{t,c}_T+\int^{T}_{s}f(c_l,Y^{t,c}_l)-\delta\nu Y^{t,c}_ldl-\int^{T}_{s}Z^{t,c}_ldB_l.
\end{eqnarray}
Applying the transformation $\theta_{t}$ to (\ref{eq:wealth}) and (\ref{bsde1}), we have that $(X^{x,\pi,c}_{s-t}(\theta_{t}),Y^{c}_{s-t}(\theta_{t}),Z^{c}_{s-t}(\theta_{t}))_{s\geq t}$ are the solution of (\ref{wealthtau}) and (\ref{bsdetau}) with control process $(\bar{\pi},\bar{c})$ instead of $(\pi,c)$ (See Lemma A.1 in \cite{LZ}). It is clear that $(\pi_{\cdot-t}(\theta_t),c_{\cdot-t}(\theta_{t}))\in \mathcal{A}(t,x)$ if $(\pi,c)\in\mathcal{A}(x)$, then by the uniqueness of the solutions of (\ref{wealthtau}) and (\ref{bsdetau}), we have
$$X^{x,\pi,c}_s(\theta_{t})=X^{t,x,\bar{\pi},\bar{c}}_{s+t},\ \ Y^{c}_s(\theta_{t})=Y^{t,\bar{c}}_{s+t},\ \mathbb{P}\mbox{-a.s.}, \ \  Z^{c}_s(\theta_{t})=Z^{t,\bar{c}}_{s+t}, \ \ ds\times d\mathbb{P}\mbox{-a.e.}, \ s\geq 0.
$$
 Since $J^N$ is deterministic function, we have, for every $t\in [0,T]$, 
$$J^N(y)=\mathop{\esssup}\limits_{v\in{\mathcal{C}^N(y)}}Y_0^{v}(\theta_{ t})
=\mathop{\esssup}\limits_{v\in{\mathcal{C}^N}(y)}Y^{ t,\bar{v}}_{t},\ \  \mathbb{P}\mbox{-a.s.},
 $$
 with
 $$\bar{v}_s:=\begin{cases} v^0,\ \ \ s\in [0, t),\\
 v_{s- t}(\theta_{t}),\ \ s\geq t.
\end{cases}
 $$
 Then, by the standard argument (see, e.g., \cite{peng}),
\begin{align*}%\label{jsup}
J^N(X^{c,\pi,x}_{t})=\mathop{\esssup}\limits_{v\in{\mathcal{C}^N}(x)}Y_{t}^{t,\bar{v}}
=\mathop{\esssup}\limits_{v\in{\mathcal{C}}^N(x)}Y_{t}^{c\oplus\bar{v}},
\end{align*}
 where
 $$
 (c\oplus\bar{v})_s=\begin{cases} c_s \ \ \ s\in [0,t)\\
                                \bar{v}_s,\ \ s\geq t
 \end{cases}\in \mathcal{C}^N(x).
 $$
 Again by the standard argument (see, e.g., \cite{peng}), for every $\varepsilon>0$, there exists $v\in \mathcal{C}^N(x)$ such that $c=v$, $ds\times d\mathbb{P}$-a.e. on $[0,t]\times\Omega$ and
 $$
              J^N(X^{x,\pi,c}_{t})\leq Y_{t}^{v}+\varepsilon,\ \ \mathbb{P}\mbox{-a.s.}
 $$
 From Lemma \ref{lem:bsde-rs<1},
 \begin{align*}
  G^{c}_{0,t}[J^N(X^{x,\pi,c}_{t})]\leq G^{c}_{0,t}[Y_{t}^{v}+\varepsilon]
  \leq G^{c}_{0,t}[Y_{t}^{v}]+\varepsilon&=Y_0^{v}+\varepsilon\\
  &\leq \mathop{\esssup}\limits_{c\in {\mathcal{C}}^N(x)} Y_0^{c}+\varepsilon=J^N(x)+\varepsilon,\ \ \mathbb{P}\mbox{-a.s.}
 \end{align*}
 By the arbitrariness of $\varepsilon$, we see that
 $$
 G^{c}_{0,t}[J^N(X^{x,\pi,c}_{t})]\leq J^N(x).
 $$
 Let stopping times sequence $\{\tau_n\}_{n\geq1}$ be the simple approximation of stopping times $\tau$, then we have
  $$
 G^{c}_{0,\tau_n}[J^N(X^{x,\pi,c}_{\tau_n})]\leq J^N(x),\ \ n\geq1.
 $$
 Letting $n\rightarrow\infty$, it follows that
 \begin{align*}
 G^{c}_{0,\tau}[J^N(X^{x,\pi,c}_{\tau})]\leq J^N(x) \mbox{~~and~~~} \bar{J}^N(x)=\sup_{(\pi,c)\in{\mathcal{A}}^N(x)}G^{c}_{0,\tau}[J^N(X^{x,\pi,c}_{\tau})]\leq J^N(x).
 \end{align*}
 To prove $J^N(x)\leq \bar{J}^N(x)$, let, for any given $\varepsilon>0$, $(\pi,c)\in \mathcal{A}^N(x)$ be such that $J^N(x)\leq Y^{c}_0+\varepsilon$. Then, for every $t\in [0,T]$,
 \begin{align*}
 J^N(x)\leq  Y^{c}_0+\varepsilon=G^{c}_{0,t}[Y^{c}_{t}]+\varepsilon
   &\leq G^{c}_{0,t}[\mathop{\esssup}\limits_{\bar{v}\in{\mathcal{C}}^N(x)}Y^{c\oplus\bar{v}}_{t}]+\varepsilon\\
   &= G^{c}_{0,t}[\mathop{\esssup}\limits_{\bar{v}\in{\mathcal{C}}^N(t,X^{x,\pi,c}_{t})}Y^{t,{\bar{v}}}_{t}]
       +\varepsilon, \ \ \mathbb{P}\mbox{-a.s.}
 \end{align*}
 Since $Y^{t,\bar{v}}_{t}=(Y^{{v}}_0)(\theta_{t})=Y^{{v}}_0%|_{y=X^{x,c}_{t}}
 $,
 $$
      \mathop{\esssup}\limits_{\bar{v}\in{\mathcal{C}}^N(t,X^{x,\pi,c}_{t})}Y^{t,%X^{x,c}_{t},
      \bar{v}}_{t}
      =
\mathop{\esssup}\limits_{\bar{v}\in{\mathcal{C}}^N(t,y)}Y^{t,%X^{x,c}_{t},
      \bar{v}}_{t}|_{y=X^{x,\pi,c}_{t}}
      =\mathop{\esssup}\limits_{{v}\in{\mathcal{C}}^N(y)}Y^{{v}}_0
     |_{y=X^{x,\pi,c}_{t}}
      =J^N(X^{x,\pi,c}_{t}).
 $$
 Consequently,
 $$
 J^N(x)\leq  G^{c}_{0,t}[J_N(X^{x,\pi,c}_{t})] +\varepsilon
 \leq \sup_{c\in{\mathcal{A}}^N(x)}G^{c}_{0,t}[J^N(X^{x,\pi,c}_{t})] +\varepsilon,\ \ \mbox{and}
 $$
 $$
 J^N(x)
 \leq \sup_{c\in{\mathcal{C}}^N(x)}G^{c}_{0,\tau_n}[J^N(X^{x,\pi,c}_{\tau_n})] +\varepsilon.
 $$
 Letting $n\rightarrow\infty$ and  $\varepsilon\rightarrow0^+$, we obtain that $J^N(x)\leq \bar{J}^N(x)$.
\end{proof}
\begin{rem}
(i) Because of the exponential decay function $e^{-\delta s}$, $V^c$ defined in (\ref{eq:ez-utility}) is time inhomogeneous. Therefore, it is difficult to directly establish the DPP for the value function $J$.

(ii) The literature (e.g., \cite{LZ}) establishes the DPP for the value function defined by the solution of a time-homogeneous infinite horizon BSDE. To obtain the DPP for $J$, we need to find a suitable time-homogeneous infinite horizon BSDE (\ref{bsde1}) and establish a one-to-one correspondence between its solution $Y^c$ and $V^c$ defined in (\ref{eq:ez-utility}).

(iii)
       Unlike standard literature (e.g., \cite{EPQ97}), we can not prove the existence and uniqueness of (\ref{BSDE-J_N}) with terminal condition $Y_t^{c,\pi,J}=J(X^{x,\pi,c}_t)$,  since $f$ does not satisfy Lipschitz  condition for some $(\pi,c)\in \mathcal{A}(x)$. Therefore, we cannot directly provide the DPP of value function $J$. However, the DPP for the approximation  value function $J^N$ is sufficient to ensure the existence of viscosity solutions to the HJB equation (\ref{eq:HJB}), which we address in Subsection \ref{subsec:exist}.
\end{rem}

\section{The Proof of Theorem \ref{th-unique}}
\label{sec:viscosity}

The objective of this section is to prove Theorem \ref{th-unique}. We first show that the optimal value function is indeed a viscosity solution to the HJB equation (\ref{eq:HJB}), as stated in Theorem \ref{existence}. This proof relies heavily on the dynamic programming principle. In the second part of the section, we establish that the viscosity solution to the HJB equation (\ref{eq:HJB}) in Theorem \ref{th-unique} is indeed unique.
\subsection{Existence of viscosity solutions}\label{subsec:exist}
In this subsection,  we consider the existence of viscosity solution to the  HJB equation (\ref{eq:HJB}).
%\begin{align}
%\label{eq:HJB}
%-\delta\nu J(x)+\mathbf{H}(x,J(x),\partial_xJ(x),\partial^2_xJ(x))=0,  \ x\geq 0.
%%\delta  V(x)=& \sup_{\pi\leq g(x),c\geq a}\{f(c,V)+\partial_xV[rx+\pi(\mu-r)-c]+{\pi^2\sigma^2\over 2}\red{\partial_{xx}V}\}  \nonumber \\
%%=&\sup_{\pi\leq g(x)}[\pi(\mu-r)\partial_xV+{1\over 2}\sigma^2\pi^2 \red{\partial_{xx}V}]+\sup_{c\geq a}[f(c,V)-c\partial_xV]+rx\partial_xV,  x\geq {a\over r}.
%\end{align}
%where
%\begin{eqnarray*}
%\mathbf{H}(x,k,p,q)=\sup_{\pi\leq g(x)}[\pi(\mu-r)p+\frac{1}{2}\sigma^2\pi^2q]+\sup_{c\geq 0}[f(c,k)-cp]+rxp,\ \ (x,r,p,q)\in
%[0,\infty)\times \mathbb{R}^+\times \mathbb{R}\times \mathbb{R}.
%\end{eqnarray*}
%First, we introduce the definition of viscosity solutions.
%For every $x\in [0,\infty)$ and $w\in C([0,\infty))$. 

For any $x\in \mathbb{R}_+$ and $w \in C(\mathbb{R}_+)$, define 
\begin{eqnarray*}
  \mathcal{A}^+(x,w):=\bigg{\{}&&\varphi\in C^{2}(\mathbb{R}_{+}):  0={w}(x)-\varphi(x)=\sup_{y\in \mathbb{R}_+}
                         ({w}(y)- \varphi(y
                         )),\ \mbox{and} \\
                       &&  \ \varphi\in C^{2}_b([a,\infty)) \ \mbox{for all} \ a>0 \bigg{\}},
\end{eqnarray*}

and
\begin{eqnarray*}
 \mathcal{A}^-(x,w):=\bigg{\{}&&\varphi\in C^{2}(\mathbb{R}_+):  0={w}(x)+\varphi(x)=\inf_{y\in \mathbb{R}_+}
                         ({w}(y)+\varphi(
                         y)),\ \mbox{and} \\
                       && \ \varphi\in C^{2}_b([a,\infty)) \ \mbox{for all} \ a>0  \bigg{\}}.
\end{eqnarray*}

\begin{definition}\label{defviscosity} \ \
 A function $w\in C(\mathbb{R}_+)$ is called a
                             viscosity subsolution (resp.,  supersolution)
                             to  equation (\ref{eq:HJB}) if whenever  $\varphi\in {\cal{A}}^+(x,w)$ (resp.,  $\varphi\in {\cal{A}}^-(x,w)$)  with $x\in \mathbb{R}_{++}$,  we have
\begin{eqnarray*}
                          -\delta\nu w(x)
                           +{\mathbf{H}}(x,w(x),\varphi_{x}(x),\varphi_{xx}(x))\geq0,
\end{eqnarray*}
\begin{eqnarray*}
                          (\mbox{resp.},\ -\delta\nu w(x)
                           +{\mathbf{H}}(x,w(x),-\varphi_{x}(x),-\varphi_{xx}(x))
                          \leq0).
\end{eqnarray*}
                                $w\in C(\mathbb{R}_+)$ is said to be a
                             viscosity solution to equation (\ref{eq:HJB}) if it is
                             both a viscosity subsolution and a viscosity
                             supersolution.
\end{definition}

There are three steps in proving the existence theorem. First,  we  show that $J^N(x)$ is a viscosity solution to the following approximation  HJB equation:
 \begin{align}
\label{eq:HJBN}
-\delta\nu J^N(x)+\mathbf{H}_N(x,J^N(x),J^N_x(x),J^N_{xx}(x))=0,  \ x\in \mathbb{R}_+,
%\delta  V(x)=& \sup_{\pi\leq g(x),c\geq a}\{f(c,V)+\partial_xV[rx+\pi(\mu-r)-c]+{\pi^2\sigma^2\over 2}\red{\partial_{xx}V}\}  \nonumber \\
%=&\sup_{\pi\leq g(x)}[\pi(\mu-r)\partial_xV+{1\over 2}\sigma^2\pi^2 \red{\partial_{xx}V}]+\sup_{c\geq a}[f(c,V)-c\partial_xV]+rx\partial_xV,  x\geq {a\over r}.
\end{align}
where
\begin{eqnarray*}
\mathbf{H}_N(x,k,p,q)&=&\sup_{{|\pi|\leq g(x)}}[\pi(\mu-r)p+\frac{1}{2}\sigma^2\pi^2q]+\sup_{0\leq c\leq N}[f(c,k)-cp]+rxp,\\
&&\ \ \ \ \ \ \ \ \ \  \ \ \ \ \ \ \ \ \  \ \ \ \  \ \ \  \ \ \ \ \ \ \ \ \ \ \ \ \ (x,k,p,q)\in
\mathbb{R}_+\times \mathbb{R}_+\times \mathbb{R}\times \mathbb{R}.
\end{eqnarray*}
\begin{theorem}\label{existenceN}
 The value function $J^N(x)$ defined by (\ref{valueN}) is a viscosity solution to equation  (\ref{eq:HJBN}).
\end{theorem}
\begin{proof}
First, let  $\varphi\in {\cal{A}}^+(\hat{x},J_N)$ with $\hat{x}\in \mathbb{R}_{++}$, and define
$$\tau^{\pi,c}=\inf\left\{t~|~X^{\hat{x},\pi,c}_t<\frac{\hat{x}}{2}\right\},\ \ (\pi,c)\in \mathcal{A}^N(\hat{x}).$$
                   Then by the DPP ( Proposition \ref{pro: dpp}), we obtain the following result: for any $h>0$,
 \begin{eqnarray}\label{4.91}
 \begin{aligned}
                            0&=J^N(\hat{x})-\varphi (\hat{x})\\
                           &=\sup_{(\pi,c)\in {\mathcal{A}}^N(\hat{x})} \mathbb{E}\left[J^N(X^{\hat{x},\pi,c}_{\tau^c\wedge h})
                           +\int_0^{\tau^c\wedge h}[f(c_s,Y^{c,J^N}_s)-\delta \nu Y^{c,J^N}_s]ds\right]
                           -\varphi (\hat{x}).
                           \end{aligned}
\end{eqnarray}
Then, for any $\varepsilon>0$ and $h>0$,  we can  find a control  $(\pi^{\varepsilon},c^{\varepsilon})\equiv (\pi^{\varepsilon,h},c^{\varepsilon,h})\in {\mathcal{A}}^N(\hat{x})$ such
   that the following result holds:
\begin{eqnarray}\label{4.10}
\begin{aligned}
    -{\varepsilon}h
    &\leq\mathbb{E}\left[J^N(X^{\hat{x},\pi^{\varepsilon},c^{\varepsilon}}_{\tau^{c^\varepsilon}\wedge h})
                           +\int_0^{\tau^{c^\varepsilon}\wedge h}[f(c^{\varepsilon}_s,Y^{c^{\varepsilon},J^N}_s)-\delta \nu Y^{c^{\varepsilon},J^N}_s]ds\right]
                           -\varphi (\hat{x})\\
                           &\leq\mathbb{E}\left[\varphi(X^{\hat{x},\pi^{\varepsilon},c^{\varepsilon}}_{\tau^{c^\varepsilon}\wedge h})
                           +\int_0^{\tau^{c^\varepsilon}\wedge h}[f(c^{\varepsilon}_s,Y^{c^{\varepsilon},J^N}_s)-\delta \nu Y^{c^{\varepsilon},J^N}_s]ds\right]
                           -\varphi (\hat{x}).
                           \end{aligned}
\end{eqnarray}
We note that $(Y^{c^{\varepsilon},J^N},Z^{c^{\varepsilon},J^N})$ is the solution of BSDE:
$$
dY^{c^{\varepsilon},J^N}_s=[\delta \nu Y^{c^{\varepsilon},J^N}_s-f(c^{\varepsilon}_{s},Y^{c^{\varepsilon},J^N}_s)]ds
 +Z^{c^{\varepsilon},J^N}_sdB_s, \ \ Y^{c^{\varepsilon},J^N}_{\tau^{c^\varepsilon}\wedge h}=J^N(X^{\hat{x},\pi^\varepsilon,c^{\varepsilon}}_{\tau^{c^\varepsilon}\wedge h}).
$$
 Since $(\pi^{\varepsilon},c^{\varepsilon})\in \mathcal{A}^N(\hat{x})$, for every $p\geq 1$, there exists a constant $C_p>0$, which is  independent of $\varepsilon$, such that
\begin{eqnarray}\label{conti,X}
           \mathbb{E}\left[\sup_{s\in [0,h]}|X^{\hat{x},\pi^{\varepsilon},c^{\varepsilon}}_{s}-\hat{x}|^p\right]\leq C_ph^{\frac{p}{2}}.
\end{eqnarray}
Then, by Propositions \ref{pro:basic pro} - \ref{pro:ez-continuous},
\begin{eqnarray*}
\mathbb{E}|Y^{c^{\varepsilon},J^N}_{\tau^{c^\varepsilon}\wedge h}-J^N(\hat{x})|=\mathbb{E}|J^N(X^{\hat{x},\pi^{\varepsilon},c^{\varepsilon}}_{\tau^{c^\varepsilon}\wedge h})-J^N(\hat{x})|\rightarrow0\ \mbox{as}\ h\rightarrow0^+.
\end{eqnarray*}
Therefore, for $s\leq h$, as $h\rightarrow0^+$, 
\begin{eqnarray}\label{conti,J_N}
\mathbb{E}|Y^{c^{\varepsilon},J^N}_{\tau^{c^\varepsilon}\wedge s}-J^N(\hat{x})|\leq
\mathbb{E}|Y^{c^{\varepsilon},J^N}_{\tau^{c^\varepsilon}\wedge s}-Y^{c^{\varepsilon},J^N}_{\tau^{c^\varepsilon}\wedge h}|
+\mathbb{E}|Y^{c^{\varepsilon},J^N}_{\tau^{c^\varepsilon}\wedge h}-J^N(\hat{x})|\rightarrow0.
\end{eqnarray}
Moreover, from (\ref{conti,X}),
\begin{eqnarray*}
    C_1h^{\frac{1}{2}}\geq \mathbb{E}\left[\sup_{s\in [0,\tau^{c^{\varepsilon}}\wedge h]}|X^{\hat{x},\pi^{\varepsilon},c^{\varepsilon}}_{s}-\hat{x}|\right]
    \geq \frac{\hat{x}}{2}\mathbb{E}[\mathbf{1}_{\{\tau^{c^{\varepsilon}}\leq h\}}]= \frac{\hat{x}}{2}[1-\mathbb{E}[\mathbf{1}_{\{\hat{\tau}^{c^{\varepsilon}}> h\}}]].
\end{eqnarray*}
Thus,
\begin{eqnarray}\label{tau>0}
\frac{\mathbb{E}[\hat{\tau}^{c^{\varepsilon}}\wedge h]}{h}\geq \mathbb{E}[\mathbf{1}_{\{\hat{\tau}^{c^{\varepsilon}}> h\}}]\geq 1-\frac{2C_1h^{\frac{1}{2}}}{\hat{x}}.
\end{eqnarray}
     Applying  It\^{o} formula  to
                         ${\varphi}(X^{\hat{x},\pi^{\varepsilon},{c}^{\varepsilon}}_s)$,   we get that
\begin{eqnarray}\label{bsde4.21}
 \begin{aligned}                            \varphi(X^{\hat{x},\pi^{\varepsilon},c^{\varepsilon}}_{\tau^{c^\varepsilon}\wedge h})
                             =& \varphi(\hat{x})+\int^{\tau^{c^\varepsilon}\wedge h}_{0} ({\cal{L}}\varphi)(X^{\hat{x},\pi^{\varepsilon},{c}^{\varepsilon}}_s,
                                     {\pi}^{\varepsilon}_s, c^{\varepsilon}_s)ds\\
                                     &
                             +\int^{\tau^{c^\varepsilon}\wedge h}_{0}[
                             \pi_s^{\varepsilon}\sigma\partial_x{\varphi}(X^{\hat{x},\pi^{\varepsilon},{c}^{\varepsilon}}_s)]dB_s,
\end{aligned}
\end{eqnarray}
where
\begin{eqnarray*}
                       ({\cal{L}}{\varphi})(x,\pi,c)
                       =
                                         \langle{\varphi}_{x}(x),rx+(\mu-r)\pi-c\rangle+\frac{1}{2}\varphi_{xx}(x)\pi^2\sigma^2, \
                                         ( x,\pi,c)\in \mathbb{R}_+\times \mathbb{R}\times \mathbb{R}.
\end{eqnarray*}
Put (\ref{bsde4.21}) into (\ref{4.10}) and take expectation, by (\ref{tau>0}) we obtain 
\begin{eqnarray*}
                         -{\varepsilon}
                           &\leq&\frac{1}{h}\mathbb{E}\left[\int_0^{\tau^{c^\varepsilon}\wedge h}[({\cal{L}}\varphi)(X^{\hat{x},\pi^{\varepsilon},{c}^{\varepsilon}}_s,
                                     {\pi}^{\varepsilon}_s, c^{\varepsilon}_s)+f(c^{\varepsilon}_s,Y^{c^{\varepsilon},J^N}_s)-\delta \nu Y^{c^{\varepsilon},J^N}_s]ds\right]\\
                                     &=&\frac{1}{h}\mathbb{E}\left[\int_0^{\tau^{c^\varepsilon}\wedge h}[({\cal{L}}\varphi)(\hat{x},
                                     {\pi}^{\varepsilon}_s, c^{\varepsilon}_s)+f(c^{\varepsilon}_s,J^N(\hat{x}))-\delta \nu J^N(\hat{x})]ds\right]\\
                                     &&+\frac{1}{h}\mathbb{E}\bigg[\int_0^{\tau^{c^\varepsilon}\wedge h}[({\cal{L}}\varphi)(X^{\hat{x},\pi^{\varepsilon},{c}^{\varepsilon}}_s,
                                     {\pi}^{\varepsilon}_s, c^{\varepsilon}_s)+f(c^{\varepsilon}_s,Y^{c^{\varepsilon},J^N}_s)
                                      -\delta \nu Y^{c^{\varepsilon},J^N}_s]\\
                                      && \ \ -[({\cal{L}}\varphi)(\hat{x},
                                     {\pi}^{\varepsilon}_s, c^{\varepsilon}_s)+f(c^{\varepsilon}_s,J^N(\hat{x}))-\delta \nu J^N(\hat{x})]ds\bigg]\\
                           &\leq& [-\delta \nu J^N(\hat{x})+{\mathbf{H}}_N(\hat{x},J^N(\hat{x}),\varphi_x(\hat{x}),\varphi_{xx}(\hat{x}))]
                           \left[1-\frac{2C_1h^{\frac{1}{2}}}{\hat{x}}\right]+I,
\end{eqnarray*}
where, from $\varphi\in C_b^2([\frac{\hat{x}}{2},\infty))$, (\ref{conti,X}) and (\ref{conti,J_N}),
\begin{eqnarray*}
|I|&\leq& \frac{1}{h}\mathbb{E}\bigg[\int_0^{h}|({\cal{L}}\varphi)(X^{\hat{x},\pi^{\varepsilon},{c}^{\varepsilon}}_{\tau^{c^\varepsilon}\wedge s},
                                     {\pi}^{\varepsilon}_{\tau^{c^\varepsilon}\wedge s}, c^{\varepsilon}_{\tau^{c^\varepsilon}\wedge s})+f(c^{\varepsilon}_{\tau^{c^\varepsilon}\wedge s},Y^{c^{\varepsilon},J^N}_{\tau^{c^\varepsilon}\wedge s})-\delta \nu Y^{c^{\varepsilon},J^N}_{\tau^{c^\varepsilon}\wedge s}]\\
                                     &&-[({\cal{L}}\varphi)(\hat{x},
                                     {\pi}^{\varepsilon}_{\tau^{c^\varepsilon}\wedge,s}, c^{\varepsilon}_{\tau^{c^\varepsilon}\wedge s})+f(c^{\varepsilon}_{\tau^{c^\varepsilon}\wedge s},J^N(\hat{x}))-\delta \nu J^N(\hat{x})|ds\bigg]\rightarrow0
                                     \ \mbox{as}\ h\rightarrow0^+.
\end{eqnarray*}
%Since $\inf_{c\in \mathcal{A}(\hat{x})}\mathbb{E}\hat{\tau}^{c}>0$,
Letting $h\rightarrow0$ and $\varepsilon\rightarrow0$, we have that
$$-\delta\nu J^N(\hat{x})+{\mathbf{H}}_N(\hat{x},J^N(\hat{x}),\varphi_{x}(\hat{x}),\varphi_{xx}(\hat{x}))\geq0.$$
Next, let  $\varphi\in {\cal{A}}^-(\hat{x},J^N)$
                  with
                   $\hat{x}\in \mathbb{R}_{++}$,  for any given $(\bar{\pi},\bar{c})$ with {$|\bar{\pi}|\leq g(\hat{x})$ }and $\bar{c}\leq N$, let $(\pi,c)\in\mathcal{A}^N(\hat{x})$
                   such that $(\pi_0,c_0)=(\bar{\pi},\bar{c})$ and are continuous on $t=0$. Then by the DPP ( Proposition \ref{pro: dpp}), we obtain: for any $h>0$,
 \begin{eqnarray*}\label{4.9}
                            0=J^N(\hat{x})+\varphi (\hat{x})
                          &\geq& \mathbb{E}\left[J^N(X^{\hat{x},\pi,c}_{\tau^{c}\wedge h})
                           +\int_0^{\tau^{c}\wedge h}[f(c_s,Y^{c,J^N}_s)-\delta \nu Y^{c,J^N}_s]ds\right]
                          +\varphi (\hat{x})\\
                           &\geq& \mathbb{E}\left[-\varphi(X^{\hat{x},\pi,c}_{\tau^{c}\wedge h})
                           +\int_0^{\tau^{c}\wedge h}[f(c_s,Y^{c,J^N}_s)-\delta \nu Y^{c,J^N}_s]ds\right]
                          +\varphi (\hat{x}).
\end{eqnarray*}
 Applying  It\^{o} formula  to
                         ${\varphi}(X^{\hat{x},\pi,{c}}_s)$,   we get that
\begin{eqnarray*}
                       0
                           &\geq&\frac{1}{h}\mathbb{E}\left[\int_0^{\tau^{c}\wedge h}[-({\cal{L}}\varphi)(X^{\hat{x},\pi,{c}}_s,
                                     {\pi}_s, c_s)+f(c_s,Y^{c,J^N}_s)-\delta \nu Y^{c,J^N}_s]ds\right].
                                     %&=&\frac{1}{h}\mathbb{E}\left[\int_0^{ h}[({\cal{L}}\varphi)(\hat{x},
%                                     {\pi}^{\varepsilon}_l, c^{\varepsilon}_l)+f(c^{\varepsilon}_s,J(\hat{x}))-\delta \nu J(\hat{x})]ds\right]\\
%                                     &&+\frac{1}{h}\mathbb{E}\left[\int_0^{ h}[({\cal{L}}\varphi)(X^{\hat{x},{c}^{\varepsilon}}_l,
%                                     {\pi}^{\varepsilon}_l, c^{\varepsilon}_l)+f(c^{\varepsilon}_s,Y^{c^{\varepsilon},J}_s)-\delta \nu Y^{c^{\varepsilon},J}_s]-[({\cal{L}}\varphi)(\hat{x},
%                                     {\pi}^{\varepsilon}_l, c^{\varepsilon}_l)+f(c^{\varepsilon}_s,J(\hat{x}))-\delta \nu J(\hat{x})]ds\right]\\
%                           &\leq& {\mathbf{H}}_N(\hat{x},\varphi(x),\partial_x\varphi(x),\partial^2_{x}\varphi(x))+o(1).
\end{eqnarray*}
Let $h\rightarrow0$  we have that
$$-({\cal{L}}\varphi)(\hat{x},\bar{\pi}, \bar{c})+f(\bar{c},J^N(\hat{x}))-\delta \nu J^N(\hat{x})\leq0.$$
By the arbitrariness of $(\bar{\pi},\bar{c})$,
$$-\delta\nu J^N(\hat{x})+{\mathbf{H}}_N(\hat{x},J^N(\hat{x}),-\varphi_{x}(\hat{x}),-\varphi_{xx}(\hat{x}))\leq0.$$
\end{proof}
The second step is to show that $J^{N}(x)$ converges to $J(x)$ as $N \rightarrow \infty$, for all $x > 0$.

\begin{proposition}
    We have
    \begin{eqnarray}\label{appN}
J^N(x)\rightarrow J(x)\ \mbox{as}\ N\rightarrow\infty.
\end{eqnarray}
\end{proposition}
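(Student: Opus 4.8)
The plan is to establish the two inequalities $\lim_{N\to\infty}J^N(x)\le J(x)$ and $\lim_{N\to\infty}J^N(x)\ge J(x)$ separately. The first is immediate: since ${\mathcal A}^N(x)\subseteq{\mathcal A}^{N+1}(x)\subseteq{\mathcal A}(x)$, the sequence $J^N(x)$ is nondecreasing in $N$ and bounded above by $J(x)$, so $\bar J(x):=\lim_{N\to\infty}J^N(x)$ exists and satisfies $\bar J(x)\le J(x)$. For the reverse inequality it suffices to prove that $\bar J(x)\ge V_0^c$ for every $(\pi,c)\in{\mathcal A}(x)$, since taking the supremum over admissible strategies then yields $\bar J(x)\ge J(x)$.

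Fix $(\pi,c)\in{\mathcal A}(x)$ and set $c^N:=c\wedge N$. The first step is to verify that $(\pi,c^N)\in{\mathcal A}^N(x)$ with the \emph{same} portfolio $\pi$. Let $X^N$ be the wealth process generated by $(\pi,c^N)$ via \eqref{eq:wealth}. Because $c^N\le c$, the difference solves a linear equation and equals $X^N_t-X_t=\int_0^t e^{r(t-s)}(c_s-c^N_s)\,ds\ge 0$, so $X^N_t\ge X_t\ge 0$. Monotonicity of $g$ then gives $|\pi_t|\le g(X_t)\le g(X^N_t)$, and the integrability requirements are inherited from $(\pi,c)$; hence $(\pi,c^N)\in{\mathcal A}^N(x)$ and $V_0^{c^N}\le J^N(x)$. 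The structural point making this work is that \emph{reducing consumption relaxes the leverage constraint}, so the portfolio need not be altered.

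Since the utility process in \eqref{eq:ez-utility} depends only on the consumption stream, and $f$ is increasing in $c$ and decreasing in $v$ when $0<R,S<1$, the comparison property of the recursive utility (which follows from Lemma C3 of \cite{SS99} as in the proof of Proposition \ref{pro:ez-concave}) gives $V_t^{c^N}\le V_t^{c^{N+1}}\le V_t^c$. Thus $\bar V_t:=\lim_{N\to\infty}V_t^{c^N}$ exists with $0<V_t^{c^1}\le\bar V_t\le V_t^c<\infty$ on every finite interval. Passing to the limit in $V_t^{c^N}=\mathbb E\big[\int_t^\infty e^{-\delta s}f(c^N_s,V_s^{c^N})\,ds\mid{\mathcal F}_t\big]$ by the conditional Fatou lemma — legitimate because $f\ge 0$ and $f(c^N_s,V^{c^N}_s)\to f(c_s,\bar V_s)$ pointwise by continuity of $f$ on $\{\bar V>0\}$ — produces the supersolution inequality
\[
\bar V_t\ \ge\ \mathbb E\Big[\int_t^\infty e^{-\delta s}f(c_s,\bar V_s)\,ds\ \Big|\ {\mathcal F}_t\Big].
\]
Subtracting from this the exact identity satisfied by $V^c$ and expanding $f$ to first order in its second argument, $D_t:=\bar V_t-V^c_t$ obeys
\[
D_t\ \ge\ \mathbb E\Big[\int_t^\infty e^{-\delta s}\,\tfrac{\partial f}{\partial v}(c_s,\Theta_s)\,D_s\,ds\ \Big|\ {\mathcal F}_t\Big],
\]
with $\Theta_s$ between $\bar V_s$ and $V^c_s$ and $\tfrac{\partial f}{\partial v}\le 0$. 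This is precisely the structure treated by (the infinite-horizon version of) Lemma C3 of \cite{SS99}, which forces $D_t\ge 0$; combined with $\bar V\le V^c$ this gives $\bar V=V^c$. In particular $V_0^{c^N}\uparrow V_0^c$, hence $\bar J(x)\ge\lim_N V_0^{c^N}=V_0^c$, and the proof is complete upon taking the supremum over $(\pi,c)$.

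The main obstacle is the passage to the limit through the singular, non-Lipschitz aggregator. One must keep $\bar V_s$ strictly positive so that $v\mapsto\big((1-R)v\big)^\rho$ stays continuous and finite along the sequence — this is secured by $\bar V_s\ge V_s^{c^1}>0$ on finite time horizons — and one must verify the integrability and class-(D) conditions needed to apply Lemma C3 at $T=+\infty$, which rely on $V_0^c\le J(x)<\infty$ together with the uniform a priori bounds from Lemma \ref{lem:bsde-rs<1}. These verifications, rather than the comparison mechanics themselves, are where the care is needed.
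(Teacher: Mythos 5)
Your proposal is correct and follows essentially the same route as the paper: both approximate an ($\epsilon$-)optimal consumption by its truncation $c\wedge N$ while keeping the same portfolio $\pi$, and then conclude from the convergence $V_0^{c\wedge N}\to V_0^{c}$. The only substantive difference is that the paper simply cites Theorems 6.5 and 6.7 of \cite{HHJ23b} for this convergence, whereas you prove it directly via conditional Fatou plus the Gronwall-type Lemma C3 of \cite{SS99}; you also make explicit the admissibility check $(\pi, c\wedge N)\in\mathcal{A}^N(x)$ (lower consumption raises wealth, and $g$ is increasing), a point the paper leaves implicit.
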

\begin{proof}
%We note that, by Theorems 6.5 and 6.7 in \cite{HHJ23b}, we have
%\begin{eqnarray}\label{VNconve}
%V^{c\wedge N}_t(x)\rightarrow V^c_t(x), \ \mbox{as}\ N\rightarrow\infty.
%\end{eqnarray}
%Then
%\begin{eqnarray}\label{appN}
%J^N(x)\rightarrow J(x)\ \mbox{as}\ %N\rightarrow\infty.
%\end{eqnarray}
By the definition, it is clear that $\limsup_{N\rightarrow \infty}J^N(x)\leq J(x)$. It suffices to prove that 
\begin{eqnarray}\label{JNconvJ}
\liminf_{N\rightarrow \infty}J^N(x)\geq J(x).
\end{eqnarray}
For any $\epsilon>0$, by the definition of $J(x)$, there exists $c^{\epsilon}\in \mathcal{C}(x)$ such that 
$$
         J(x)\leq V^{c^{\epsilon}}(x)+\epsilon.
$$
By Theorems 6.5 and 6.7 in \cite{HHJ23b}, we have, for every $c\in \mathcal{C}(x)$,
\begin{eqnarray}\label{VNconve}
V^{c\wedge N}_t(x)\rightarrow V^c_t(x), \ \mbox{as}\ N\rightarrow\infty.
\end{eqnarray}
By (\ref{VNconve}), there exists $c^{\epsilon,N}$ such that $c^{\epsilon,N}\in \mathcal{C}^N(x)$ and 
$$
          V^{c^{\epsilon,N}}(x)\geq V^{c^{\epsilon}}(x)-\epsilon.
$$
Therefore,
$$
         J(x)\leq V^{c^{\epsilon,N}}(x)+2\epsilon\leq J^N(x)+2\epsilon.
$$
Letting $N\rightarrow \infty$ and $\epsilon\rightarrow 0$, we get (\ref{JNconvJ}).
\end{proof}
Finally, the following theorem establishes the existence of a viscosity solution to the HJB equation (\ref{eq:HJB}).

\begin{theorem}\label{existence}
 The optimal value
                          function $J$ defined by (\ref{eq:ez-problem}) is a
                          viscosity solution to  HJB equation (\ref{eq:HJB}).

\end{theorem}
\begin{proof}
            % Without loss of generality, we shall only prove the viscosity subsolution property.
    Let   $\varphi\in \mathcal{A}^+(\hat{x}, J)$ with
  $\hat{x}>0$.  Then, for any $N > 0$, %By (\ref{appN}),
  $$J^N(x)-\varphi(x)\leq J(x)-\varphi(x)\leq J(\hat{x})-\varphi(\hat{x}) =0.$$
 Denote $\varphi_{1}(x):=\varphi(x)+|x-\hat{x}|^2$. Then, there exists $x_N$ such that
 $$
         J^N(x_N)-\varphi_1(x_N)=\sup_{x\in \mathbb{R}_+}[J^N(x)-\varphi_1(x)].
 $$
We claim that
\begin{equation}
\label{eq:gamma}
|x_N-\hat{x}|\rightarrow0  \ \ \mbox{as} \ \ N\rightarrow\infty.
\end{equation}
 Indeed, suppose not. Then we can assume that there exists a constant  $\nu_0>0$
 such that
$$
              |x_N-\hat{x}|^2  \geq\nu_0.
$$
It then follows that
\begin{eqnarray*}
   &&0=(J- {{\varphi}})(\hat{x})= \lim_{N\rightarrow\infty}(J^N-{{\varphi_{1}}})(\hat{x})
   \leq \limsup_{N\rightarrow\infty}[(J^N-\varphi)(x_N)-|x_N-\hat{x}|^2]\\
   &\leq&\limsup_{N\rightarrow\infty}[(J-\varphi)(x_N)+(J^N-J)(x_N)]
     -\nu_0\leq (J- {{\varphi}})(\hat{x})-\nu_0=-\nu_0,
\end{eqnarray*}
 contradicting $\nu_0>0$. We have thus proved (\ref{eq:gamma}).  Let $N$ be large enough, by $\hat{x}>0$, we have
 $
 x_N>0
 $.
 
 By Theorem \ref{existenceN}, $J^N$ is a viscosity subsolution of HJB equation (\ref{eq:HJBN}). Then, we have
$$
                         -\delta\nu J^N(x_N)+\mathbf{H}_N(x_N,J_N(x_N),(\varphi_1)_x(x_N),(\varphi_1)_{xx}(x_N))\geq0.
$$
Letting $N\rightarrow+\infty$,
\begin{align*}
                         -\delta\nu J(\hat{x})+\mathbf{H}(\hat{x},\varphi(\hat{x}),\varphi_x(\hat{x}),\varphi_{xx}(\hat{x}))\geq0.
\end{align*}
Next,  let   $\varphi\in \mathcal{A}^-(\hat{x}, J)$ with
  $\hat{x}>0$ and denote $\varphi_{2}(x):=\varphi(x)+|x-\hat{x}|^2$. Then there exist $y_N$ such that
 $$
         J^N(y_N)+\varphi_2(y_N)=\inf_{x\in \mathbb{R}_+}[J^N(x)+\varphi_2(x)].
 $$
We can prove that
\begin{align*}%\label{gamma}
|y_N-\hat{x}|\rightarrow0  \ \ \mbox{as} \ \ N\rightarrow\infty.
\end{align*}
 Otherwise, there exists a constant  $\nu_1>0$
 such that
$$
              |y_N-\hat{x}|  \geq\nu_1.
$$
Then, we obtain that
\begin{eqnarray*}
   &&0=(J+ {{\varphi}})(\hat{x})= \lim_{N\rightarrow\infty}(J^N+{{\varphi_{2}}})(\hat{x})
   \geq \liminf_{N\rightarrow\infty}[(J^N+\varphi)(y_N)+|y_N-\hat{x}|^2]\\
   &\geq&\liminf_{N\rightarrow\infty}[(J+\varphi)(y_N)+(J^N-J)(y_N)]
     +\nu_1\geq (J+{{\varphi}})(\hat{x})+\nu_1=\nu_1,
\end{eqnarray*}
 contradicting $\nu_1>0$.   Let $N$ be large enough, by $\hat{x}>0$, we have
 $
 y_N>0
 $. Again, since $J^N$ is a viscosity supersolution of HJB equation (\ref{eq:HJBN}), we have
$$
                         -\delta\nu J^N(y_N)+\mathbf{H}_N(y_N,J^N(y_N),-(\varphi_1)_{x}(y_N),-(\varphi_1)_{xx}(y_N))\leq0.
$$
Letting $N\rightarrow\infty$,
\begin{align*}
                         -\delta\nu J(\hat{x})+\mathbf{H}(\hat{x},J(\hat{x}),-\varphi_{x}(\hat{x}),-\varphi_{xx}(\hat{x}))\leq0.
\end{align*}
We have completed the proof. 
\end{proof}
\begin{rem}\label{rem:r>1}
    We can not directly prove Theorem \ref{existence} due to the lack of the  DPP %dynamic programming principle (Bellman principle) 
    for 
    the optimal value function $J(x)$. Instead, we first establish Theorem \ref{existenceN} by applying the DPP to $J^N(x)$, and then derive  Theorem \ref{existence} using (\ref{appN}) together with the fact that $\mathbf{H}_N\rightarrow \mathbf{H}$ as $N\rightarrow \infty$. Several important studies have addressed the DPP for Epstein–Zin preferences in a discrete-time setting; see \cite{BJ18}, \cite{S22}, and \cite{SWZ24}. By contrast, in the continuous-time setting, it is not known ex ante whether a solution to the corresponding HJB equation exists. A key result  is to establish the existence of a viscosity solution, while the DPP plays a central role in the proof.
    %However, there are very few results concerning the HJB equation in the absence of portfolio constraints, with the exception of \cite{WWY16}. 
    %This study is the first to establish the DPP under portfolio constraints.
    %\textcolor{red}{}
\end{rem}

\begin{rem}
   Theorem \ref{existence} also holds for the case of $R>1$, upon a careful review of the proof procedures above. For example, for $R>1$, \eqref{eq:ez-bound} and Proposition \ref{pro:ez-concave} also hold. Moreover, Lemma \ref{lem:bsde1}, Lemma \ref{lem:bsde-rs<1} and DPP can be verified in a similar manner. See Appendix D for details. %\ref{appen:existence}.}
\end{rem}

\subsection{Uniqueness of viscosity solutions}
In this subsection, we characterize the value function as the unique viscosity solution of the corresponding HJB equation (\ref{eq:HJB}).
Its proof depends on the comparison principle established by the following result.

\begin{theorem}\label{12052} 
If $u$ is a concave  viscosity subsolution of (\ref{eq:HJB}) on $\mathbb{R}_+$ with sublinear growth and $u(0)\leq0$ and $v$ with $v(0)\geq0$ is a bounded from below, strictly concave
%sublinearly growing, uniformly continuous on  $[0, \infty)$, and locally Lipschitz in  $(0, \infty)$
 supresolution of (\ref{eq:HJB}) on  ${\Omega}$, the $u\leq v$ on $\mathbb{R}_+$.
\end{theorem}
\begin{proof}
   % We first present the proof of the theorem for the case $f<\infty$.
    We proceed by contradiction. Suppose, for the sake of argument, that
    \begin{align*}
           \sup_{x\in \mathbb{R}_+}[u(x)-v(x)]>0.
    \end{align*}
    By the continuity of $u$ and $v$, there exists an $\hat{x}>0$ such that
    \begin{align*}
         2\tilde{m}:=  u(\hat{x})-v(\hat{x})>0.
    \end{align*}
    Then, for sufficiently small $\theta>0$,
    %and $0<\epsilon$,
    \begin{align*}
           \sup_{x\in \mathbb{R}_+}[u(x)-v(x)-\theta x]
           %-\epsilon x^2-%(\ln(-\ln(\theta))^{-1}
          % \epsilon[\ln(x-\frac{a}{r}+\theta)]^2]
          %\frac{\theta}{x-\frac{a}{r}+\theta^2}]
          \geq\tilde{m}>0.
    \end{align*}
    Since $u$  is concave with sublinear growth and $v$ is bounded from below, there exists $\bar{x}\in \mathbb{R}_{++}$ such that
    \begin{align*}
           \sup_{x\in \mathbb{R}_+}[u(x)-v(x)-\theta x]
          %-\epsilon x^2-\frac{\theta}{x-\frac{a}{r}+\theta^2}]\nonumber\\
          =u(\bar{x})-v(\bar{x})-\theta \bar{x}
          %-\epsilon \bar{x}^2
          % -\frac{\theta}{\bar{x}-\frac{a}{r}+\theta^2}
          \geq \tilde{m}>0.
    \end{align*}

   To proceed, we present the following lemma, with its proof provided in Appendix A.
 \begin{lemma}
    \label{lemma:unqueness}
        For $\lambda>0$, we define $\varphi:\mathbb{R}_+\times \mathbb{R}_+\rightarrow \mathbb{R}$ by
    $$
     \varphi(x,y)=u(x)-v(y)-\lambda|x-y|^2-\frac{\theta}{2} (x+y).
     %-\frac{\epsilon}{2} (x^2+y^2)-\frac{1}{2}[\frac{\theta}{x-\frac{a}{r}+\theta^2}+\frac{\theta}{y-\frac{a}{r}+\theta^2}]
    $$
    For each fixed $\lambda$ and $\theta$, the function $\varphi(x,y)$ attains its maximum at a point  $(x_0,y_0):=(x_0(\theta,\lambda), y_0(\theta,\lambda))$ such that
    \begin{eqnarray}\label{12055}
        \lim_{\lambda\rightarrow+\infty}\lambda |x_0-y_0|^2=0.
    \end{eqnarray}
    %and \begin{eqnarray}\label{12056}
%\lim_{\theta\rightarrow0^+}\lim_{\lambda\rightarrow+\infty}\theta [g(x_0)\vee x_0]=0.
 %   \end{eqnarray}
    Moreover, there exists $\bar{x}_0:=\bar{x}_0(\theta)>0$  such that 
    \begin{eqnarray}\label{101025}
    \lim_{\lambda\rightarrow+\infty}x_0(\theta,\lambda)=\bar{x}_0(\theta), %\qquad \bar{x}_0>0, 
    \quad
    \lim_{\theta\rightarrow0^+}\theta [g(\bar{x}_0)\vee \bar{x}_0]=0, \quad \mbox{and}
    \end{eqnarray}
    \begin{eqnarray}\label{0318zhou}
               u(\bar{x}_0)-v(\bar{x}_0)\geq
               \sup_{x\in \mathbb{R}_+}[u(x)-v(x)-\theta x]\geq\tilde{m}>0
               \geq u(0)-v(0).
\end{eqnarray}
 Here and in the following, the limit $\lambda\rightarrow+\infty$ is taken along subsequences which, to simplify notation, we denote the same way as the whole family. 
    \end{lemma}

 {\em  We now return to the proof of Theorem \ref{12052}.}
  
    By (\ref{12055}) and (\ref{101025}), for each fixed   sufficiently small $\theta>0$, there exists a constant $\Delta_\theta>0$ large enough that
    \begin{eqnarray}\label{x_0y_0>0}
              x_0,y_0\geq \frac{\bar{x}_0}{2}>0, \ \ \mbox{for all} \ \lambda\geq\Delta_\theta.
    \end{eqnarray}
\par
We put, for $x,y\in \mathbb{R}_+$,
\begin{align*}%\label{120510}
                u_1(x)=u(x)-\frac{\theta}{2} x, ~v_1(y)=v(y)+\frac{\theta}{2} y \text{~~and~~} \psi(x,y)=\lambda|x-y|^2.
                %-\frac{1}{2}\epsilon[\ln(x-\frac{a}{r}+\theta)]^2,
\end{align*}
    %\begin{eqnarray}\label{120511}v_1(y)=v(y)+\frac{\theta}{2} y,
               % +\frac{1}{2}\epsilon[\ln(y-\frac{a}{r}+\theta)]^2,
    %\end{eqnarray}
    %\begin{eqnarray}\label{120512}\psi(x,y)=\lambda|x-y|^2.\end{eqnarray}
From above all, $u_1(x)-v_1(y)-\psi(x,y)$ has  a maximum at $(x_0,y_0)\in \mathbb{R}_{++}\times \mathbb{R}_{++}$ over $\mathbb{R}_+\times \mathbb{R}_+$. Then by Theorem 3.2 in \cite{c92} and Lemma 6.7 in \cite{zhou24}, there exist $X,Y\in \mathbb{R}$, sequences $x_k,y_k\in \mathbb{R}_{++}$ and the sequences of functionals $\varphi_k,\psi_k\in C^2(\mathbb{R}_+)$ bounded from below such that
$$u_1(x)-\varphi_k(x)\ \ (\mbox{resp.}, v_1(x)+\psi_k(x))$$
has a strict maximum (resp. minimum) $0$ at $x_k \ (\mbox{resp.}, y_k )$ over $\mathbb{R}_+$, and
$$
(x_k,u_1(x_k),(\varphi_k)_x(x_k),(\varphi_k)_{xx}(x_k))\underrightarrow{k\rightarrow+\infty}(x_0,u_1(x_0),2\lambda(x_0-y_0), X),
$$
$$
(y_k,v_1(y_k),(\psi_k)_x(y_k),(\psi_k)_{xx}(y_k))\underrightarrow{k\rightarrow+\infty}(y_0,v_1(x_0),2\lambda(y_0-x_0), Y),
$$
 and $X,Y$ satisfy the following inequality
 \begin{align*}%\label{II}
{-6\lambda}\left(\begin{array}{cc}
                                    I & 0\\
                                    0 & I
                                    \end{array}\right)\leq \left(\begin{array}{cc}
                                    X&0\\
                                    0&Y
                                    \end{array}\right)\leq  6\lambda\left(\begin{array}{cc}
                                    I&-I\\
                                    -I&I
                                    \end{array}\right).
\end{align*}
Let $$
           \hat{\varphi}_k(x)%:=g(x)+\varphi_k(x)
           :=\frac{\theta}{2} x
                %+\frac{1}{2}\epsilon[\ln(x-\frac{a}{r}+\theta)]^2
                +\varphi_k(x)
$$
and
$$
           \hat{\psi}_k(y)%:=g(y)+\psi_k(y)
           :=\frac{\theta}{2} y
                %+\frac{1}{2}\epsilon[\ln(y-\frac{a}{r}+\theta)]^2
                +\psi_k(y).
$$
It is clear that $\hat{\varphi}_k,\hat{\psi}_k\in C^2(\mathbb{R}_+)$ and
$$
 (u-\hat{\varphi}_k)(x_k)=\sup_{x\in \mathbb{R}_+}[u-\hat{\varphi}_k](x),
 \ \ \
  (v+\hat{\psi}_k)(y_k)=\inf_{y\in \mathbb{R}_+}[v+\hat{\psi}_k](y).
$$
Now, for each sufficiently small  $\theta>0$ and  $\lambda>\Delta_\theta$, it follows from the definition of viscosity solutions that
\begin{eqnarray}\label{12061}
\begin{aligned}
        -\delta \nu u(x_k)&+\sup_{{|\pi|\leq g(x_k)}}\left[\pi(\mu-r)(\hat{\varphi}_k)_{x}(x_k)+{1\over 2}\sigma^2\pi^2 (\hat{\varphi}_k)_{xx}(x_k)\right]\\
        &+\sup_{c\geq 0}\left[f(c,u(x_k))-c(\hat{\varphi}_k)_{x}(x_k)\right]+rx_k(\hat{\varphi}_k)_{x}(x_k)\geq0,\qquad \mbox{and}
        \end{aligned}
\end{eqnarray}
\begin{eqnarray}\label{12062}
\begin{aligned}
        -\delta\nu v(y_k)&+\sup_{{|\pi|\leq g(y_k)}}\left[-\pi(\mu-r)(\hat{\psi}_k)_{x}(y_k)-{1\over 2}\sigma^2\pi^2 (\hat{\psi}_k)_{xx}(y_k)\right]\\
        &+\sup_{c\geq 0}\left[f(c,v(y_k))+c(\hat{\psi}_k)_{x}(y_k)\right]-ry_k(\hat{\psi}_k)_{x}(y_k)\leq0.
        \end{aligned}
\end{eqnarray}
Notice that, by (\ref{12062}),
$$
            0>\lambda(y_0-x_0)+\frac{\theta}{2}=\lim_{k\rightarrow+\infty}(\hat{\psi}_k)_{x}(y_k).
$$
%From the strict concavity of $v$,
%$$
            %0<Y=\lim_{k\rightarrow\infty}\partial^2_x\hat{\psi}_k(y_k).
%$$
%Then, by $X+Y\leq0$,
%$$
%0>X=\lim_{k\rightarrow\infty}\partial^2_x\hat{\%varphi}_k(x_k).
%$$
Then,
letting $k\rightarrow+\infty$ in (\ref{12061}) and (\ref{12062}),
\begin{eqnarray}\label{12063}
\begin{aligned}
        -\delta&\nu u(x_0)+\sup_{{|\pi|\leq g(x_0)}}\left[\pi(\mu-r)\Big(\lambda(x_0-y_0)+\frac{\theta}{2}\Big)+{1\over 2}\sigma^2\pi^2 X\right]\\
        &+\sup_{c\geq 0}\left[f(c,u(x_0))-c\Big(\lambda(x_0-y_0)+\frac{\theta}{2}\Big)\right]+r x_0\left[\lambda(x_0-y_0)+\frac{\theta}{2}\right]\geq0,
        \end{aligned}
\end{eqnarray}
and
\begin{eqnarray}\label{12064}
\begin{aligned}
        -\delta\nu &v(y_0)+\sup_{{|\pi|\leq g(y_0)}}\left[-\pi(\mu-r)\Big(\lambda(y_0-x_0)+\frac{\theta}{2}\Big)-{1\over 2}\sigma^2\pi^2 Y\right]\\
        &+\sup_{c\geq 0}\left[f(c,v(y_0))+c\Big(\lambda(y_0-x_0)+\frac{\theta}{2}\Big)\right]-ry_0\left[\lambda(y_0-x_0)+\frac{\theta}{2}\right]\leq0.
        \end{aligned}
\end{eqnarray}
Combining (\ref{12063}) and (\ref{12064}),
\begin{eqnarray}\label{12065}
\begin{aligned}
        0\leq&-\delta\nu (u(x_0)-v(y_0))+\sup_{{|\pi|\leq g(x_0)}}\left[\pi(\mu-r)\Big(\lambda(x_0-y_0)+\frac{\theta}{2}\Big)+{1\over 2}\sigma^2\pi^2 X\right]\\
        &-\sup_{{|\pi|\leq g(y_0)}}\left[-\pi(\mu-r)\Big(\lambda(y_0-x_0)+\frac{\theta}{2}\Big)-{1\over 2}\sigma^2\pi^2 Y\right]\\
        &+\sup_{c\geq 0}\left[f(c,u(x_0))-c\Big(\lambda(x_0-y_0)+\frac{\theta}{2}\Big)\right]+r x_0\left[\lambda(x_0-y_0)+\frac{\theta}{2}\right]\\
       & -\sup_{c\geq 0}\left[f(c,v(y_0))+c\Big(\lambda(y_0-x_0)+\frac{\theta}{2}\Big)\right]+ry_0\left[\lambda(y_0-x_0)+\frac{\theta}{2}\right]\\
        \leq&-\delta\nu (u(x_0)-v(y_0))+\sup_{{|\pi|\leq g(x_0)}}\left[\pi(\mu-r){\theta}+{1\over 2}\sigma^2\pi^2 (X+Y)\right]\\
        %&&-\sup_{\pi\leq f(y_0)}[-\pi(\mu-r)[\delta(y_0-x_0)+\partial_xg(y_0)]-{1\over 2}\sigma^2\pi^2 [Y+\partial^2_xg(y_0)]]\nonumber\\
        &+\sup_{c\geq 0}\left[f(c,u(x_0))-f(c,v(y_0))-c\theta\right]+r\lambda|x_0-y_0|^2+\frac{r\theta}{2} (x_0+y_0)+I,
       %& +K(\mu-r)|x_0-y_0|\left(\lambda|x_0-y_0|+\frac{\theta}{2}\right)+{K\over 2}\sigma^2 (g(x_0)+g(y_0))|x_0-y_0|[|Y|].
       \end{aligned}
\end{eqnarray}
where
\begin{eqnarray*}
I:=&\sup_{{|\pi|\leq g(x_0)}}\left[-\pi(\mu-r)\big(\lambda(y_0-x_0)+\frac{\theta}{2}\big)-{1\over 2}\sigma^2\pi^2 Y\right]\\
&-\sup_{{|\pi|\leq g(y_0)}}\left[-\pi(\mu-r)\big(\lambda(y_0-x_0)+\frac{\theta}{2}\big)-{1\over 2}\sigma^2\pi^2 Y\right].
\end{eqnarray*}
We claim that 
\begin{eqnarray}\label{101125}
\lim_{\lambda\rightarrow+\infty}I=0,
\end{eqnarray}
and its proof provided in Appendix A.
Noting that %$|Y|\leq 6\lambda$ and 
$X+Y\leq0$, letting $\lambda\rightarrow+\infty$ in (\ref{12065}), by (\ref{12055}), (\ref{101025}) and (\ref{101125}),
\begin{eqnarray}\label{03182zhou}
        0
        \leq&-&\delta\nu (u(\bar{x}_0)-v(\bar{x}_0))+g(\bar{x}_0)(\mu-r)\theta\\
        &+&\sup_{c\geq 0}[f(c,u(\bar{x}_0))-f(c,v(\bar{x}_0))-c\theta]+{r\theta}\bar{x}_0.\nonumber
\end{eqnarray}
Since $\delta\nu>0$ and $\rho<0$, by (\ref{0318zhou}) and the definition of $f$,
$$
-\delta\nu (u(\bar{x}_0)-v(\bar{x}_0))\leq -\delta\nu\tilde{m}, \ \
f(c,u(\bar{x}_0))-f(c,v(\bar{x}_0))\leq0.
$$
 Then, letting $\theta\rightarrow0$ in (\ref{03182zhou}), by (\ref{101025}) the following contradiction is induced:
 \begin{align*}
        0\leq-\delta\nu\tilde{m}<0.
\end{align*}The proof is completed.
\end{proof}
\begin{rem}\label{rem:r>1 uni}
   Since $0 < R < 1$, the HJB equation (\ref{eq:HJB}) is defined on $\mathbb{R}_+$, while classical HJB equation is defined on $\mathbb{R}$ or an open subset of $\mathbb{R}$. Hence, we can not apply the result of classical HJB equation to our case. In particular, we need to prove (\ref{x_0y_0>0}) holds true. Specifically, 
      to obtain the maximum point for the auxiliary function $\varphi$, we add the term $\frac{\theta}{2}(x+y)$ to $\varphi$. Since $\delta\nu$ is fixed, it remains to prove that (\ref{03181zhou}) holds. This constitutes the key difference from the viscosity solution theory of the classical HJB equation as in \cite{FS06}.
\end{rem}
\begin{rem}
Compared to the classical HJB equations, another key distinction is that the supremum in the HJB equation  (\ref{eq:HJB}) depends on $g(x)$. This presents significant challenges. In fact, we need to establish the validity of equality (\ref{101125}), which requires the use of the strict concavity of supresolution $v$.
\end{rem}
    \begin{rem}
While the existence of a viscosity solution can be established for any $R \ne 1$, the uniqueness for  $R > 1$ remains unclear. In this case, the optimal value function does not satisfy a sublinear growth condition, and the boundary condition $J(0)$ is not well-defined. More importantly, Lemma \ref{lemma:unqueness} does not apply, as the function $\varphi(x,y)$ does work for the situation $R>1$.
Consequently, we cannot establish the uniqueness of the viscosity solution. In fact, this difficulty arises even under the standard CRRA utility when $R > 1$, as shown in \cite{Z94}.
\end{rem}
\begin{proof}[Proof of Theorem \ref{th-unique}]

From Theorems (\ref{existence}) and  (\ref{12052}), the value function $J$ defined by (\ref{eq:ez-problem}) is the unique viscosity solution of  HJB equation (\ref{eq:HJB}) in the class of bounded from below, strictly concave functions on $\mathbb{R}_+$. 
\end{proof}

\section{The Proof of Theorem \ref{th-c2}}\label{sec:smoothness}
In this section, we establish the smoothness of the optimal value function. Although our approach is motivated by the classical time-separable case in \cite{Z94}, we must address the challenges introduced by the Epstein–Zin aggregator.

%We divide it into the following steps.

%{\bf{Step 1}}: Constructing an auxiliary problem.

We consider a product probability space on $(\bar{\Omega},\bar{\mathcal{F}},\bar{\mathbb{P}}):=(\Omega\times\tilde{\Omega}, \mathcal{F}\times\tilde{\mathcal{F}},\mathbb{P}\times\tilde{\mathbb{P}})$, on which $(B,\tilde{B})$ is a two-dimensional Brownian motion. We also require that $\{\tilde{B}_t\}_{t\geq 0}$ is a Brownian motion independent of $\{B_t\}_{t\geq 0}$, which is defined on $(\tilde{\Omega},\tilde{\mathcal{F}},\tilde{\mathbb{P}})$.

For any  $\epsilon>0$, we consider an auxiliary problem where the wealth process evolves as
\begin{equation}
\label{eq:wealth-epsilon}
dX_t^{\epsilon}=[rX^{\epsilon}_t+(\mu-r)\pi_t^{\epsilon}-c_t^{\epsilon}]dt+\pi^{\epsilon}_t \sigma dB_t+\sigma \epsilon X_t^{\epsilon}d\tilde{B}_t, ~~ X_0^{\epsilon}=x>0.  
\end{equation}
Compared to the wealth equation (\ref{eq:wealth}), there exists an extra diffusion term $\sigma \epsilon X_t^{\epsilon}d\tilde{B}_t$.
Furthermore, we define the value function by
\begin{equation}
\label{eq:ez-epsilon}
J^{\epsilon}(x)= \sup_{(\pi^{\epsilon},c^{\epsilon})\in {\cal{A^{\epsilon}}}(x)}V_0^{c^\epsilon}=\sup_{(\pi^{\epsilon},c^{\epsilon})\in {\cal{A^{\epsilon}}}(x)}
\bar{\mathbb{E}}\left[\int_{0}^{\infty}e^{-\delta s} f(c^{\epsilon}_{s},V_{s}^{c^{\epsilon}})ds\right],~~~~x>0,
\end{equation}
where $\mathcal{A}^{\epsilon}(x)$ represents that the admissible strategy $(\pi^\epsilon,c^\epsilon)$, $\bar{\mathcal{F}}_t$-progressive measurable processes, such that
$c^\epsilon_t\geq0$, $X_t^\epsilon\geq0$,  $|\pi^\epsilon_t|\leq g(X_t^\epsilon)$, and $\int_0^t c_u^\epsilon+(\pi_u^\epsilon)^2du<+\infty$, $\bar{P}$-a.s., for all $t\geq0$. %\textcolor{red}{what is the process $V_{t}^{c^{\epsilon}}$? extsience and uniqueness?}

The proof of the following proposition is similar to the proof of Proposition \ref{pro:basic pro} and Proposition \ref{pro:ez-concave}.

\begin{proposition}
  The function $J^\epsilon$ satisfies sublinear growth condition, i.e., $\forall x>0$, 
\begin{align}\label{eq:v-epsilon-bounded}
\left(\delta-\frac{1}{2\nu}\sigma^2\epsilon^2R(1-R)\right)^{-\nu}\frac{(rx)^{1-R}}{1-R}\leq J^{\epsilon}(x)\leq (\eta+\frac{R(1-S)}{2S}\sigma^2\epsilon^2)^{-\nu S}\frac{x^{1-R}}{1-R}.
\end{align}Besides, $J^\epsilon$ is also strictly increasing and strictly concave  on $\mathbb{R}_{++}$. 
\end{proposition}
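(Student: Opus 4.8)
The plan is to follow the templates of Proposition~\ref{pro:basic pro} (for the bounds and monotonicity) and Proposition~\ref{pro:ez-concave} (for concavity), carrying each argument into the enlarged product space so as to accommodate the extra, unhedgeable diffusion $\sigma\epsilon X_t^{\epsilon}\,d\tilde{B}_t$. For the lower bound I would exhibit one explicit admissible strategy. Taking $\pi^{\epsilon}\equiv 0$ and $c_t^{\epsilon}=rX_t^{\epsilon}$ turns (\ref{eq:wealth-epsilon}) into the driftless geometric Brownian motion $dX_t^{\epsilon}=\sigma\epsilon X_t^{\epsilon}\,d\tilde{B}_t$, which stays strictly positive, so $(\pi^{\epsilon},c^{\epsilon})\in\mathcal{A}^{\epsilon}(x)$. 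Since the consumption is then proportional to a GBM, I would seek the associated utility in the self-similar form $V_t^{c^{\epsilon}}=A\,e^{-\delta\nu t}(X_t^{\epsilon})^{1-R}$ and fix $A$ from the defining relation (\ref{eq:ez-utility}). The only new ingredient relative to (\ref{eq:lower}) is the conditional moment $\bar{\mathbb{E}}[(X_s^{\epsilon})^{1-R}\mid\bar{\mathcal{F}}_t]=(X_t^{\epsilon})^{1-R}e^{-\frac12 R(1-R)\sigma^2\epsilon^2(s-t)}$, which modifies the effective discount rate by the term $\tfrac{1}{2\nu}\sigma^2\epsilon^2R(1-R)$; evaluating $V_0^{c^{\epsilon}}$ then gives the stated lower bound for $J^{\epsilon}(x)$.

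For the upper bound I would relax the leverage constraint and compare $J^{\epsilon}$ with the value $\bar{J}(x):=\xi^{-\nu S}\frac{x^{1-R}}{1-R}$ of the unconstrained $\epsilon$-problem, with $\xi=\eta+\frac{R(1-S)}{2S}\sigma^2\epsilon^2$. A direct substitution shows that $\bar{J}$ solves the unconstrained HJB associated with (\ref{eq:wealth-epsilon}): the Merton maximiser contributes $\frac{\kappa}{R}$, the consumption supremum contributes $\frac{S}{1-S}\xi$, and the background term $\tfrac12\sigma^2\epsilon^2 x^2\bar{J}_{xx}$ supplies precisely the $\epsilon$-dependent correction, and these balance exactly for the displayed value of $\xi$. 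Because dropping the constraint only enlarges the feasible set, $\bar{J}$ dominates the Hamiltonian for every admissible $(\pi^{\epsilon},c^{\epsilon})$, so a standard verification argument — applying It\^o to $e^{-\delta\nu t}$ times $\bar{J}$ along the state and using the transversality guaranteed by $\nu\in(0,1)$, exactly as in Theorem~8.1 of \cite{HHJ23b} — yields $V_0^{c^{\epsilon}}\le\bar{J}(x)$ for all strategies, hence $J^{\epsilon}\le\bar{J}$.

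Strict monotonicity follows as in Proposition~\ref{pro:basic pro}: for $x_1<x_2$ the admissible consumption sets are nested, and the surplus wealth can be consumed at a strictly positive extra rate, so strict monotonicity of $f$ in $c$ forces $J^{\epsilon}(x_1)<J^{\epsilon}(x_2)$. Concavity is the substantive point, and I would reproduce the transformation argument of Proposition~\ref{pro:ez-concave} in the product space. Linearity of (\ref{eq:wealth-epsilon}) in $(X^{\epsilon},\pi^{\epsilon},c^{\epsilon})$ together with concavity of $g$ still gives $\lambda\mathcal{A}^{\epsilon}(x_1)+(1-\lambda)\mathcal{A}^{\epsilon}(x_2)\subset\mathcal{A}^{\epsilon}(\lambda x_1+(1-\lambda)x_2)$. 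In the regime $R,S\in(0,1)$, so $\rho<0$, the martingale representation now carries two orthogonal integrands against $B$ and $\tilde{B}$, and after the substitution $\tilde{V}=V^{1-\rho}$ the driver acquires the quadratic penalty $\frac{\rho}{2(1-\rho)}|\tilde{z}|^2/\tilde{v}$ summed over both integrands; since $\rho<0$ this remains jointly concave in all its arguments, so the comparison via Lemma~C3 of \cite{SS99} goes through unchanged, and strict concavity follows from the growth bound exactly as before.

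The delicate step is the concavity argument: one must confirm that introducing the second, independent Brownian integrand does not destroy the joint concavity of the transformed driver, and that the convex-combination embedding of admissible strategies survives the multiplicative background noise $\sigma\epsilon X_t^{\epsilon}d\tilde{B}_t$. Once these two points are verified, the remaining estimates are routine adaptations of Propositions~\ref{pro:basic pro}--\ref{pro:ez-concave}.
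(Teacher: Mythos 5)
Your proposal follows the paper's proof essentially verbatim in all four components: the lower bound via the explicit strategy $(\pi^{\epsilon},c^{\epsilon})=(0,rX^{\epsilon})$ and the self-similar ansatz $V_t^{c^{\epsilon}}=Ae^{-\delta\nu t}(X_t^{\epsilon})^{1-R}$, the upper bound by relaxing the constraint and verifying the unconstrained candidate as in Theorem 8.1 of \cite{HHJ23b}, concavity by the $\tilde V=V^{1-\rho}$ transformation of Proposition \ref{pro:ez-concave} carried into the product space, and monotonicity from nestedness of the admissible sets. One caveat: your (correct) moment identity $\bar{\mathbb{E}}[(X_s^{\epsilon})^{1-R}\mid\bar{\mathcal{F}}_t]=(X_t^{\epsilon})^{1-R}e^{-\frac12 R(1-R)\sigma^2\epsilon^2(s-t)}$, pushed through the fixed-point relation, yields $A=r^{1-R}\bigl(\delta+\tfrac{1}{2\nu}\sigma^2\epsilon^2R(1-R)\bigr)^{-\nu}$ with a \emph{plus} sign, i.e.\ a slightly smaller constant than the one displayed in \eqref{eq:v-epsilon-bounded}; this sign discrepancy is present in the paper's own proof as well, and the resulting $\epsilon$-uniform positive lower bound is all that is needed downstream, but you should not assert that the computation "gives the stated lower bound" without flagging it.
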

\begin{proof}
In the auxiliary problem, for sufficiently small $\epsilon$ such that $\delta \nu>\frac{1}{2}\epsilon^2R(1-R)$, we can choose $(\pi^{\epsilon},c^{\epsilon})=(0,rX_t^{\epsilon})\in\mathcal{A}^{\epsilon}(x)$. In this case, $X_t^\epsilon=x\exp{(\sigma \epsilon \tilde{B}_t-\frac{1}{2}\sigma^2\epsilon^2 t)}>0$ for $x>0$. To prove the lower bound, we conjecture that $$V_t^{c^{\epsilon}}=Ae^{-\beta t}\frac{(X_t^{\epsilon})^{1-R}}{1-R}, ~~t\geq0$$ is the solution to $$V_t^ {c^{\epsilon}}=\bar{\mathbb{E}}\left[\int_{t}^{\infty}e^{-\delta s} f(c^{\epsilon}_{s},V_{s}^{c^{\epsilon}})ds~|~\mathcal{F}_t~\right], ~~t\geq0. $$ Substituting the explicit form of $X_t^\epsilon$ into it yields $\beta=\delta\nu$ and 
$$A=\left(\delta-\frac{1}{2\nu}\sigma^2\epsilon^2R(1-R)\right)^{-\nu}r^{1-R}.$$ The concavity is similar to Proposition \ref{pro:ez-concave}. Moreover, the fact that  $J^\epsilon$  admits the above lower bound implies that it is strictly increasing and strictly concave. The upper bound of \eqref{eq:v-epsilon-bounded} follows from an argument analogous to that of Theorem 8.1 in \cite{HHJ23b}.
\end{proof}

%Moreover, the proof of the following proposition is similar to proving Theorem \ref{th-unique}, and we omit its proof.

\begin{proposition}
\label{prop:smooth-epsilon}
  The function $J^{\epsilon}$ is the unique viscosity solution of the following HJB equation, i.e., for all $x>0$, 
\begin{align}
\label{eq:HJB-epsilon}
\delta \nu J^{\epsilon}(x)=&\sup_{|\pi|\leq g(x)}[\pi(\mu-r)J_x^{\epsilon}+{1\over 2}\sigma^2(\pi^2+\epsilon^2x^2) J_{xx}^{\epsilon}]+\sup_{c\geq 0}[f(c,J^{\epsilon})-cJ_x^{\epsilon}]+rxJ_x^{\epsilon}.%, ~~  x>0.
\end{align}
Moreover, $J^{\epsilon}$ is also the unique $C^2$ smooth solution of \eqref{eq:HJB-epsilon}.
\end{proposition}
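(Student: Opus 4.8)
The plan is to establish the two assertions in turn, treating the auxiliary diffusion $\sigma\epsilon X^\epsilon_t\,d\tilde B_t$ as a vanishing viscosity that renders the spatial operator uniformly elliptic away from the origin. The viscosity part parallels Section~\ref{sec:viscosity} almost verbatim, while the genuinely new content is the $C^2$ regularity, which I would obtain by solving an auxiliary Dirichlet problem on each compact subinterval and identifying its classical solution with $J^\epsilon$ through the comparison principle.

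For the viscosity statement, I would first introduce truncated value functions $J^{\epsilon,N}$ exactly as in \eqref{valueN}, prove the corresponding dynamic programming principle (Proposition~\ref{pro: dpp}) for the two-dimensional noise $(B,\tilde B)$ — the argument is unchanged, since the translation and flow identities of Section~\ref{sec:dpp} hold componentwise — and deduce as in Theorem~\ref{existenceN} that $J^{\epsilon,N}$ solves the truncated analogue of \eqref{eq:HJB-epsilon}. Letting $N\to\infty$ and using $J^{\epsilon,N}\to J^\epsilon$ together with convergence of the truncated Hamiltonians gives, as in Theorem~\ref{existence}, that $J^\epsilon$ is a viscosity solution of \eqref{eq:HJB-epsilon}. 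Uniqueness follows from a comparison principle of the type of Theorem~\ref{12052}: the only new term is $\tfrac12\sigma^2\epsilon^2(x_0^2X-y_0^2Y)$, which I would control by the Crandall--Ishii matrix inequality \cite{c92} together with $\lambda|x_0-y_0|^2\to0$ from Lemma~\ref{lemma:unqueness}, so that the extra elliptic term contributes nonpositively in the limit $\lambda\to\infty$ and the same contradiction is reached.

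For smoothness, the key observation is that for $x>0$ the diffusion coefficient satisfies $\sigma^2(\pi^2+\epsilon^2x^2)\ge\sigma^2\epsilon^2x^2>0$, so on any compact interval $[a,b]\subset(0,\infty)$ the Hamiltonian in \eqref{eq:HJB-epsilon}, written as $F(x,k,p,q)$, is uniformly elliptic, with $\partial F/\partial q\in[\tfrac12\sigma^2\epsilon^2a^2,\ \tfrac12\sigma^2(g(b)^2+\epsilon^2b^2)]$ on $[a,b]$, and $F$ is convex in $(p,q)$ as a supremum of affine maps. By the bounds \eqref{eq:v-epsilon-bounded}, strict monotonicity and strict concavity, $J^\epsilon$ and $J^\epsilon_x$ remain in a compact subset of $(0,\infty)$ over $[a,b]$; hence the consumption term $\sup_{c\ge0}[f(c,k)-cp]=\tfrac{S}{1-S}((1-R)k)^{\rho/S}p^{1-1/S}$ and its derivatives are smooth and bounded there. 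I would then solve the Dirichlet problem for \eqref{eq:HJB-epsilon} on $[a,b]$ with boundary data $w(a)=J^\epsilon(a)$, $w(b)=J^\epsilon(b)$, obtaining a classical $C^{2,\alpha}$ solution $w$ from the interior Schauder and Evans--Krylov estimates for convex uniformly elliptic Bellman equations \cite{FS06}. Since $w$ is in particular a viscosity solution agreeing with $J^\epsilon$ at $\{a,b\}$, the comparison argument of Theorem~\ref{12052}, applied on $[a,b]$, forces $w\equiv J^\epsilon$; as $[a,b]$ is arbitrary, $J^\epsilon\in C^2(\mathbb{R}_{++})$. Uniqueness among $C^2$ solutions is then immediate, since any such solution is a viscosity solution and therefore coincides with $J^\epsilon$ by the comparison principle.

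The main obstacle I anticipate is the regularity step, for two reasons. First, the constraint $|\pi|\le g(x)$ produces a switch between the interior maximizer and the boundary value $\pm g(x)$, so $F$ is only Lipschitz, not $C^1$, in $q$ across the free boundary; this is nevertheless compatible with $C^{2,\alpha}$ interior regularity because $F$ is convex in the Hessian, which is exactly what the Evans--Krylov theory requires. Second, I must keep $k=J^\epsilon$ and $p=J^\epsilon_x$ strictly positive and finite to prevent the non-Lipschitz consumption term from degenerating; this holds on compacts of $(0,\infty)$ but fails at $x=0$, which is precisely why regularity is asserted only on $\mathbb{R}_{++}$ and why the localized Dirichlet construction, rather than a global one, is the right device.
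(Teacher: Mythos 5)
Your proposal is correct and follows essentially the same route as the paper: the viscosity part repeats the truncation/DPP/comparison machinery of Theorem \ref{th-unique}, and the smoothness part rests on the uniform ellipticity supplied by the $\frac{1}{2}\sigma^2\epsilon^2x^2$ term together with the Krylov--Evans regularity theory for convex uniformly elliptic Bellman equations (the paper simply cites \cite{K87}, while you spell out the standard localized Dirichlet construction and identification via comparison that this citation implicitly invokes). Your added remarks on the Lipschitz-only dependence of the Hamiltonian in $q$ and on keeping $J^\epsilon$, $J^\epsilon_x$ away from degeneracy on compacts of $(0,\infty)$ are accurate and consistent with the paper's restriction of regularity to $\mathbb{R}_{++}$.
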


%Indeed, by solving the HJB equation \eqref{eq:HJB-epsilon} without constraints, one can find that $J^\epsilon$ satisfies the sublinear growth condition, i.e.,
%\begin{align}\label{eq:v-epsilon-bounded}
%\delta^{-\nu}\frac{(rx)^{1-R}}{1-R}\leq J^{\epsilon}(x)\leq (\eta+\frac{R(1-S)}{2S}\sigma^2\epsilon^2)^{-\nu S}\frac{x^{1-R}}{1-R},~~~x\geq0.
%\end{align}Besides, $J^\epsilon$ is also strictly increasing and strictly concave  on $\mathbb{R}_+$.

\begin{proof}
In the first part, the proof follows a similar argument to Theorem \ref{th-unique}, showing that $J^{\epsilon}$ is the unique viscosity solution to the HJB equation \eqref{eq:HJB-epsilon}. We omit the details here. For the second part, we note that equation \eqref{eq:HJB-epsilon} satisfies the uniform ellipticity condition after adding $\frac{1}{2}\sigma^2 \epsilon^2 x^2$ in the local extension. Therefore, by \cite{K87}, $J^{\epsilon}$ is also the unique $C^2$ smooth solution of \eqref{eq:HJB-epsilon}.
\end{proof}

The following proposition is rather technical and essential to the uniqueness theorem.

\begin{proposition}
\label{prop:smooth}
For any interval $[x_1,x_2]\subset (0,+\infty)$, there exists positive constants $K_1,K_2,K_3$ (not depending on $\epsilon$), such that for all $x\in [x_1,x_2]$ and  $\epsilon>0$, we have
 \begin{eqnarray}\label{eq:bound sec der}
 |J_{xx}^{\epsilon}(x)|\leq K_1, ~~%~~\forall x\in [x_1,x_2], ~~\forall \epsilon>0.\\
 J(x)\in[K_2,K_3],~~%~~\forall x\in [x_1,x_2], ~~\forall \epsilon>0.\nonumber\\
 J_x(x)\in[K_2,K_3]. %~~\forall x\in [x_1,x_2], ~~\forall \epsilon>0. \nonumber
 \end{eqnarray}
\end{proposition}
\proof 
For later purposes, we define an auxiliary function $\xi$ and two more intervals $[y_1,y_2],[z_1,z_2]$ such that $[x_1,x_2]\subset [y_1,y_2]\subset [z_1,z_2]\subset(0,+\infty)$.
 We assume $\xi:\mathbb{R}_+\rightarrow \mathbb{R}_+$ satisfies
 \begin{itemize}
     \item[(i)] $\xi\in C_0^{\infty}$;
     \item[(ii)] $\xi \equiv 1$ on $[x_1,x_2]$, $\xi\equiv 0$ on $[y_1,y_2]^c$ and $0\leq \xi\leq 1$ otherwise;
     \item[(iii)] $|\xi_x|\leq M\xi^p$,  $|\xi_{xx}|\leq M\xi^p$ with some $p\in(0,1)$ and $M>0$.
 \end{itemize}

For the sake of simplicity, we suppress the $\epsilon$ notation until the end of this proposition.   Define a function $Z:[z_1,z_2]\rightarrow \mathbb{R}$ by \begin{equation*}
Z(x)=\xi^2(x)J^2_{xx}(x)+\lambda_1 J_x^2(x)-\lambda_2 J(x), ~~x\in[z_1,z_2],
\end{equation*}
where $\lambda_1,\lambda_2>0$ to be determined later. Furthermore,
by using the estimations \eqref{eq:v-epsilon-bounded} and the similar technique of Proposition 2 in \cite{XY16}, it implies that there exist two positive constants $K_2$ and $K_3$ (not depending on $\epsilon$, only on $[z_1,z_2]$) such that
\begin{equation}
\label{eq: bound for V and V_x}
J_x(x)\in[K_2,K_3] \text{~~~and~~~} J(x)\in[K_2,K_3], ~~~~\forall x\in [x_1,x_2].
\end{equation}

Next, we split into several cases with respect to the maximum of $Z$ on $[z_1, z_2]$:

{\bf Case 1:} $Z$ achieves its maximum at $x_0\notin \text{supp}\ \xi$, then we have
\begin{equation*}
J^2_{xx}(x)+\lambda_1 J_x^2(x)-\lambda_2 J(x) \leq Z(x_0)= \lambda_1 J_x^2(x_0)-\lambda_2 J(x_0),  ~~\forall x\in[z_1,z_2].
\end{equation*}
Since $J$ and $J_x$ are both positive, we therefore have
\begin{equation*}
J_{xx}^2(x)\leq \lambda_1J^2_x(x_0)+\lambda_2 J(x), ~~\forall x\in[x_1,x_2].
\end{equation*}
By utilizing (\ref{eq: bound for V and V_x}), we have reached (\ref{eq:bound sec der}).

{\bf Case 2:} The function $Z$ reaches its maximum at $x_0\in \text{supp}~\xi\subset[y_1, y_2]$.
Since $J_x>0$ and $J_{xx}<0$, it is clear that $-{\mu-r\over \sigma^2}{J_x(x)\over J_{xx}(x)}>0$.  Therefore, $|-{\mu-r\over \sigma^2}{J_x(x)\over J_{xx}(x)}|={\mu-r\over \sigma^2}{J_x(x)\over J_{xx}(x)}$ and we only need to consider the following two cases:

{\bf Case 2 (a):} The case
\begin{equation*}
x_0 \in \left\{x\in[y_1,y_2]:-{\mu-r\over \sigma^2}{J_x(x)\over J_{xx}(x)}\geq g(x)\right\}.
\end{equation*}
In this situation, by the increasing property of $g$, it implies that
$$|J_{xx}(x_0)|\leq \frac{\mu-r}{\sigma^2}\frac{J_x(x)}{g(x_1)}\leq \frac{(\mu-r)K_3}{\sigma^2g(x_1)},~~~\forall x\in[x_1,x_2].$$

On the other hand, one has
\begin{equation*}
J^2_{xx}(x)+\lambda_1 J_x^2(x)-\lambda_2 J(x) \leq Z(x_0)= J^2_{xx}(x_0)+\lambda_1 J_x^2(x_0)-\lambda_2 J(x_0),  ~~\forall x\in[z_1,z_2].
\end{equation*}It implies that
$$J_{xx}^2(x)\leq J^2_{xx}(x_0)+\lambda_1J^2_x(x_0)+\lambda_2 K_3, ~~\forall x\in[x_1,x_2].$$
Thus, in view of the boundedness of $J_{xx}(x_0)$ and $J_x(x_0)$ (both not depending on $\epsilon$), finally \eqref{eq:bound sec der} holds.

{\bf Case 2 (b):} The case
\begin{equation*}
x_0 \in A:=\left\{x\in[y_1,y_2]:-{\mu-r\over \sigma^2}{J_x(x)\over J_{xx}(x)}<g(x)\right\}.
\end{equation*}
The proof of Case 2 (b) is rather lengthy, so we defer it to Appendix B.  \hfill$\Box$

\begin{proof}[Proof of Theorem \ref{th-c2}]

For every $\epsilon> 0$, we consider the function $J^{\epsilon}(x)$. By Proposition \ref{prop:smooth-epsilon}, $J^{\epsilon}(x)$ is the $C^2$ smooth solution of equation (\ref{eq:HJB-epsilon}). Therefore, by Proposition \ref{prop:smooth}, we obtain
$$-{\mu-r\over \sigma^2}{J_x^{\epsilon}(x)\over J_{xx}^{\epsilon}(x)}\geq \min\left(g(x_1),{\mu-r\over \sigma^2}{K_2\over K_1}\right)>0,  \quad \forall x\in [x_1,x_2],$$ 
where $K_1$ and $K_2$ are the two upper bound in Proposition \ref{prop:smooth}, independent of $\epsilon$. The smoothness of $J$ then follows from an argument similar to that in Theorem 5.1 of \cite{Z94}.  
%By letting $\epsilon \rightarrow 0$,we obtain:
%\be
%-{\mu-r\over \sigma^2}{J_x(x)\over J_{xx}%(x)}\geq \min\left(g(x_1),{\mu-r\over \sigma^2}{K_2\over k_7}\right)>0 \quad, \forall x\in [x_1,x_2].
%\ee
%\textcolor{red}{why?  We need to explain more on these parameters with the above proposition. }
%By Krylov (1987), this uniformly elliptic property implies the smoothness of $J$. 
%Since $J_x > 0, J_{xx} < 0$, the optimal consumption and investment strategy follows from the HJB equation (\ref{eq:HJB}). 
%We therefore complete the proof.
\end{proof}
\section{The linear leverage }
\label{sec:linear}
In this section, we examine in detail the linear specification of the leverage bound, namely, $$g(x)=kx+L, ~~x\geq0,  ~\text{where}~~k\geq0,~L\geq0.$$
Alternatively, when $k > 0$, we may write $g(x)=k(x+\bar{L})$, $x\geq0$, where $\bar{L}:=L/k\geq0$.  We present explicit solutions in certain special cases and precisely characterize the constrained and unconstrained regions. Furthermore, we establish additional properties of the optimal value function and the optimal consumption–investment strategies, particularly in comparison to the benchmark case with a leverage constraint. Finally, we conduct a comparative analysis with respect to the risk aversion parameter and the EIS.

Note that when $L=+\infty$ or $k=+\infty$, it corresponds to the case without borrowing constraints, and the optimal strategy coincides with the results in \cite{HHJ23b}. Therefore, we restrict our attention to finite values of $L$ and $k$.
Moreover, if $\frac{\mu-r}{R\sigma^2} \le k$ holds, the leverage constraint $|\pi| \leq kX+L$ is not binding and the optimal solution to \eqref{eq:ez-problem} coincides with that of the unconstrained problem \eqref{eq:ez-problem-no-constraints}–\eqref{eq:ez-without constraints}. Therefore, in the sequel we assume that
\begin{align}\label{eq:k-range}
  \frac{\mu-r}{R\sigma^2}>k,
\end{align}
so that the borrowing constraint is binding. Condition \eqref{eq:k-range} is satisfied when the expected excess return on the stock is large, or when the volatility, risk aversion, or the parameter $k$ is sufficiently small.

\subsection{A proportional leverage bound}
In this subsection, we explicitly solve the problem with a proportional leverage bound, that is, $g(x) = kx$ for $k > 0$. 

%In the presence of borrowing constraints, one natural candidate for an
%optimal investment strategy is the myopic strategy where the investor
%invests the minimum of  what she would invest
%without the borrowing constraint and  the maximum level of investment $kX+L$.  In other words, the myopic investment strategy is
%\begin{align}\label{eq:ez-myopic}
%    \pi^{\text{Myoptic}}(X)=\min\left\{\frac{\mu-r}{R\sigma^2}X, ~~kX+L \right\}.
%\end{align}Similar to \cite{VZ97}, we find that the myoptic strategy is not the optimal strategy in general, because
%the optimal strategy is determined by Theorem \ref{th-strategy}.

The following proposition solves the problem in this proportional leverage situation. 

\begin{proposition}\label{pro:L=0}
Suppose $L=0$ and  $0<k<\frac{\mu-r}{R\sigma^2}$. Then, for the problem \eqref{eq:ez-problem}, the optimal value function and the optimal strategy are given, respectively, by
\begin{align}
J^0(x)&=\lambda^0\cdot {x^{1-R}\over 1-R}, ~~x>0,\label{eq:L=0 J0}\\
\pi^0(X)& =kX,\label{eq:L=0 pi0}\\
c^0(X)&=\eta^0X, \label{eq:L=0 c0}
\end{align}
where 
\begin{align*}
    \eta^0:={(S-1)\over S}\big(k(\mu-r)-{1\over 2}\sigma^2 R k^2+r-{\delta\nu\over 1-R}\big),
\end{align*}
and
\begin{align*}
  \lambda^{0}:=(\eta^{0})^{-\nu S}.  
\end{align*}
In this case, the optimal wealth process is given by the following dynamics. $$X_t=x\exp\left\{\big(r+k(\mu-r)-\eta^0-\frac{1}{2}k^2\sigma^2\big)t+k\sigma B_t\right\},~~\forall t\geq0.$$
\end{proposition}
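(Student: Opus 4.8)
The plan is to proceed by a verification argument built on Theorems \ref{th-unique} and \ref{th-c2}: since $J$ is the \emph{unique} strictly concave, bounded-below viscosity solution of \eqref{eq:HJB}, it suffices to exhibit a function in this class that solves \eqref{eq:HJB} and then to read off the optimal strategies from the feedback formulas of Theorem \ref{th-c2}. I would take the homothetic ansatz $J^0(x)=\lambda^0\, x^{1-R}/(1-R)$, motivated by the scaling structure of the problem under the proportional bound $g(x)=kx$, and aim to pin down the single constant $\lambda^0>0$. Recording $J^0_x=\lambda^0 x^{-R}$, $J^0_{xx}=-R\lambda^0 x^{-R-1}$, and $(1-R)J^0=\lambda^0 x^{1-R}$, I observe that the unconstrained Merton maximizer of the diffusion term is $-\tfrac{\mu-r}{\sigma^2}J^0_x/J^0_{xx}=\tfrac{\mu-r}{R\sigma^2}x$. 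Since \eqref{eq:k-range} gives $k<\tfrac{\mu-r}{R\sigma^2}$, this maximizer exceeds $g(x)=kx$ for every $x>0$; because the bracket is concave in $\pi$ (as $J^0_{xx}<0$), the constrained supremum in $\mathbf H$ is attained at the boundary $\pi=kx$ on all of $(0,\infty)$. Hence verifying \eqref{eq:HJB} for $J^0$ reduces to verifying the constrained ODE \eqref{eq:J-B} with $g(x)=kx$.

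The core computation is to substitute the ansatz into \eqref{eq:J-B}. The key point is a homogeneity check: using $\rho=(S-R)/(1-R)$, the $x$-exponent of the nonlinear term $\tfrac{S}{1-S}((1-R)J^0)^{\rho/S}(J^0_x)^{1-1/S}$ computes to exactly $1-R$, so every term is proportional to $x^{1-R}$ and the ODE collapses to an algebraic identity after dividing by $\lambda^0 x^{1-R}$. Writing the $\lambda^0$-exponent of the nonlinear term as $(\rho+S-1)/S=1-1/(\nu S)$ and matching coefficients yields $(\lambda^0)^{-1/(\nu S)}=\eta^0$ with $\eta^0$ as stated, i.e.\ $\lambda^0=(\eta^0)^{-\nu S}$. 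The same homogeneity makes the feedback maps of Theorem \ref{th-c2} collapse to linear ones: $c^*(X)=(J^0_x)^{-1/S}((1-R)J^0)^{\rho/S}=(\lambda^0)^{-1/(\nu S)}X=\eta^0 X$ and $\pi^*(X)=\min\{\tfrac{\mu-r}{R\sigma^2}X,\,kX\}=kX$. Inserting these into \eqref{eq:wealth} gives the linear SDE $dX_t=X_t[(r+k(\mu-r)-\eta^0)\,dt+k\sigma\,dB_t]$, whose solution is the claimed geometric Brownian motion.

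The main obstacle is to confirm that $\eta^0>0$, which is what makes $\lambda^0$ a well-defined positive constant and places $J^0$ in the admissible class (strictly concave, positive, sublinearly growing, with $J^0(0)=0$) required by the comparison principle. I would establish this by monotonicity in $k$: setting
\[
B(k):=k(\mu-r)-\tfrac12\sigma^2Rk^2+r-\tfrac{\delta\nu}{1-R},
\]
one has $B'(k)=(\mu-r)-\sigma^2Rk>0$ on $[0,\tfrac{\mu-r}{R\sigma^2})$, and a direct substitution (using $\tfrac{\delta\nu}{1-R}=\tfrac{\delta}{1-S}$ and $\kappa=(\mu-r)^2/(2\sigma^2)$) shows $B(\tfrac{\mu-r}{R\sigma^2})=\tfrac{S}{S-1}\eta$. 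Since the standing hypotheses $0<R<1$ and $\nu\in(0,1)$ force $S<1$, and since $\eta>0$ by \eqref{eq:eta}, this boundary value is negative, so $B(k)<0$ for $k<\tfrac{\mu-r}{R\sigma^2}$ and therefore $\eta^0=\tfrac{S-1}{S}B(k)>0$.

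With $\eta^0>0$ in hand, $J^0$ belongs to the uniqueness class, Theorem \ref{th-unique} gives $J=J^0$, and Theorem \ref{th-c2} yields the stated optimal feedback controls and wealth dynamics, completing the argument. I expect the substitution-and-matching in the second paragraph to be purely mechanical once the homogeneity degree is verified, so the only genuinely delicate point is the sign analysis for $\eta^0$; framing it through the monotone boundary value $B(\tfrac{\mu-r}{R\sigma^2})=\tfrac{S}{S-1}\eta$ neatly reduces it to the already-imposed condition $\eta>0$.
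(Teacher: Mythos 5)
Your proposal is correct and follows essentially the same route as the paper: verify the homothetic ansatz against the constrained HJB equation (noting that $k<\frac{\mu-r}{R\sigma^2}$ forces the portfolio supremum onto the boundary $\pi=kx$), match coefficients to get $\lambda^0=(\eta^0)^{-\nu S}$, and invoke Theorems \ref{th-unique} and \ref{th-c2} for uniqueness and the feedback formulas. Your positivity check for $\eta^0$ via the endpoint value $B\bigl(\tfrac{\mu-r}{R\sigma^2}\bigr)=\tfrac{S}{S-1}\eta$ is equivalent to the paper's identity $\eta^0=\eta+\tfrac{1-S}{S}\bigl(\tfrac{\kappa}{R}-(k(\mu-r)-\tfrac12 Rk^2\sigma^2)\bigr)>\eta>0$.
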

\proof
Firstly, by the definition of $\eta$, we can verify that
$$\eta^0=\eta+\frac{1-S}{S}\left(\frac{\kappa}{R}-(k(\mu-r)-\frac{1}{2}Rk^2\sigma^2)\right).$$
Since $0<k<\frac{\mu-r}{R\sigma^2}$, it implies that $\frac{\kappa}{R}>k(\mu-r)-\frac{1}{2}Rk^2\sigma^2$, and the equality only holds when $k=\frac{\mu-r}{R\sigma^2}$. Therefore, we have that
\begin{align}\label{eq-eta0dayu0}\eta^0>\eta>0.\end{align}

On the other hand, when $L=0$, the corresponding HJB equation becomes:
\begin{align}
\label{eq:HJB-L=0}
\delta\nu J=\sup_{|\pi|\leq kx}[\pi(\mu-r)J_x+{1\over 2}\sigma^2\pi^2 J_{xx}]+\sup_{c\geq 0}[f(c,J)-cJ_x]+rxJ_x, ~~~~~~ x>0.
\end{align}
By Theorem \ref{th-unique} and Theorem \ref{th-c2}, we know that \eqref{eq:HJB-L=0} has a unique smooth solution. By directly calculations, one can easily verify that $J^0$, defined in \eqref{eq:L=0 J0}, solves the equation. The optimal strategy can also be explicitly computed by Theorem \ref{th-c2}.
\qed

%The proposition \ref{pro:L=0} indicates that when $L=0$, the leverage constraints always bind to the optimal investment strategy $\pi^0=\pi^{\text{Myopic}}$. 
The following corollary shows that, compared with the Epstein–Zin benchmark model without constraints, the investor allocates a smaller proportion of wealth to the risky asset and a larger proportion to consumption.

\begin{corollary}
\label{cor:L=0}
Suppose $L=0$ and  $0<k<\frac{\mu-r}{R\sigma^2}$. For the problem \eqref{eq:ez-problem}, then the optimal strategies satisfy $\pi^0(x)<\pi^{ez}(x)$ and $c^0(x)>c^{ez}(x)$ for all $x>0$. Moreover, ${\cal B}^* = (0, +\infty).$
\end{corollary}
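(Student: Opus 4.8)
The plan is to read off all three assertions directly from the explicit expressions obtained in Proposition \ref{pro:L=0}, since in this proportional case the value function $J = J^0$ is an exact power function; no comparison-principle machinery is needed, and the content is entirely elementary once the closed form is in hand. I would first dispatch the investment inequality: by Proposition \ref{pro:L=0} the constrained optimal risky holding is $\pi^0(x) = kx$, while \eqref{eq:ez-without constraints} gives the unconstrained Epstein-Zin holding $\pi^{ez}(x) = \frac{\mu-r}{R\sigma^2}x$. Under the standing assumption \eqref{eq:k-range}, i.e. $k < \frac{\mu-r}{R\sigma^2}$, we have $kx < \frac{\mu-r}{R\sigma^2}x$ for every $x > 0$, which is precisely $\pi^0(x) < \pi^{ez}(x)$.

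For the consumption comparison, Proposition \ref{pro:L=0} gives the constrained rate $c^0(x) = \eta^0 x$ and \eqref{eq:ez-without constraints} gives the unconstrained rate $c^{ez}(x) = \eta x$. The inequality $c^0(x) > c^{ez}(x)$ for all $x > 0$ is therefore equivalent to $\eta^0 > \eta$, which was already established as \eqref{eq-eta0dayu0} in the proof of Proposition \ref{pro:L=0}, again as a consequence of $k < \frac{\mu-r}{R\sigma^2}$.

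Finally I would identify the constrained region. Since $J = J^0(x) = \lambda^0 \frac{x^{1-R}}{1-R}$ with $\lambda^0 > 0$, direct differentiation gives $J_x(x) = \lambda^0 x^{-R}$ and $J_{xx}(x) = -R\lambda^0 x^{-R-1}$, so the unconstrained candidate Merton holding is
\[
-\frac{\mu-r}{\sigma^2}\frac{J_x(x)}{J_{xx}(x)} = \frac{\mu-r}{R\sigma^2}\,x, \qquad x > 0.
\]
Comparing with $g(x) = kx$, assumption \eqref{eq:k-range} yields $\frac{\mu-r}{R\sigma^2}x > kx = g(x)$ for every $x > 0$, so by the definition of $\mathcal{B}$ each point of $(0,\infty)$ lies in the constrained region, i.e. $\mathcal{B}^* = (0,\infty)$. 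The only point requiring any care is that $J^0$ is genuinely the value function $J$ and not merely some solution of \eqref{eq:HJB-L=0}; this is guaranteed by the uniqueness in Theorem \ref{th-unique} together with the smoothness in Theorem \ref{th-c2}, both already invoked in Proposition \ref{pro:L=0}. There is accordingly no substantive obstacle here---the corollary simply records the qualitative economic message that a binding proportional leverage cap uniformly lowers risky exposure and raises consumption.
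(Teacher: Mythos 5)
Your proposal is correct and matches the paper's intent exactly: the corollary is stated without a separate proof precisely because, as you observe, all three claims follow by reading off the explicit formulas of Proposition \ref{pro:L=0} together with \eqref{eq:ez-without constraints}, the inequality $\eta^0>\eta$ from \eqref{eq-eta0dayu0}, and the computation $-\frac{\mu-r}{\sigma^2}\frac{J_x}{J_{xx}}=\frac{\mu-r}{R\sigma^2}x>kx=g(x)$. Your added remark that $J=J^0$ by the uniqueness and smoothness theorems is the right justification and is exactly what Proposition \ref{pro:L=0} already invokes.
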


\subsection{A constant leverage bound}

%\red{the following to be revised.}

%\blue{When $k=0$, then the function $g$ does not satisfy the Lipschitz condition, $K=0$.  Does it affect the unique solution? $C^2$ solution is also true from the proof procedure, which only used the fact that $g(x)>0$, $x>0$. }

%First, we have:
%\begin{proposition}
 %$J(0)=0$
%\end{proposition}
%\proof This follows directly that $\pi_t=c_t\equiv 0$ is the only admissible control. \qed \\
%We then show that $\lim\limits_{x\rightarrow 0}J_x(x)=+\infty:$
%\begin{proposition}\label{prop: Value function derivative}
    %$\lim\limits_{x\rightarrow 0}J_x(x)=\infty$
%\end{proposition}
%\proof For $X_0=x>0$, choose $\pi_t\equiv 0$, $C_t=mX_t$ for some positive $m$ such that $\delta-(1-S)(r-m)>0$.It is straightforward to see that $J(x)\geq (\delta-(1-S)(r-m))^{-\nu}{(mx)^{1-R}\over 1-R}$, combined with $J(0)=0$, the result follows directly from the definition of the derivative and $R<1$. \qed\\

%A similar argument of \ref{prop:unconstrained is small-k>0 L} leads to the following proposition:

In this subsection, we consider a special case that $g(x) = L$, where $L$ is a positive constant. This implies that the percentage invested in the risky asset satisfies $|\pi_t| \le L$, $\forall t \ge 0.$  The objective here is to characterize both the optimal value function and the optimal consumption–investment strategy under this constant leverage constraint. 

%First,  since the optimal value function is  $C^2$ smooth in Section \ref{sec:smoothness}, we define the unconstrained domain $\mathcal{U}$ and constrained domain $\mathcal{B}$ as
%\begin{equation*}
%\mathcal{U}=\left\{x: \Big{|}-{\mu-r\over \sigma^2}{J_x(x)\over J_{xx}(x)}\Big{|} < L \right\},
%\end{equation*}
%and
%\begin{equation*}
%\mathcal{B}=\left\{x: \Big{|}-{\mu-r\over \sigma^2}{J_x(x)\over J_{xx}(x)}\Big{|}> L \right\}.
%\end{equation*}
%Since $J_x>0$ and $J_{xx}<0$. it is clear that $\Big{|}-{\mu-r\over \sigma^2}{J_x(x)\over J_{xx}(x)}\Big{|}=-{\mu-r\over \sigma^2}{J_x(x)\over J_{xx}(x)}$,

%We explicitly characterize the constrained and unconstrained regions in Theorem \ref{thm:two region problem}. 

\begin{proposition}\label{prop:unconstrained is small}
There exists $x^*>0$  such that $(0,x^*)$ is included in $\cal{U}$ and such that $\pi^*(x^*)=L$.
\end{proposition}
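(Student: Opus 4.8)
The plan is to work throughout with the positive \emph{unconstrained candidate} $\pi^{u}(x):=-\frac{\mu-r}{\sigma^2}\frac{J_x(x)}{J_{xx}(x)}$, which is continuous on $(0,\infty)$ since $J\in C^2(\mathbb{R}_{++})$ with $J_x>0$, $J_{xx}<0$ by Theorem \ref{th-c2} and Proposition \ref{pro:ez-concave}. By Theorem \ref{th-c2} the optimal feedback is $\pi^*(x)=\min\{\pi^{u}(x),L\}$, so $\mathcal U=\{\pi^{u}<L\}$ and $x\in\mathcal U$ exactly when the constraint is slack. I would split the statement into two parts: (a) $\pi^{u}(x)<L$ for all sufficiently small $x>0$, so that $(0,\bar\delta)\subseteq\mathcal U$ for some $\bar\delta>0$; and (b) the set $\{x:\pi^{u}(x)\ge L\}$ is nonempty, so that $x^*:=\inf\{x>0:\pi^{u}(x)\ge L\}$ is finite. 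Granting both, $x^*\in(0,\infty)$, $(0,x^*)\subseteq\mathcal U$ by definition of the infimum, and continuity of $\pi^{u}$ would force $\pi^{u}(x^*)=L$ (it is $\le L$ as a left limit of values $<L$, and $\ge L$ by definition), whence $\pi^*(x^*)=\min\{L,L\}=L$.

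Part (a) is the crux, and I would establish it directly from the HJB rather than from pointwise control of $J_{xx}$. Since $J\in C^2$, equation \eqref{eq:HJB} holds pointwise, and at any $x$ with $\pi^{u}(x)\ge L$ the inner supremum over $|\pi|\le L$ is attained at $\pi=L$ (the quadratic in $\pi$ is concave with vertex at $\pi^u\ge L$), giving
\begin{equation*}
\tfrac{1}{2}\sigma^2L^2\,|J_{xx}(x)|=(\mu-r)L\,J_x(x)+\Phi(x)+rx\,J_x(x)-\delta\nu J(x),
\end{equation*}
where $\Phi(x):=\sup_{c\ge0}[f(c,J(x))-cJ_x(x)]=\tfrac{S}{1-S}((1-R)J(x))^{\rho/S}(J_x(x))^{1-1/S}\ge0$. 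Discarding the nonnegative terms $\Phi(x)$ and $rx\,J_x(x)$ would yield $|J_{xx}(x)|\ge \tfrac{2}{\sigma^2L^2}[(\mu-r)L\,J_x(x)-\delta\nu J(x)]$, and hence, dividing,
\begin{equation*}
\pi^{u}(x)=\frac{\mu-r}{\sigma^2}\frac{J_x(x)}{|J_{xx}(x)|}\le \frac{(\mu-r)L^2}{2\bigl[(\mu-r)L-\delta\nu J(x)/J_x(x)\bigr]}.
\end{equation*}
By Proposition \ref{pro:basic pro}, $J(0)=0$ and $J_x(0^+)=+\infty$; as $J_x$ is decreasing by concavity, $J(x)\to0$ and $J_x(x)\to+\infty$, so $J(x)/J_x(x)\to0$ as $x\to0^+$. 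Thus there is $\bar\delta>0$ with $\delta\nu J(x)/J_x(x)<\tfrac{1}{2}(\mu-r)L$ on $(0,\bar\delta)$, and for such $x$ the bound gives $\pi^{u}(x)<L$, contradicting $\pi^{u}(x)\ge L$. Hence $\pi^{u}<L$ on $(0,\bar\delta)$, i.e. $(0,\bar\delta)\subseteq\mathcal U$. The main obstacle is precisely this behaviour near $x=0$: the interior estimates of Proposition \ref{prop:smooth} do not bound $J_{xx}$ from below near the boundary, and the argument instead reads the required one-sided bound on $|J_{xx}|$ off the constrained HJB, using only the signs of the discarded terms together with $J(x)/J_x(x)\to0$.

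For part (b) I would argue by contradiction. Suppose $\pi^{u}(x)<L$ for every $x>0$. Then at each $x$ the unconstrained maximizer is admissible, so the supremum over $|\pi|\le L$ equals the supremum over $\pi\in\mathbb{R}$, and $J$ solves the \emph{unconstrained} HJB on all of $(0,\infty)$. By uniqueness for the unconstrained problem (equivalently, the explicit benchmark \eqref{eq:benchmark}), this forces $J\equiv J^{ez}$, whence $\pi^{u}(x)=\pi^{ez}(x)=\frac{\mu-r}{R\sigma^2}x$, which exceeds $L$ once $x>\frac{R\sigma^2L}{\mu-r}$, a contradiction. Therefore $\{x:\pi^{u}(x)\ge L\}\neq\emptyset$ and $x^*<\infty$. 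Combining (a) and (b) with the continuity argument of the first paragraph then completes the proof.
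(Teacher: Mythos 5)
Your part (a) is, up to algebraic rearrangement, the paper's own proof. The paper establishes the general linear case (Proposition \ref{prop:unconstrained is small-k>0 L}, of which this statement is the $k=0$ instance) by supposing a sequence $x_n\rightarrow 0$ at which the constraint binds, noting that there the portfolio supremum in the HJB is attained at $\pi=g(x_n)$ and that the binding condition forces $\frac{1}{2}\sigma^2 g(x_n)^2 J_{xx}(x_n)\geq -\frac{1}{2}(\mu-r)g(x_n)J_x(x_n)$, discarding the nonnegative consumption and $rxJ_x$ terms to obtain $\delta\nu J(x_n)\geq \frac{(\mu-r)L}{2}J_x(x_n)$, and contradicting $J(0)=0$, $J_x(0^+)=+\infty$ from Proposition \ref{pro:basic pro}. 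Your bound $\pi^{u}(x)\leq \frac{(\mu-r)L^2}{2\left[(\mu-r)L-\delta\nu J(x)/J_x(x)\right]}$ is exactly this inequality solved for $\pi^{u}$, and your use of $J(x)/J_x(x)\rightarrow 0$ is the same boundary input; this part is correct and identical in substance to the paper's argument.

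Where you go beyond the paper is part (b): the paper's proof stops after the near-zero analysis and is silent on why a finite binding point with $\pi^*(x^*)=L$ exists at all, whereas you supply an explicit argument. Your argument, however, has a real gap at the step ``$J$ solves the unconstrained HJB on all of $(0,\infty)$, hence $J\equiv J^{ez}$ by uniqueness.'' Nothing proved in the paper delivers this: the comparison principle of Theorem \ref{12052} is established only for $g$ finite, increasing, concave and Lipschitz (its proof uses the Lipschitz constant $K$ of $g$ and the limit $\theta g(x_0)\rightarrow 0$), so it does not cover $g\equiv+\infty$; and the alternative route via a verification theorem for the unconstrained Epstein--Zin problem requires checking a transversality condition for $J$ along arbitrary unconstrained admissible strategies, which is precisely the delicate issue in this literature and cannot be waved through. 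The inequality $J\leq J^{ez}$ is free (smaller admissible set), but the reverse inequality is the content of the missing uniqueness claim. So part (b) is plausible and probably repairable (e.g.\ by adapting the verification machinery of \cite{HHJ23b}, or by deriving a direct lower bound on $\pi^{u}$ for large $x$ from the unconstrained ODE together with the growth estimates \eqref{eq:ez-bound}), but as written it rests on an assertion that is not established either in your proposal or in the paper.
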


\begin{proof}
It is a special case of Proposition \ref{prop:unconstrained is small-k>0 L} below for $k = 0$, so we omit its proof here.
\end{proof}

%\proof We prove the result by contradiction, suppose that there exists a sequence $x_n \rightarrow 0$ such that $\pi^*(x_n)=L$. From the equation in the constrained domain and the definition of constrained domain, at $x_n$, we have:
%\begin{align*}
%\delta\nu J\geq {(\mu-r)L\over 2}J_x+{S\over 1-S}{[(1-%R)^{\rho}J^{\rho}]^{1\over S}\over (J_x)^{1-S\over S}}+rxJ_x.
%\end{align*}
%Sending $x_n\rightarrow 0$, the LHS of the above equation tends to zero. However, by Proposition \ref{prop: Value function derivative}, ${(\mu-r)L\over 2}J_x(x_n)\rightarrow +\infty$. We get the desired contradiction,  since ${S\over 1-S}{[(1-R)^{\rho}J^{\rho}]^{1\over S}\over (J_x)^{1-S\over S}}$ and $rxJ_x$ are both non-negative.
%\qed
%\begin{remark}
%Proposition \ref{prop:unconstrained is small} is also valid when $0<k<{\mu-r\over \sigma^2 R}$.
%\end{remark}
%Proposition \ref{prop:unconstrained is small} shows that $(0,x^*)\subset \cal{U}$.

We explicitly characterize the constrained and unconstrained regions in Theorem \ref{thm:two region problem}. 

%Furthermore, there exists $x^{**}$ such that $(x^*,x^{**})\subset \cal{B}$. The next theorem shows that $x^{**}=+\infty$, i.e. the two region property holds.

\begin{theorem}\label{thm:two region problem}
There exists $x^*>0$ such that ${\cal{U}}=(0,x^*)$ and ${\cal{B}}=(x^*,+\infty)$.
\end{theorem}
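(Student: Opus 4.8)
The plan is to recast the statement in terms of the Merton function
$$M(x):=-\frac{\mu-r}{\sigma^2}\,\frac{J_x(x)}{J_{xx}(x)},\qquad x>0,$$
which is continuous and strictly positive on $(0,\infty)$ since, by Theorem~\ref{th-c2} and Proposition~\ref{pro:ez-concave}, $J\in C^2(\mathbb{R}_{++})$ is strictly increasing and strictly concave. With $g\equiv L$ one has $x\in\mathcal{U}\iff M(x)<L$ and $x\in\mathcal{B}\iff M(x)>L$, so the theorem is equivalent to showing that $M(x)-L$ vanishes at a single point $x^*$ and changes sign there from negative to positive. Proposition~\ref{prop:unconstrained is small} already delivers the left half: it produces $x^*>0$ with $(0,x^*)\subseteq\mathcal{U}$, i.e. $M<L$ on $(0,x^*)$, together with $\pi^*(x^*)=\min\{M(x^*),L\}=L$, which by continuity forces $M(x^*)=L$. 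Everything therefore reduces to proving the no--re-entry property $(x^*,\infty)\subseteq\mathcal{B}$.

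It is convenient to track the sign of $M-L$ through the auxiliary quantity $\Psi(x):=\tfrac{\mu-r}{\sigma^2}J_x(x)+L\,J_{xx}(x)$; since $J_{xx}<0$ one checks $M(x)-L=\Psi(x)/|J_{xx}(x)|$, so $\mathrm{sign}(M-L)=\mathrm{sign}(\Psi)$, with $\mathcal{U}=\{\Psi<0\}$, $\mathcal{B}=\{\Psi>0\}$ and contact points $\{\Psi=0\}$. The key claim is a \emph{single--crossing} (strictly upward) property: at every $\bar{x}>0$ with $\Psi(\bar{x})=0$ the function $\Psi$ passes strictly from negative to positive. Granting this, no bounded component $(\alpha,\beta)$ of $\mathcal{U}$ with $\alpha\ge x^*$ can occur, because its left endpoint $\alpha$ would be a \emph{downward} contact ($\Psi>0$ to the left, $\Psi<0$ inside); likewise an unbounded slack tail $(\gamma,\infty)\subseteq\mathcal{U}$ would require a downward contact at $\gamma$; both are excluded. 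Combined with the upward contact already present at $x^*$, this yields $\Psi>0$ throughout $(x^*,\infty)$, i.e. $\mathcal{B}=(x^*,\infty)$ and $\mathcal{U}=(0,x^*)$.

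The single--crossing property is the technical heart, and I expect it to be the main obstacle. To prove it I would work on the constrained side, where $J$ solves the elliptic ODE \eqref{eq:J-B} with $g\equiv L$, and invoke the comparison principle for that operator. Differentiating \eqref{eq:J-B} once produces a linear equation for $\Psi$ whose zeroth--order coefficient is governed by the derivatives of the Epstein--Zin term $\tfrac{S}{1-S}((1-R)J)^{\rho/S}(J_x)^{1-1/S}$; the decisive structural inputs are $\rho<0$ and $\partial f/\partial v\le 0$, which fix the sign of this coefficient and force $\Psi$ to obey a maximum principle ruling out an interior sign change of the forbidden type. A delicate point is that $J$ is only $C^2$ and not $C^3$ across a contact, so the third derivative entering $\Psi'$ is one--sided; the comparison argument must therefore be run on the closed constrained region using the correct one--sided data, and it is precisely here that the non-Lipschitz character of the aggregator prevents the classical time--separable argument of \cite{VZ97} and \cite{Z94} from being quoted verbatim.

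Finally, I would assemble the pieces, or equivalently reconstruct $J$ by patching: on $(0,x^*)$ keep the unconstrained solution of \eqref{eq:J-U1}, and on $(x^*,\infty)$ take the solution of the constrained ODE \eqref{eq:J-B} issued from the $C^2$ smooth--fit data at $x^*$. The single--crossing estimate guarantees $M>L$ on $(x^*,\infty)$, so the patched function is strictly concave, bounded from below, and a viscosity solution of the full HJB equation \eqref{eq:HJB} with the constraint binding exactly on $(x^*,\infty)$; by the uniqueness in Theorem~\ref{th-unique} it coincides with $J$, which establishes $\mathcal{U}=(0,x^*)$ and $\mathcal{B}=(x^*,\infty)$ with $x^*$ pinned down by the smooth--fit condition.
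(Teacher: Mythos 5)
Your overall architecture coincides with the paper's: your $\Psi(x)=\tfrac{\mu-r}{\sigma^2}J_x+LJ_{xx}$ is exactly the paper's auxiliary function $Y(x)=\mu J_x+\sigma^2 LJ_{xx}$ up to the positive factor $\sigma^2$ (after the same reduction to $r=0$), the left half is delegated to Proposition \ref{prop:unconstrained is small} in both arguments, and the remaining content is in both cases a ``no re-entry into $\mathcal{U}$'' statement proved by a comparison principle for the equation satisfied by $Y$ after differentiating the HJB ODE. Your observation that an unbounded slack tail must also be excluded is a point the paper's contradiction argument treats only implicitly, and your closing patching-plus-uniqueness step is harmless but redundant once no-re-entry is known.

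The genuine gap is at what you yourself call the technical heart. You propose to differentiate \eqref{eq:J-B} \emph{once}, read off a linear equation for $\Psi$, and assert that the sign of its zeroth-order coefficient --- hence the maximum principle ruling out downward contacts --- ``is fixed by $\rho<0$ and $\partial f/\partial v\le 0$.'' That is not enough, and it is not how the sign condition actually arises. The paper must differentiate the ODEs \emph{twice} to obtain genuinely second-order elliptic operators $\mathcal{L}^{\mathcal{U}}$ and $\mathcal{L}^{\mathcal{B}}$ annihilating $Y$ on each region (Lemmas \ref{lemma:example-unconstrained}--\ref{lemma:example-constrained}), and the decisive input is the explicit computation that $h(x):=\mathcal{L}^{\mathcal{U}}[0]=\mathcal{L}^{\mathcal{B}}[0]$ has, after multiplication by the positive factor $(J_x)^{-(1-1/S)}[(1-R)J]^{2-\rho/S}$, the sign of
\begin{equation*}
-\Bigl[(\sigma^2L)^2(R-S)R\,J_x^2-2\mu\sigma^2L(R-S)(1-R)J\,J_x+\mu^2\bigl[(1-R)J\bigr]^2\Bigr],
\end{equation*}
which is shown to be strictly negative by dominating it with the perfect square $-[\sigma^2L(R-S)J_x-\mu(1-R)J]^2$. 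This is a discriminant computation on a specific quadratic form in $(J_x,(1-R)J)$; it uses the precise exponents of the Epstein--Zin consumption term and the relation between $R$ and $S$, and it does not follow from monotonicity or concavity of the aggregator alone. Only with $h<0$ in hand does the constant $0$ become a subsolution of $\mathcal{L}^{\mathcal{U}}[y]=0$ on a putative slack interval $(x^{**},x^{***})$ with $Y(x^{**})=Y(x^{***})=0$, forcing $Y\ge0$ there and contradicting $Y<0$ on $\mathcal{U}$. A further reason your once-differentiated route is fragile is the regularity issue you flag: at a contact point $J_{xxx}$ is only one-sided, so a strict single-crossing inequality for $\Psi'$ there needs separate justification, whereas the paper's second-order comparison argument on the open slack interval never evaluates $Y'$ at a contact point. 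As written, your proposal names the right object and the right kind of principle but does not supply the computation on which the whole proof turns.
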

\begin{proof}
First, by the change of numeraire used in Section 5.2 in \cite{HHJ23a}, without loss of generality, we assume $r=0$. Then the ordinary differential equations for $V$ in the unconstrained and constrained region are,
\begin{align}
\label{eq:V-U1}
\delta\nu {J} = {S\over 1-S}{[(1-R)^{\rho}J^{\rho}]^{1\over S}\over (J_x)^{1-S\over S}} -\kappa \frac{ (J_{x})^2}{{J_{xx}}},\quad  x\in \cal{U}, 
\end{align}
and
\begin{align}
\label{eq:V-B}
\delta\nu {J} = {S\over 1-S}{[(1-R)^{\rho}J^{\rho}]^{1\over S}\over (J_x)^{1-S\over S}}  + \mu L J_x + \frac{1}{2} \sigma^2 L^2 J_{xx}, \quad x\in \cal{B},
\end{align}
respectively. 

Define a function:
\begin{equation*}
Y(x) = \mu J_x(x) + \sigma^2 L J_{xx}(x),\quad x > 0.
\end{equation*}
Then, $Y(x) < 0$, $\forall x \in {\cal U}$, and $Y(x) > 0$ for any $x \in {\cal B}$.

In what follows, we construct the elliptic differential operators on the unconstrained and constrained regions, respectively. These constructions are presented as two lemmas, with the proofs deferred to Appendix C.

\begin{lemma}
\label{lemma:example-unconstrained}
Define an elliptic operator on the unconstrained region by
\begin{align*}
	{\cal L}^{ {\cal U}}[y] \equiv& -\mu\Big\{{S-R\over 1-S}[(1-R)J]^{{\rho\over S}-1}(J_x)^{2-{1\over S}} -{1\over \sigma^2 L}[(1-R)J]^{\rho\over S}J_x^{-{1\over S}}y\\
	&+{\mu \over \sigma^2L}[(1-R)J]^{\rho\over S}J_x^{1-{1\over S}}\Big\}
-\sigma^2L\Bigg({S-R\over 1-S}\Bigg[({\rho\over S}-1)[(1-R)J]^{{\rho\over S}-2}(1-R)J_x^{3-{1\over S}}\\
&+[(1-R)J]^{{\rho\over S}-1}(2-{1\over S})J_x^{1-{1\over S}}{y-\mu J_x\over \sigma^2 L}\Bigg] -{1\over \sigma^2 L}\Big[[{\rho\over S}(1-R)J]^{{\rho\over S}-1}(1-R)J_x^{1-{1\over S}}\\
&-[(1-R)J]^{\rho\over S}{1\over S}J_x^{-1-{1\over S}}J_{xx}\Big]y-{1\over \sigma^2 L}[(1-R)J]^{\rho\over S}J_x^{-{1\over S}}y'\\
    &+{\mu \over \sigma^2L}\Big[{\rho\over S}[(1-R)J]^{{\rho\over S}-1}(1-R)J_x^{2-{1\over S}}+[(1-R)J]^{\rho \over S}(1-{1\over S})J_x^{-{1\over S}}({y-\mu J_x\over\sigma^2 L})\Big]\Bigg)\\
    &- \frac{\kappa ({J_x})^2}{({J_{xx}})^2} y'' -  \frac{2 \kappa J_x J_{xxx}}{(J_{xx})^3} \left\{ \frac{{J_{xx}}}{\sigma^2 L} y - \frac{{J_x}}{\sigma^2 L} y' \right\}  + ( \delta\nu  + 2\kappa)y.
	\end{align*}
Then ${\cal L}^{ {\cal U} }[Y] =  0 $ in ${\cal U}$. 
\end{lemma}

\begin{lemma}
\label{lemma:example-constrained}
    Define an elliptic operator on the constrained region by
\begin{align*}
{\cal L}^{{\cal B}}[y] =& - \frac{1}{2} \sigma^2 L^2 y'' - \mu L y' +    \delta\nu y  -\mu\Big\{{S-R\over 1-S}[(1-R)J]^{{\rho\over S}-1}(J_x)^{2-{1\over S}}\\
& -{1\over \sigma^2 L}[(1-R)J]^{\rho\over S}J_x^{-{1\over S}}y+{\mu \over \sigma^2L}[(1-R)J]^{\rho\over S}J_x^{1-{1\over S}}\Big\}\\
&-\sigma^2L\Bigg({S-R\over 1-S}\Bigg[({\rho\over S}-1)[(1-R)J]^{{\rho\over S}-2}(1-R)J_x^{3-{1\over S}}\\
&+[(1-R)J]^{{\rho\over S}-1}(2-{1\over S})J_x^{1-{1\over S}}{y-\mu J_x\over \sigma^2 L}\Bigg] -{1\over \sigma^2 L}\Big[[{\rho\over S}(1-R)J]^{{\rho\over S}-1}(1-R)J_x^{1-{1\over S}}\\
&-[(1-R)J]^{\rho\over S}{1\over S}J_x^{-1-{1\over S}}J_{xx}\Big]y-{1\over \sigma^2 L}[(1-R)J]^{\rho\over S}J_x^{-{1\over S}}y'\\
    &+{\mu \over \sigma^2L}\Big[{\rho\over S}[(1-R)J]^{{\rho\over S}-1}(1-R)J_x^{2-{1\over S}}+[(1-R)J]^{\rho \over S}(1-{1\over S})J_x^{-{1\over S}}({y-\mu J_x\over\sigma^2 L})\Big]\Bigg).
\end{align*}
Then $ {\cal L}^{{\cal B}}[Y] = 0$ in ${\cal B}.$  Moreover, ${\cal L}^{ {\cal B}}[0]={\cal L}^{ {\cal U}}[0]$. 
\end{lemma}

{\em We return to the proof of Theorem \ref{thm:two region problem}:}

 By straightforward calculation, we obtain
\begin{align*}
\label{question}
{\cal L}^{ {\cal B}}[0]={\cal L}^{ {\cal U}}[0]=
& -\mu\Big\{{S-R\over 1-S}[(1-R)J]^{{\rho\over S}-1}(J_x)^{2-{1\over S}} +{\mu \over \sigma^2L}[(1-R)J]^{\rho\over S}J_x^{1-{1\over S}}\Big\}\\
&-\sigma^2L\Big\{{S-R\over 1-S}\Bigg[({\rho\over S}-1)[(1-R)J]^{{\rho\over S}-2}(1-R)J_x^{3-{1\over S}}\\
&-[(1-R)J]^{{\rho\over S}-1}(2-{1\over S})J_x^{2-{1\over S}}{\mu\over \sigma^2 L}\Bigg] +{\mu \over \sigma^2L}\Big[{\rho\over S}[(1-R)J]^{{\rho\over S}-1}(1-R)J_x^{2-{1\over S}}\\
&-[(1-R)J]^{\rho \over S}(1-{1\over S})J_x^{1-{1\over S}}({\mu \over\sigma^2 L})\Big]\Big\}.
\end{align*}
For simplicity, let
\begin{align*}
h(x) \equiv {\cal L}^{ {\cal B}}[0]={\cal L}^{ {\cal U}}[0].
\end{align*}
By multiplying $(J_x)^{-(1-{1\over S})}\cdot [(1-R)J]^{2-{\rho\over S}}$ and simplifying, we obtain that $h(x)$ has the same sign as, %{\color{red}why? what is $g(x)$ here? is this $h(x)$?}
\begin{align*}
-\Big[({\sigma^2}L)^2({R-S})R J_x^2-2\mu\sigma^2L(R-S)(1-R)JJ_x+{\mu^2}[(1-R)J]^2\Big].
\end{align*}
Observe that
\begin{align*}
&-\Big[({\sigma^2}L)^2({R-S})R J_x^2-2\mu\sigma^2L(R-S)(1-R)JJ_x+{\mu^2}[(1-R)J]^2\Big]\\
<&-\Big[({\sigma^2}L)^2({R-S})^2 J_x^2-2\mu\sigma^2L(R-S)(1-R)JJ_x+{\mu^2}[(1-R)J]^2\Big]\\
=&-[\sigma^2L(R-S)J_x-\mu(1-R)J]^2\leq 0.
\end{align*}
We therefore conclude that $h(x)<0$ for all $x\in \mathbb{R}_{++}$. 

In the end, we prove Theorem \ref{thm:two region problem} by contradiction. Suppose there exists $x^{**}<+\infty$ such that $(x^*, x^{**}) \subseteq {\cal B}$ and $Y(x^{**}) = 0$. Moreover, assume that there exists $x^{***} > x^{**}$ such that $(x^{**}, x^{***}) \subseteq {\cal U}$.  We show that this leads to a contradiction, thereby completing the proof.

We first show that the constant function $y=0$ is {\em not} the supersolution for ${\cal L}^{{\cal B}}[y ]= 0$ in the region $(x^*, x^{**})$. The reason is as follows. Otherwise, since ${\cal L}^{{\cal B}}[Y] = 0$ in the region $(x^*, x^{**}) \subseteq {\cal B}$ and $Y(x^*) = Y(x^{**}) = 0$, then by the comparison principle, $Y(x) \le y=0$ for $x \in (x^*, x^{**})$. However, by its definition of ${\cal B}$, $Y(x) > 0$ for all $x \in (x^*, x^{**})$. This contradiction shows that the constant function $y=0$ is not a supersolution of ${\cal L}^{{\cal B}}[y ]= 0$ in the region $(x^*, x^{**})$.
Therefore, there exists some $x_0 \in (x^*, x^{**})$ such that, at $x=x_0$,
\begin{align*}
{\cal L}^{{\cal B}}[0] = h(x_0) < 0.
\end{align*}

We have shown that $h(x)<0$ for all $x\in (0,+\infty)$. We now consider the region $(x^{**}, x^{***}) \in {\cal U}$ and the operator ${\cal L}^{{\cal U}}$. Since ${\cal L}^{{\cal U}}[0] \le 0$ in this small region, the constant function $y=0$ is the subsolution for ${\cal L}^{{\cal U}}[0 ] = 0$.  Since  $Y(x^{**}) = Y(x^{***}) = 0$, by the comparison principle, we obtain $Y(x) \ge 0$, $\forall x \in (x^{**}, x^{***})$, which is impossible since $Y(x)$ is strictly negative over the region $(x^{**}, x^{***}) \subseteq {\cal U}$, the unconstrained region.

From the above proof, we conclude that $(x^*, +\infty) = {\cal B}$ for some $x^* > 0$,  established via a contradiction argument.
\end{proof}

\begin{proof}[Proof of Theorem \ref{th-linear}]

In the first situation, $L = 0$, it follows from Proposition \ref{pro:L=0} 
and Corollary \ref{cor:L=0}. In the second case where $k=0$, it follows from Theorem \ref{thm:two region problem}. Since we have shown that the optimal value function is the unique $C^2$ smooth solution of the HJB equation (\ref{eq:HJB}), and that the constrained region is characterized uniquely by $(0, x^*)$ for some $x^* > 0$, this threshold $x^*$ and the optimal value function are uniquely determined by the smooth-fit condition at $x^*$. Specifically, in the region $(0, x^*)$, $J(x)$ satisfies the second-order nonlinear ODE (\ref{eq:V-U1}) with  boundary condition $J(0) = 0$ and $J_{x}(0) = +\infty$; while in the region $(x^*, +\infty)$, $J(x)$ satisfies another second-order nonlinear ODE (\ref{eq:V-B}). 
\end{proof}

\begin{rem}In contrast to Proposition \ref{pro:L=0}, there is no explicit solution to the two ODEs in either the unconstrained or the constrained region. This remains true even in the particular case of $R=S$. Our contribution is to uniquely characterize the optimal value function together with $x^*$ through the two ODEs together with the smooth-fit condition. 
\end{rem}

\subsection{A general linear leverage}
In this subsection, we address the general linear leverage bound, that is, $k > 0$ and $L > 0$. While there is no explicit solution in the general case, we present further properties of the optimal value function and the consumption-investment strategy.

%Since the optimal value function is  $C^2$ smooth, the unconstrained domain $\mathcal{U}$ and constrained domain $\mathcal{B}$ as characterized below:
%\begin{equation*}
%\mathcal{U}= \left\{x: -{\mu-r\over \sigma^2}{J_x(x)\over J_{xx}(x)} <kx+L \right\},
%\end{equation*}
%and
%\begin{equation*}
%\mathcal{B}=  \left\{x: -{\mu-r\over \sigma^2}{J_x(x)\over J_{xx}(x)} > kx+L \right\}.
%\end{equation*}
%Since $J_x>0$ and $J_{xx}<0$. it is clear that $\Big{|}-{\mu-r\over \sigma^2}{J_x(x)\over J_{xx}(x)}\Big{|}=-{\mu-r\over \sigma^2}{J_x(x)\over J_{xx}(x)}$, 
%The optimal value function satisfies the following ODE as follows: 
%\begin{align}
%\label{eq:J-U1}
%\delta\nu {J} =-\kappa \frac{ (J_{x})^2}{{J_{xx}}}+ {S\over 1-S}((1-R)J)^{\frac{\rho}{S}} (J_x)^{1-\frac{1}{S}}+rx J_x, ~~~x\in \mathcal{U}
%\end{align}
%and
%\begin{align}
%\label{eq:J-B}
%\delta\nu {J} =(\mu-r) (kx+L) J_x + \frac{1}{2} \sigma^2 (kX+L)^2 J_{xx}+{S\over 1-S}((1-R)J)^{\frac{\rho}{S}} (J_x)^{1-\frac{1}{S}}+rx J_x , ~~~x\in \cal{B}.
%\end{align}

%Besides, for the properties in Proposition \ref{pro:basic pro}, fortunately, some results in \cite{VZ97} for CRRA utility also hold for Epstein-Zin model.

\begin{proposition}\label{pro:kx+l}
Suppose $0<k<\frac{\mu-r}{R\sigma^2}$ and $L=k\bar{L}>0$. Then, for the problem \eqref{eq:ez-problem}, the optimal value function and the associated optimal consumption strategy satisfy the following properties:
\begin{itemize}
    \item[(i)] $J^0(x)\leq J(x)\leq J^{ez}(x)$ and $J(x)\leq J^0(x+\bar{L})$, ~~$\forall x\geq0$;
    \item[(ii)] $\max\left\{c^{ez}(x), c^{0}(x)\left(\frac{x}{x+\bar{L}}\right)^{\frac{1}{S}-1}\right\}\leq c^*(x)\leq c^{0}(x)\left(\frac{x+\bar{L}}{x}\right)^{\frac{1}{S}}$, ~~$\forall x>0$;
    \item[(iii)] $\lim_{x\rightarrow +\infty}\frac{c^*(x)}{x}=\eta^0$;
    \item[(iv)] $\lim_{x\rightarrow 0}c^*(x)=0$.\end{itemize}
\end{proposition}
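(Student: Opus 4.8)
\emph{Part (i).} Writing $\mathcal{A}^0(x)$ for the admissible set under the proportional bound $g(x)=kx$ and $\mathcal{A}_{ez}(x)$ for the unconstrained set, the implications $|\pi|\le kx\Rightarrow|\pi|\le kx+L\Rightarrow$ (no bound) give the nesting $\mathcal{A}^0(x)\subset\mathcal{A}(x)\subset\mathcal{A}_{ez}(x)$, and taking suprema of $V_0^c$ over these sets yields $J^0(x)\le J(x)\le J^{ez}(x)$. For the shifted bound $J(x)\le J^0(x+\bar{L})$ I would use $\hat{X}_t:=X_t+\bar{L}$: a direct computation on \eqref{eq:wealth} shows $\hat{X}$ solves the wealth equation started at $x+\bar{L}$ with the same $\pi$ and consumption $\hat{c}_t:=c_t+r\bar{L}$, while the constraint becomes $|\pi_t|\le k(X_t+\bar{L})=k\hat{X}_t$, i.e. the proportional constraint for $\hat{X}$. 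Hence $(\pi,\hat{c})\in\mathcal{A}^0(x+\bar{L})$; since $\hat{c}\ge c$ and the Epstein--Zin utility is monotone in consumption, $V_0^c\le V_0^{\hat c}\le J^0(x+\bar{L})$, and a supremum completes (i).

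\emph{Part (ii), the core.} I would start from the feedback formula $c^*(x)=(J_x)^{-1/S}((1-R)J)^{\rho/S}$ of Theorem \ref{th-c2}, which requires two-sided control of $J_x$ in terms of $J$. These come from the homogeneity $V^{\alpha c}=\alpha^{1-R}V^c$ (immediate from $f(\alpha c,\alpha^{1-R}v)=\alpha^{1-R}f(c,v)$ and uniqueness of the utility process) combined with two feasibility arguments. Scaling a strategy for $\beta x$ down by $\beta\ge1$ via $\tilde{X}=X/\beta,\ \tilde{\pi}=\pi/\beta,\ \tilde{c}=c/\beta$ respects the constraint because the allowance shrinks to $\bar{L}/\beta\le\bar{L}$, giving $J(\beta x)\le\beta^{1-R}J(x)$, i.e. $y\mapsto J(y)/y^{1-R}$ is non-increasing, equivalently $xJ_x\le(1-R)J$. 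Conversely, for $x_2+\bar{L}=\beta(x_1+\bar{L})$ the rescaling $\pi^2=\beta\pi^1,\ c^2=\beta c^1+r(\beta-1)\bar{L}\ge\beta c^1,\ X^2=\beta X^1+(\beta-1)\bar{L}$ is feasible for $x_2$ and gives $J(x_2)\ge\beta^{1-R}J(x_1)$, i.e. $J(x)/(x+\bar{L})^{1-R}$ is non-decreasing, equivalently $(x+\bar{L})J_x\ge(1-R)J$. Substituting $J_x\ge(1-R)J/(x+\bar{L})$ and $(1-R)J\ge\lambda^0x^{1-R}$ into the formula, and using the identities $(1-R)\rho=S-R$, $(1-R)(\rho-1)=S-1$, $(\lambda^0)^{(\rho-1)/S}=\eta^0$, produces $c^*(x)\le\eta^0 x^{(S-1)/S}(x+\bar{L})^{1/S}=c^0(x)(\tfrac{x+\bar{L}}{x})^{1/S}$. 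Substituting instead $J_x\le(1-R)J/x$ with $(1-R)J\le\lambda^0(x+\bar{L})^{1-R}$ gives $c^*(x)\ge\eta^0 x^{1/S}(x+\bar{L})^{(S-1)/S}=c^0(x)(\tfrac{x}{x+\bar{L}})^{1/S-1}$, while the same step with $(1-R)J\le\eta^{-\nu S}x^{1-R}$ (from $J\le J^{ez}$, using $\eta^{-\nu(\rho-1)}=\eta$) yields $c^*(x)\ge\eta x=c^{ez}(x)$.

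\emph{Parts (iii) and (iv).} Claim (iii) is immediate from (ii): dividing both bounds by $x$, the quantities $\eta^0(\tfrac{x}{x+\bar{L}})^{(1-S)/S}$ and $\eta^0(\tfrac{x+\bar{L}}{x})^{1/S}$ both tend to $\eta^0$ as $x\to\infty$, so $c^*(x)/x\to\eta^0$ by squeezing. Claim (iv) cannot use the (ii) upper bound, which blows up at the origin, so I would sharpen the lower bound on $J_x$ near $0$ by a concavity chord: for $m>1$, $J_x(x)\ge\frac{J(mx)-J(x)}{(m-1)x}\ge\frac{J^0(mx)-J^{ez}(x)}{(m-1)x}=\frac{\lambda^0 m^{1-R}-\eta^{-\nu S}}{(1-R)(m-1)}\,x^{-R}$; choosing $m$ large enough that $\lambda^0 m^{1-R}>\eta^{-\nu S}$ makes the constant $c_m>0$, so $J_x(x)\ge c_m x^{-R}$. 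Inserting this together with $(1-R)J\ge\lambda^0 x^{1-R}$ into the feedback formula gives $c^*(x)\le C x$, whence $c^*(x)\to0$ as $x\to0$.

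\emph{Main obstacle.} The genuine difficulty is precisely the two-sided control of $J_x$ demanded by the feedback formula: $J$ is pinned down only through the envelope of (i), and a naive pointwise comparison fails (indeed $J_x\le J^0_x$ cannot hold globally, since $J\ge J^0$ with $J(0)=J^0(0)=0$ would then force $J\le J^0$). The resolution is to replace pointwise derivative estimates by the two elasticity/monotonicity properties of $J$ obtained from homogeneity and the affine-shift scaling of feasible strategies, which are robust and never invoke the (non-explicit) ODE for $J$. A secondary point requiring care is the monotonicity and homogeneity of the infinite-horizon Epstein--Zin utility with respect to the consumption stream, which I would draw from the well-posedness results (Lemma \ref{lem:bsde-rs<1} and \cite{HHJ23b}).
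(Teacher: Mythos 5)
Your proposal is correct and, for parts (i)--(iii), follows essentially the same route as the paper: set inclusions for the sandwich in (i), the two elasticity inequalities $xJ_x\le(1-R)J$ and $(x+\bar{L})J_x\ge(1-R)J$ obtained from scaling/homogeneity of feasible strategies (the paper phrases these as monotonicity of $J(x;k,L)/x^{1-R}$ and $J(x;k,L)/(x+\bar{L})^{1-R}$ via the identity $J(x;k,L)=x^{1-R}J(1;k,L/x)$, which is the same computation as your $\beta$-rescalings), and then substitution into the feedback formula together with the bounds from (i); (iii) is the same squeeze. The only genuine divergence is (iv): the paper reads off from the HJB equation the inequality $\delta\nu J\ge\frac{S}{1-S}c^*(x)J_x+rxJ_x$ and concludes from $J(0)=0$, $J_x(0)=+\infty$ that $c^*(x)\to 0$, whereas you derive the quantitative bound $J_x(x)\ge c_m x^{-R}$ from a concavity chord between $J^0(mx)$ and $J^{ez}(x)$ and feed it back into the feedback formula to get $c^*(x)\le Cx$. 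Both arguments are valid; yours is slightly more explicit (it yields a linear rate near the origin) at the cost of checking that $m$ can be chosen with $\lambda^0 m^{1-R}>\eta^{-\nu S}$, which holds since $1-R>0$.
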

\proof  We denote the admissible strategy set by $\mathcal{A}(x;k,L)$ better highlight its dependence on the parameters.

The first part of (i) is straightforward, while the second part follows from the inclusion $\mathcal{A}(x;k,L)\subset\mathcal{A}(x+\bar{L};k,0)$.

(ii).  For any given $(\pi,c)\in \mathcal{A}(x;k,L)$, we can get $(m\pi,mc)\in\mathcal{A}(mx;k,mL)$ for any $m>0$. Moreover, by the uniqueness result in Theorem 6.9 of \cite{HHJ23b}, the utility satisfies a homogeneity property of degree $(1-R)$, namely, $V_0^{mc}=m^{1-R}V_0^{c}$ for all $m>0$. Therefore, for any $x>0$,
\begin{equation}\label{eq:homo-J}
J(x;k,L)=\sup_{(\pi,c)\in\mathcal{A}(x;k,L)}V_0^c=\sup_{(\tilde{\pi},\tilde{c})\in\mathcal{A}(1;k,L/x)}V_0^{x\tilde{c}}=x^{1-R}\sup_{(\tilde{\pi},\tilde{c})\in\mathcal{A}(1;k,L/x)}V_0^{\tilde{c}}.
\end{equation}
It implies that $J(x;k,L) = x^{1-R}J(1;k,L/x)$. Therefore, $J(x;k,L)/x^{1-R}$ is decreasing in $x$. Hence, we obtain
\begin{equation}\label{eq:J-daoshu-upperbound}
 J_x(x)x\leq (1-R)J(x),~~~ \forall x>0.
 \end{equation}

By (i) and $0 < R < 1$,  it implies that \begin{align*}
  (1-R)J^0(x)&\leq (1-R)J(x)\leq (1-R)J^{ez}(x),\\
  (1-R)J^0(x)&\leq (1-R)J(x)\leq (1-R)J^{0}(x+\bar{L}).
\end{align*}
By Theorem \ref{th-c2}, $c^*(x)=(J_x(x))^{-1/S} \big((1-R)J(x)\big)^{\frac{\rho}{S}}$,  and the fact that $(J_x(x))^{-\frac{1}{S}}x^{-\frac{1}{S}}\geq ((1-R)J(x))^{-\frac{1}{S}}$, one can derive
\begin{align*}
c^*(x)=(J_x(x))^{-1/S} \big((1-R)J(x)\big)^{\frac{\rho}{S}}
&\geq x^{\frac{1}{S}}((1-R)J(x))^{\frac{\rho-1}{S}}\\
&\geq x^{\frac{1}{S}}((1-R)J^{ez}(x))^{\frac{\rho-1}{S}}=\eta x=c^{ez}(x),
\end{align*}where we used the facts that $(1-\rho)\nu=1$ and  $J^{ez}(x)=\eta^{-\nu S}\frac{x^{1-R}}{1-R}$. Similarly, it also implies
$$c^*(x)\geq x^{\frac{1}{S}}((1-R)J^{0}(x+\bar{L}))^{\frac{\rho-1}{S}}=\eta^0 x \left(\frac{x}{x+\bar{L}}\right)^{\frac{1}{S}-1}=c^{0}(x)\left(\frac{x}{x+\bar{L}}\right)^{\frac{1}{S}-1}.$$

On the other hand, similar to \eqref{eq:homo-J}, we can also get that \begin{align*}
J(x;k,L)=\sup_{(\pi,c)\in\mathcal{A}(x;k,L)}V_0^c&=\sup_{(\tilde{\pi},\tilde{c})\in\mathcal{A}(\tilde{x};k,L/(x+\bar{L}))}V_0^{(x+\bar{L})\tilde{c}}\\
&=(x+\bar{L})^{1-R}J(\tilde{x};k,\frac{L}{x+\bar{L}})=(x+\bar{L})^{1-R}J(\tilde{x};k,k(\tilde{x}-1)),\end{align*}
where $\tilde{x}=\frac{x}{x+\bar{L}}$. One can easily find that  $\frac{J(x; k, L)}{(x+\bar{L})^{1-R}}$ is increasing in $x$. Thus, one gets that
\begin{align}\label{eq:J-daoshu-lowerbound}
   J_x(x)(x+\bar{L})\geq J(x)(1-R), ~~\forall x>0.
\end{align}Therefore, repeating the above procedures, one has
\begin{align*}
c^*(x)=(J_x(x))^{-1/S} \big((1-R)J(x)\big)^{\frac{\rho}{S}}
&\leq (x+\bar{L})^{\frac{1}{S}}((1-R)J(x))^{\frac{\rho-1}{S}}\\
&\leq(x+\bar{L})^{\frac{1}{S}}((1-R)J^{0}(x))^{\frac{\rho-1}{S}}=c^{0}(x)\left(\frac{x+\bar{L}}{x}\right)^{\frac{1}{S}}.
\end{align*}

The result for (iii) follows directly from (ii) together with the inequality $\eta^0>\eta$.

For (iv), by Theorem \ref{th-unique}, we know that
$$\delta\nu {J}\geq  {S\over 1-S}c^*(x) J_x+rx J_x,~~~\forall x>0,$$
combining $J(0)=0$ with $J_x(0)=+\infty$ together, one then can recognize that   $\lim_{x\rightarrow 0}c^*(x)=0$.
\qed

We next discuss the properties of the optimal investment strategy. We first show that the constraint is not binding when $x$ is small. 

\begin{proposition}
 \label{prop:unconstrained is small-k>0 L}
Suppose $0\leq k<\frac{\mu-r}{R\sigma^2}$ and $L>0$. Then,  there exists $\hat{x}>0$  such that $(0,\hat{x})$ is included in $\cal{U}$ and such that $\pi^*(\hat{x})=k\hat{x}+L$.
\end{proposition}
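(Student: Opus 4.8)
The plan is to phrase everything through the \emph{unconstrained (Merton) demand} $\pi^{u}(x):=-\frac{\mu-r}{\sigma^{2}}\frac{J_{x}(x)}{J_{xx}(x)}$, which is well defined, strictly positive and continuous on $(0,\infty)$ since $J\in C^{2}(\mathbb{R}_{++})$ with $J_{x}>0$, $J_{xx}<0$ by Theorem \ref{th-c2}. With this notation the feedback control of Theorem \ref{th-c2} reads $\pi^{*}(x)=\min\{\pi^{u}(x),g(x)\}$, and $x\in\mathcal{U}$ precisely when $\pi^{u}(x)<g(x)$. It therefore suffices to establish: (a) $\pi^{u}(x)<g(x)$ for all sufficiently small $x>0$; and (b) the threshold $\hat{x}:=\sup\{a>0:\ (0,a)\subseteq\mathcal{U}\}$ is finite. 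The desired smooth-fit identity $\pi^{*}(\hat{x})=g(\hat{x})=k\hat{x}+L$ then drops out from the continuity of $\pi^{u}$ and $g$.

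Step (a), which I expect to be the main obstacle, is handled by extracting slackness of the constraint from the singular boundary data at $x=0$. The key estimate is that \emph{whenever} $\pi^{u}(x)\ge g(x)$ one has the lower bound $\delta\nu\,J(x)\ge\frac{\mu-r}{2}\,g(x)\,J_{x}(x)\ge\frac{\mu-r}{2}L\,J_{x}(x)$. When $x\in\mathcal{B}$ this follows by substituting the pointwise inequality $-J_{xx}<\frac{(\mu-r)J_{x}}{\sigma^{2}g}$ (which is equivalent to $\pi^{u}>g$) into the constrained ODE \eqref{eq:J-B} and discarding the nonnegative consumption and $rxJ_{x}$ terms; when $\pi^{u}(x)=g(x)$ the same bound comes from the unconstrained ODE \eqref{eq:J-U1} after rewriting $-\kappa J_{x}^{2}/J_{xx}=\frac{\mu-r}{2}g(x)J_{x}$. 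Now argue by contradiction: if (a) failed there would be $x_{n}\downarrow 0$ with $\pi^{u}(x_{n})\ge g(x_{n})$, whence $J_{x}(x_{n})\le\frac{2\delta\nu}{(\mu-r)L}J(x_{n})$. Since $J(0)=0$ and $J$ is concave, the right side tends to $0$, while $J_{x}(x_{n})\to J_{x}(0^{+})=+\infty$ by Proposition \ref{pro:basic pro}; this contradiction proves (a), so $(0,\varepsilon)\subseteq\mathcal{U}$ for some $\varepsilon>0$ and hence $\hat{x}\ge\varepsilon>0$.

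For step (b) I would rule out $\hat{x}=+\infty$. If the constraint never bound, then $\mathcal{U}=(0,\infty)$ and at every point the constrained maximization over $|\pi|\le g(x)$ agrees with the unconstrained one, so $J$ solves the unconstrained HJB equation on all of $(0,\infty)$; its unique strictly concave, sublinear solution is the benchmark \eqref{eq:benchmark}, forcing $J=J^{ez}$ and $\pi^{u}(x)=\pi^{ez}(x)=\frac{\mu-r}{R\sigma^{2}}x$. But then $\pi^{u}(x)<g(x)$ would demand $\bigl(\frac{\mu-r}{R\sigma^{2}}-k\bigr)x<L$ for all $x$, which is impossible for large $x$ under the standing assumption \eqref{eq:k-range}; hence $\hat{x}<\infty$. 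Finally, $(0,\hat{x})\subseteq\mathcal{U}$ gives $\pi^{u}<g$ there, so continuity yields $\pi^{u}(\hat{x})\le g(\hat{x})$; a strict inequality would propagate to a neighbourhood of $\hat{x}$ and contradict the maximality of $\hat{x}$, so $\pi^{u}(\hat{x})=g(\hat{x})$ and therefore $\pi^{*}(\hat{x})=\min\{\pi^{u}(\hat{x}),g(\hat{x})\}=k\hat{x}+L$, completing the proof.
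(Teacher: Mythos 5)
Your Step (a) is, in substance, exactly the paper's proof: the authors also argue by contradiction with a sequence $x_n\downarrow 0$ on which the constraint binds, extract the inequality $\delta\nu J(x_n)\ge \tfrac{\mu-r}{2}(kx_n+L)J_x(x_n)\ge\tfrac{\mu-r}{2}L\,J_x(x_n)$ from the constrained HJB (dropping the nonnegative consumption and $rxJ_x$ terms exactly as you do), and then collide $J(x_n)\to J(0)=0$ with $J_x(0^+)=+\infty$ from Proposition \ref{pro:basic pro}. Your treatment is in fact slightly more careful than the paper's, since you also cover the boundary set $\{\pi^u=g\}$ (which belongs to neither $\mathcal{U}$ nor $\mathcal{B}$ as defined) via the identity $-\kappa J_x^2/J_{xx}=\tfrac{\mu-r}{2}\pi^u J_x$; the paper simply takes $x_n\in\mathcal{B}$.

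Where you go beyond the paper is Step (b) and the contact argument: the paper's proof ends with the contradiction and never explicitly establishes that $\hat{x}$ is finite, i.e.\ that the constraint actually binds somewhere, so your attempt to justify $\pi^*(\hat{x})=k\hat{x}+L$ is a genuine (and welcome) addition. The one soft spot is the assertion that the unconstrained HJB \eqref{eq:J-U1} has a \emph{unique} strictly concave, sublinear solution, namely $J^{ez}$. That uniqueness is not proved in this paper: Theorem \ref{12052} requires $g$ Lipschitz with finite constant $K$ (the comparison argument uses $g(x_0)+g(y_0)$ explicitly), so it does not cover $g\equiv+\infty$, and the homogeneity identity \eqref{eq:homo-J} shows $J$ need not a priori be a pure power of $x$. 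You would need to either import the relevant uniqueness/verification result from \cite{HHJ23b} explicitly, or replace this step by a direct asymptotic estimate showing $\pi^u(x)\ge g(x)$ for large $x$. The final continuity argument ($\pi^u(\hat x)=g(\hat x)$ by maximality of $\hat x$) is correct once finiteness is in hand.
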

\proof We prove the result by contradiction, suppose that there exists a sequence $x_n \rightarrow 0$ such that $\{x_n\}\subset \mathcal{B}$, i.e., $\pi^*(x_n)=kx_n+L$. From the equation in the constrained domain and the definition of constrained domain, at $x_n$, we have:
\begin{align*}
\delta\nu J(x_n)\geq {(\mu-r)(kx_n+L)\over 2}J_x(x_n)\geq {L(\mu-r)\over 2}J_x(x_n).
\end{align*}
Sending $x_n\rightarrow 0$, the LHS of the above equation tends to zero. However, by Proposition \ref{pro:basic pro}, ${(\mu-r)L\over 2}J_x(x_n)\rightarrow +\infty$. We get the desired contradiction.
\qed

\begin{rem}
We conjecture that the unconstrained region ${\cal U}$ for the general linear leverage is an open interval, $(0, x^*)$
for some $x^* >0$, as shown in Theorem \ref{th-unique} for $k = 0$. However, our argument cannot be applied in the general case without imposing further conditions. Indeed, it is also not obvious for the standard CRRA utility that $R = S$ in \cite{VZ97}. 
\end{rem}

To finish the discussion of this section, we examine how the leverage constraint affect the investment strategy.

%Compared with the results for the utility of CRRA in \cite{VZ97}, we have the following portfolio estimates for the Epstein-Zin model.

\begin{proposition}\label{pro:kx+l--portfolio}
Suppose $0<k<\frac{\mu-r}{R\sigma^2}$ and $L=k\bar{L}>0$. Then for the problem \eqref{eq:ez-problem}, for any  $x\in\mathcal{U}$ and $x>0$, the optimal strategy satisfies the following properties:
\begin{itemize}
    \item[(i)]  $\pi^{*}(x)\leq \pi^{ez}(x)+\frac{2\delta\nu\bar{L}}{(1-R)(\mu-r)}$;
   \item[(ii)] $\pi^{*}(x)\geq \pi^0(x)+\frac{2}{\mu-r}\frac{S \eta^{0}}{1-S}\left(1-\left(\frac{x+\bar{L}}{x}\right)^{1/S}\right)$;
   \item[(iii)] $\lim_{x\rightarrow 0}\pi^*(x)=0$.
    \end{itemize}
\end{proposition}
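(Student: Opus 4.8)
The plan is to run all three parts off a single identity that isolates $\pi^*$ on the unconstrained region. Since on $\mathcal{U}$ the inner maximizer in \eqref{eq:J-U1} is interior, one has $-\kappa(J_x)^2/J_{xx}=\tfrac{\mu-r}{2}\pi^*(x)J_x$, and the consumption term equals $\tfrac{S}{1-S}c^*(x)J_x$ by the feedback form in Theorem \ref{th-c2}; dividing \eqref{eq:J-U1} by $J_x>0$ then gives, for every $x\in\mathcal{U}$,
\begin{equation}\label{eq:pi-iso}
\frac{\mu-r}{2}\pi^*(x)=\delta\nu\,\frac{J(x)}{J_x(x)}-\frac{S}{1-S}c^*(x)-rx.
\end{equation}
The same identity holds for the benchmark $J^{ez}$ (unconstrained everywhere, with $J^{ez}/J^{ez}_x=x/(1-R)$), and dividing the constrained ODE \eqref{eq:J-B} with $g(x)=kx$ by $J^0_x$ yields the companion relation $\tfrac{\delta\nu x}{1-R}=(\mu-r)kx-\tfrac12\sigma^2Rk^2x+\tfrac{S}{1-S}\eta^0 x+rx$, using $J^0=\lambda^0x^{1-R}/(1-R)$, $\pi^0=kx$, $c^0=\eta^0 x$. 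With \eqref{eq:pi-iso} in hand, each estimate reduces to inserting the homogeneity bounds \eqref{eq:J-daoshu-upperbound}--\eqref{eq:J-daoshu-lowerbound} on $J/J_x$ and the consumption sandwich of Proposition \ref{pro:kx+l}(ii).

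For (i) I would subtract the $J^{ez}$ version of \eqref{eq:pi-iso} from \eqref{eq:pi-iso}, giving $\tfrac{\mu-r}{2}(\pi^*-\pi^{ez})=\delta\nu(\tfrac{J}{J_x}-\tfrac{x}{1-R})-\tfrac{S}{1-S}(c^*-c^{ez})$. By \eqref{eq:J-daoshu-lowerbound} the first bracket is at most $\bar{L}/(1-R)$, and since $c^*\ge c^{ez}$ by Proposition \ref{pro:kx+l}(ii) while $\tfrac{S}{1-S}>0$, the last term is nonpositive; this is exactly the asserted upper bound.

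For (ii) I would instead bound $J/J_x$ from below by $x/(1-R)$ via \eqref{eq:J-daoshu-upperbound}, substitute the $J^0$ companion relation for $\tfrac{\delta\nu x}{1-R}$, and use $c^*\le c^0(x)(\tfrac{x+\bar{L}}{x})^{1/S}$ from Proposition \ref{pro:kx+l}(ii) together with $c^0=\eta^0 x$. This leaves
\begin{equation*}
\frac{\mu-r}{2}\pi^*(x)\ge \frac{\mu-r}{2}kx+\frac{kx}{2}\bigl[(\mu-r)-\sigma^2Rk\bigr]+\frac{S}{1-S}\eta^0 x\Bigl(1-\bigl(\tfrac{x+\bar{L}}{x}\bigr)^{1/S}\Bigr),
\end{equation*}
and the binding-constraint hypothesis \eqref{eq:k-range}, $k<\tfrac{\mu-r}{R\sigma^2}$, makes the middle bracket nonnegative, so dropping it gives the lower bound in (ii). I expect this step to be the main obstacle: unlike (i), the comparison is with the \emph{fully} constrained $J^0$, whose Hamiltonian contributes the extra quadratic term $-\tfrac12\sigma^2Rk^2x$, and it is precisely \eqref{eq:k-range} that controls its sign; keeping the bookkeeping of these $k$-dependent terms straight is the delicate part.

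For (iii), Proposition \ref{prop:unconstrained is small-k>0 L} supplies an interval $(0,\hat{x})\subset\mathcal{U}$ on which \eqref{eq:pi-iso} is valid, so it suffices to let $x\to0^+$ there. By Proposition \ref{pro:basic pro}, $J(x)\to J(0)=0$ and $J_x(x)\to+\infty$, hence $J/J_x\le J(x)\to0$ once $J_x\ge1$; moreover $c^*(x)\to0$ by Proposition \ref{pro:kx+l}(iv) and $rx\to0$. Every term on the right of \eqref{eq:pi-iso} therefore vanishes, giving $\pi^*(x)\to0$.
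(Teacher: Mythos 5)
Your proposal is correct and follows essentially the same route as the paper: the identity \eqref{eq:pi-iso} is exactly the paper's \eqref{eq:JU-1-pi}, and parts (i)--(iii) are obtained, as in the paper, from the homogeneity bounds \eqref{eq:J-daoshu-upperbound}--\eqref{eq:J-daoshu-lowerbound}, the consumption sandwich of Proposition \ref{pro:kx+l}(ii), and (for (iii)) Propositions \ref{prop:unconstrained is small-k>0 L}, \ref{pro:basic pro} and \ref{pro:kx+l}(iv); your companion relation for $J^0$ is just the paper's direct computation $\frac{\delta\nu}{1-R}-\frac{S}{1-S}\eta^0-r=k(\mu-r)-\tfrac12Rk^2\sigma^2$ in disguise. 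The one point worth flagging is that your lower bound in (ii) carries a factor of $x$ in the correction term, $\frac{2}{\mu-r}\frac{S\eta^0}{1-S}\,x\bigl(1-\bigl(\frac{x+\bar L}{x}\bigr)^{1/S}\bigr)$, which is what the algebra genuinely yields from $c^*\le \eta^0 x\bigl(\frac{x+\bar L}{x}\bigr)^{1/S}$; the paper's statement (and its own proof) omit this $x$, which appears to be a typo there rather than an error on your side.
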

\proof (i). In the unconstrained region $\mathcal{U}$, by incorporating the optimal strategy $c^*(x)=(J_x(x))^{-1/S} \big((1-R)J(x)\big)^{\frac{\rho}{S}}$
 and  $\pi^*(x)= -\frac{\mu-r}{\sigma^{2}} \frac{J_x(x)}{J_{xx}(x)}$ into the PDE \eqref{eq:J-U1}, then it is equivalent to
\begin{align}\label{eq:JU-1-pi}
    \delta\nu \frac{J}{J'} =\frac{\mu-r}{2}\pi^{*}(x)+ {S\over 1-S}c^{*}(x)+rx,  ~~~~~x\in\mathcal{U}.
\end{align}
By the inequality \eqref{eq:J-daoshu-lowerbound} and Proposition
\ref{pro:kx+l} (ii), one can derive that
\begin{align*}
    \pi^{*}(x)&=\frac{2}{\mu-r}\left(\delta\nu \frac{J}{J'}-{S\over 1-S}c^{*}(x)-rx\right)\\
    &\leq \frac{2}{\mu-r}\left( \delta\nu\frac{x+\bar{L}}{1-R}-\frac{S}{1-S}\eta x-rx\right)=\pi^{ez}(x)+\frac{2\delta\nu\bar{L}}{(1-R)(\mu-r)},
\end{align*}where uses the definition of $\eta$. The proof for the constrained region is obvious.

(ii). Similarly, by the inequality \eqref{eq:J-daoshu-upperbound},  one can derive that
\begin{align*}
    \pi^{*}(x)&=\frac{2}{\mu-r}\left(\delta\nu \frac{J}{J'}-{S\over 1-S}c^{*}(x)-rx\right)\\
    &\geq \frac{2}{\mu-r}\left( \frac{\delta\nu}{1-R}-\frac{S}{1-S}\eta^0 -r\right)x+\frac{2}{\mu-r}\frac{S\eta^0}{1-S}\left(1-\left(\frac{x+\bar{L}}{x}\right)^{1/S}\right).
\end{align*}
Since the term
\begin{align*}
 \frac{\delta\nu}{1-R}-\frac{S}{1-S}\eta^0 -r=\frac{\kappa}{R}+\frac{S}{1-S}(\eta-\eta^0)=k(\mu-r)-\frac{1}{2}Rk^2\sigma^2,
\end{align*}taking account of $k<\frac{\mu-r}{R\sigma^2}$, then one can finally get that
$$\pi^{*}(x)\geq kx+\frac{2}{\mu-r}\frac{S\eta^0}{1-S}\left(1-\left(\frac{x+\bar{L}}{x}\right)^{1/S}\right).$$

(iii). The proof follows from equation \eqref{eq:JU-1-pi}, Propositions \ref{prop:unconstrained is small-k>0 L} and \ref{pro:kx+l} (iv), and the facts that $J_x(0)=+\infty$ and $J(0)=0$.
\qed

\section{Conclusions}
\label{sec:conclusion}

In this paper, we solve an optimal portfolio problem under Epstein–Zin preferences with a general leverage constraint in an infinite-horizon continuous-time setting. We establish a new type of existence and uniqueness theorem for the viscosity solution of the associated Hamilton–Jacobi–Bellman (HJB) equation under constraints. The HJB equation arises from Epstein–Zin utility with a non-Lipschitz aggregator. The leverage constraint, motivated by economics and finance, restricts the choice variables (risky investments) in terms of the state variables (wealth). Our approach relies on developing a dynamic programming principle tailored to this setting. 
Furthermore, we establish the smoothness property of the optimal value function based on this type of viscosity solution theorem. We also explicitly characterize the optimal consumption and investment strategy under the linear leverage constraint. In addition, we identify the constrained and unconstrained regions and compare the solution with the benchmark case in the absence of constraints. 

While the leverage constraint is one of the most widely studied constraints in the literature, it remains technically challenging because of its time-varying nature, the non-compactness of the constraint set, and the involvement of both control and state variables. Moreover, characterizing the constrained region relies on the $C^2$-smoothness property of the value function. Most previous studies in this area impose an {\em ex ante} assumption on the smoothness of the value function and then rely on a verification theorem, which typically works only when an analytical expression of the value function is available in certain special cases. This paper advances the literature by establishing new results on the existence, uniqueness, and smoothness of the viscosity solution to the HJB equation under general leverage constraints. It further generalizes portfolio choice with Epstein–Zin preferences beyond the standard time-separable models with constraints.

Although the primary contribution is theoretical, the results carry significant implications for economic and financial applications, given the widespread use of Epstein–Zin preferences and the leverage constraint in many business contexts. Moreover, the method developed in this paper is sufficiently general to accommodate other important constraints and to extend to broader classes of HJB equations under alternative preference specifications.
%In particular, by establishing the new results of the  existence, uniqueness, and smoothness of the viscosity solution to the HJB equation under general leverage constraints, our framework extends the analysis of portfolio choice with Epstein–Zin preferences beyond the standard time-separable preference under constraints. 

%Moreover, the methodology can be applied to a broad range of problems, including borrowing and margin requirements, optimal consumption and investment under recursive preferences, and risk management in the presence of market frictions.

\newpage

\renewcommand {\theequation}{A-\arabic{equation}} \setcounter
{equation}{0}
\renewcommand {\thelemma}{A.\arabic{lemma}} \setcounter
{theorem}{0}
\setcounter{equation}{0}

\section*{Appendix A: Proof of Lemma \ref{lemma:unqueness} }
\label{appex:uniqueness}
Noting $\theta$ is independent of $\lambda$, for fixed $\theta$, there exists an $M_{\theta}>0$ that is sufficiently large that $\varphi(x,y)<0$ for all $|x|\vee|y|\geq M_{\theta}$. Thus, we have
    \begin{eqnarray}\label{12057}
    |x_0|\vee|y_0|<M_{\theta}.
    \end{eqnarray}
    Next, $2\varphi(x_0,y_0)\geq \varphi(x_0,x_0)+\varphi(y_0,y_0)$
    implies
     \begin{align}\label{12058jia}
            2\lambda|x_0-y_0|^2\leq u(y_0)-u(x_0)+v(x_0)-v(y_0).
  \end{align}
    Then we get $$
    |x_0-y_0|^2\rightarrow 0\ \mbox{as}\ \lambda\rightarrow +\infty.
    $$
    Letting $\lambda\rightarrow+\infty$ in \eqref{12058jia}, by the continuity of $u$ and $v$ we get (\ref{12055}).\par
    From $\varphi(x_0,y_0)\geq \varphi(\bar{x},\bar{x})$ we have
    \begin{eqnarray}\label{12058}
                u(x_0)-v(x_0)-\theta x_0
                %-\epsilon[\ln(x_0-\frac{a}{r}+\theta)]^2
                &\geq&
                u(\bar{x})-v(\bar{x})-\theta\bar{x}
                %-\epsilon[\ln(\bar{x}-\frac{a}{r}+\theta)]^2
                +v(y_0)-v(x_0)\\
                &&+\lambda|x_0-y_0|^2+\frac{\theta}{2}(y_0-x_0).
                \nonumber
                %&&+\frac{1}{2}\epsilon[\ln(y_0-\frac{a}{r}+\theta)]^2-[\ln(x_0-\frac{a}{r}+\theta)]^2].
    \end{eqnarray}
    By (\ref{12057}), there exists $\bar{x}_0(\theta)$ such that $\lim_{\lambda\rightarrow+\infty}x_0(\theta,\lambda)=\bar{x}_0(\theta)$.
    %The limit here is taken along subsequences which, to simplify notation, we denote the same way as the whole family.  
    By sending $\lambda\rightarrow+\infty$, (\ref{12058}) combined with (\ref{12055}) implies, for $\bar{x}_0:=\bar{x}_0(\theta)$,
    \begin{eqnarray}\label{12059}
                u(\bar{x}_0)-v(\bar{x}_0)-\theta \bar{x}_0
                %-\epsilon[\ln(\bar{x}_0-\frac{a}{r}+\theta)]^2
                \geq
                u({x})-v({x})-\theta{x}, \ \mbox{for all} \ x\in \mathbb{R}_+.
                %-\epsilon[\ln(\bar{x}-\frac{a}{r}+\theta)]^2.
    \end{eqnarray}
    We now send $\theta\rightarrow0$. If $\lim_{\theta\rightarrow0^+}[\theta \bar{x}_0(\theta)%+\epsilon[\ln(\bar{x}_0(\theta)-\frac{a}{r}+\theta)]^2
    ]=\alpha>0$, again along subsequences, (\ref{12059}) yields
    $\sup_{\mathbb{R}_+}[u-v]-\alpha\geq \sup_{\mathbb{R}_+}[u-v]$ which contradicts that $\alpha>0$. Then we get
   \begin{eqnarray}\label{03181zhou}
    \lim_{\theta\rightarrow0^+}[\theta \bar{x}_0(\theta)%+\epsilon[\ln(\bar{x}_0(\theta)-\frac{a}{r}+\theta)]^2
    ]=0.
   \end{eqnarray}
    From $g$ is Lipschitz, it follows that (\ref{101025}) holds true. Moreover, by (\ref{12059}), we have
\begin{eqnarray*}%\label{0318zhou1}
               u(\bar{x}_0)-v(\bar{x}_0)\geq
               \sup_{x\in \mathbb{R}_+}[u(x)-v(x)-\theta x]\geq\tilde{m}>0
               \geq u(0)-v(0),
\end{eqnarray*}
then $\bar{x}_0>0$.
\begin{proof}[ Proof of (\ref{101125})]
Case 1.  $\lim_{\lambda\rightarrow+\infty}|x_0-y_0||Y|=0$.

By the properties of $g$, we have 
\begin{eqnarray*}
|I|\leq K(\mu-r)|x_0-y_0|\left(\lambda|x_0-y_0|+\frac{\theta}{2}\right)+{K\over 2}\sigma^2 (g(x_0)+g(y_0))|x_0-y_0|[|Y|].
\end{eqnarray*}
Then, by (\ref{12055}) and (\ref{12057}), we get (\ref{101125}).

Case 2. $\limsup_{\lambda\rightarrow+\infty}|x_0-y_0||Y|>0$. Without loss of generality, we may assume that 
\begin{eqnarray}\label{1011251}
\lim_{\lambda\rightarrow+\infty}|x_0-y_0||Y|=\bar{a}>0.
\end{eqnarray}
From the strict concavity of $v$,
$$
            0<Y=\lim_{k\rightarrow\infty}\partial_{xx}\hat{\psi}_k(y_k).
$$
Then, the function 
$$\bar{f}(\pi)=-\pi(\mu-r)\left[\lambda(y_0-x_0)+\frac{\theta}{2}\right]-{1\over 2}\sigma^2\pi^2 Y$$
 attains its maximum at $\hat{\pi}=\frac{(\mu-r)\left[\lambda(y_0-x_0)+\frac{\theta}{2}\right]}{\sigma^2Y}=\frac{(\mu-r)\left[\lambda(y_0-x_0)+\frac{\theta}{2}\right](y_0-x_0)}{\sigma^2Y(y_0-x_0)}$. 
 From (\ref{12055}) and (\ref{1011251}),
 $$
 \lim_{\lambda\rightarrow+\infty}\hat{\pi}=0.
 $$
 Therefore, by (\ref{x_0y_0>0}) and the properties of $g$, there exists a constant $\hat{\Delta}_\theta>0$ large enough that
 $$|\hat{\pi}|\leq g(x_0)\wedge  g(y_0)\quad \mbox{and}\ \ I=0,\qquad \mbox{for all }
\ \ \lambda\geq \hat{\Delta}_\theta.
$$
 Thus, we obtain (\ref{101125}).
\end{proof}

\renewcommand {\theequation}{B-\arabic{equation}} \setcounter
{equation}{0}
\renewcommand {\thelemma}{B.\arabic{lemma}} \setcounter
{theorem}{0}
\setcounter{equation}{0}
\section*{Appendix B: Proof of Case 2 (b) in Proposition \ref{prop:smooth} }
\label{appen:smooth}

{\bf Case 2 (b):} The case
\begin{equation*}
x_0 \in A:=\left\{x\in[y_1,y_2]:-{\mu-r\over \sigma^2}{J_x(x)\over J_{xx}(x)}<g(x)\right\}.
\end{equation*}

In this unconstrained domain $A$, the HJB equation has the form:
\begin{equation}
\label{eq: HJB-unconstrained}
\delta\nu J=-\kappa {J_x^2\over J_{xx}}+{1\over 2}\epsilon^2\sigma^2x^2J_{xx}+ {S\over 1-S}((1-R)J)^{\rho\over S}(J_x)^{1-{1\over S}}+rxJ_x.
\end{equation}
Since $A$ is an open set,  then $J$ is smooth enough (we will use the third and fourth derivatives in the sequel). Our goal is to obtain an estimate of $J_{xx}(x_0)$.

Since $Z$ attains the (locally) maximum at $x_0\in A$,  then we have
\begin{equation*}
Z_x(x_0)=0, ~~Z_{xx}(x_0)\leq 0,
\end{equation*}
where the derivatives are
\begin{equation}
\label{eq:first derivative}
Z_x=2\xi\xi_xJ^2_{xx}+2\xi^2J_{xx}J_{xxx}+2\lambda_1J_xJ_{xx}-\lambda_2J_x
\end{equation}
and
\begin{eqnarray}
\label{eq: second derivative}
Z_{xx}&=&2\xi_x^2J^2_{xx}+2\xi\xi_{xx}J^2_{xx}+8\xi\xi_{x}J_{xx}J_{xxx}+2\xi^2J^2_{xxx}+2\xi^2J_{xx}J_{xxxx}\\  \nonumber
&&+2\lambda_1J^2_{xx}+2\lambda_1J_{x}J_{xxx}-\lambda_2J_{xx}.
\end{eqnarray}

Differentiate (\ref{eq: HJB-unconstrained}) once, we get
\begin{align}
\label{eq:HJB-unconstrained-once}
&-\kappa J_x-{1\over 2}\epsilon^2\sigma^2x^2J_{xxx}-\kappa {J_x^2J_{xxx}\over J^2_{xx}}=J_{xx}\Big[rx+\epsilon^2\sigma^x-[(1-R)J]^{\rho\over S}(J_x)^{-{1\over S}}\Big]\\
&~~~\qquad\qquad+ {S-R\over 1-S}[(1-R)J]^{{\rho\over S}-1}(J_x)^{1-{1\over S}}+J_x(r-3\kappa)-\delta\nu J_x. \nonumber
\end{align}
Differentiate (\ref{eq: HJB-unconstrained}) twice, we get
\begin{align}
\label{eq: HJB-unconstrained-twice}
&-\kappa {J_x^2J_{xxxx}\over J^2_{xx}}-{1\over 2}\epsilon^2\sigma^2x^2J_{xxxx}-\kappa{J_x^2 V^2_{xxx}\over J^3_{xx}}=-\delta\nu J_{xx}+J_{xx}(2r-2\kappa+\epsilon^2\sigma^2)\\\nonumber
&+2\kappa{J_xJ_{xxx}\over J_{xx}}-3\kappa{J_x^2J^2_{xxx}\over J^3_{xx}}+J_{xxx}(rx+2\epsilon^2\sigma^2 x)+ [(1-R)J]^{{\rho\over S}-2}\Bigg\{{S-R\over 1-S}\Big[{R(S-1)\over S}(J_x)^2\\\nonumber
&+(1-R)JJ_{xx}\Big](J_x)^{1-{1\over S}}[(1-R)J]^2\Big[{1\over S}(J_x)^{-{1\over S}-1}(J_{xx})^2-(J_x)^{-{1\over S}}J_{xxx}\Big]\\\nonumber
&-2{S-R\over S}(J_x)^{2-{1\over S}}J_{xx}[(1-R)J]\Bigg\}.\nonumber
\end{align}

Since $Z_{xx}(x_0)\leq 0$, we have
\begin{equation*}
-\Big[\kappa{J_x^2(x_0)\over J_{xx}^2(x_0)}+{1\over 2}\epsilon^2\sigma^2x_0^2\Big]Z_{xx}(x_0)\geq 0.
\end{equation*}
Replace $Z_{xx}(x_0)$ in the above equation by the representation in (\ref{eq: second derivative}), at $x_0$, we have
\begin{align*}
&-\Big[\kappa{J_x^2\over J_{xx}^2}+{1\over 2}\epsilon^2\sigma^2x^2\Big]Z_{xx}=2\xi^2J_{xx}\Big[-\kappa{J_x^2J_{xxxx}\over J^2_{xx}}-{1\over 2}\epsilon^2\sigma^2x^2J_{xxxx}-\kappa {J_x^2J^2_{xxx}\over J^3_{xx}}\Big]\\
&+2\lambda_1 J_x[-\kappa J_x-{1\over2} \epsilon^2\sigma^2x^2J_{xxx}-\kappa{J_x^2 J_{xxx}\over J^2_{xx}}]+\lambda_2[\kappa {J_x^2\over J_{xx}}+{1\over 2}\epsilon^2\sigma^2 x^2 J_{xx}]-2\kappa \xi_x^2J_x^2\\
&-\epsilon^2\sigma^2x^2\xi_x^2J^2_{xx}-2\kappa J_x^2\xi \xi_{xx}-x^2\epsilon^2\sigma^2 \xi \xi_{xx}J^2_{xx}-8\kappa \xi\xi_x{J_x^2J_{xxx}\over J_{xx}}-4x^2\epsilon^2\sigma^2\xi\xi_xJ_{xx}J_{xxx}\\
&-\epsilon^2\sigma^2x^2\xi^2J^2_{xxx}-\lambda_1x^2\epsilon^2\sigma^2J^2_{xx}\geq 0.
\end{align*}
Now, using (\ref{eq: HJB-unconstrained}),(\ref{eq:HJB-unconstrained-once}) and (\ref{eq: HJB-unconstrained-twice}), plugging into the above equation, we get
\begin{align*}
&-2\delta\nu\xi^2J_{xx}^2+2\xi^2(2r-2\kappa+\epsilon^2\sigma^2)J^2_{xx}+4\kappa\xi^2J_xJ_{xxx}-6\kappa \xi^2{J_x^2J^2_{xxx}\over J^2_{xx}}+2\xi^2J_{xx}J_{xxx}\Big(rx\\\nonumber
&+2\epsilon^2\sigma^2 x-[(1-R)J]^{\rho\over S}(J_x)^{-{1\over S}}\Big)+2\xi^2J_{xx}\Bigg[[(1-R)J]^{{\rho\over S}-2}\Bigg\{{S-R\over 1-S}\Big[{R(S-1)\over S}(J_x)^2\\
&+(1-R)JJ_{xx}\Big](J_x)^{1-{1\over S}}+[(1-R)J]^2{1\over S}(J_x)^{-{1\over S}-1}(J_{xx})^2-2{S-R\over S}(J_x)^{2-{1\over S}}J_{xx}[(1-R)J]\Bigg\}\Bigg]\\
&+2\lambda_1J_xJ_{xx}\Big[rx+\epsilon^2\sigma^2x-[(1-R)J]^{\rho\over S}(J_x)^{-{1\over S}}\Big]+2\lambda_1 J_x {S-R\over 1-S}[(1-R)J]^{{\rho\over S}-1}(J_x)^{1-{1\over S}}\\
& +2\lambda_1 J^2_x(r-3\kappa)-2\lambda_1\delta\nu J^2_x+2\kappa\lambda_2{J_x^2\over J_{xx}}+\lambda_2 J_x\Big(-rx+[(1-R)J]^{\rho\over S}J_x^{-{1\over S}}\Big)\\
&-\lambda_2{1\over 1-S}[(1-R)J]^{\rho\over S}J_x^{1-{1\over S}}+\lambda_2\delta \nu J-2\kappa \xi_x^2J_x^2-\epsilon^2\sigma^2x^2\xi_x^2J^2_{xx}-2\kappa J_x^2\xi \xi_{xx}\\
&-x^2\epsilon^2\sigma^2 \xi \xi_{xx}J^2_{xx}-8\kappa \xi\xi_x{J_x^2J_{xxx}\over J_{xx}}-4x^2\epsilon^2\sigma^2\xi\xi_xJ_{xx}J_{xxx}-\epsilon^2\sigma^2x^2\xi^2J^2_{xxx}-\lambda_1x^2\epsilon^2\sigma^2J^2_{xx}\geq 0.
\end{align*}
Rearranging the terms leads to
\begin{align}
\label{eq:HJB unconstrained-middle step}
&\Big[rx+\epsilon^2\sigma^2x-[(1-R)J]^{\rho\over S}(J_x)^{-{1\over S}}\Big][2\xi^2J_{xx}J_{xxx}+2\lambda_1J_xJ_{xx}-\lambda_2J_x]+2\xi^2\epsilon^2\sigma^2 x J_{xx}J_{xxx}\\
&+\epsilon^2\sigma^2\lambda_2xJ_x-2\delta\nu \xi^2J^2_{xx}-4(\kappa-r-{\epsilon^2\sigma^2\over 2})\xi^2J_{xx}^2+4\kappa\xi^2J_xJ_{xxx}-6\kappa\xi^2{J_x^2J^2_{xxx}\over J^2_{xx}}\nonumber\\
&+2\xi^2J_{xx}\Bigg[[(1-R)J]^{{\rho\over S}-2}\Bigg\{{S-R\over 1-S}\Big[{R(S-1)\over S}(J_x)^2+(1-R)JJ_{xx}\Big](J_x)^{1-{1\over S}}\nonumber\\ 
&-2{S-R\over S}(J_x)^{2-{1\over S}}J_{xx}[(1-R)J]\Bigg\}\Bigg]+2\lambda_1 J_x {S-R\over 1-S}[(1-R)J]^{{\rho\over S}-1}(J_x)^{1-{1\over S}}-2\lambda_1\delta\nu J_x^2\nonumber\\
&-2\lambda_1(3\kappa-r)J_x^2+\lambda_2\delta\nu J+2\kappa\lambda_2{J_x^2\over J_{xx}}-2\kappa\xi_x^2J_x^2-2\kappa J_x^2\xi\xi_{xx}-8\kappa \xi\xi_x{J_x^2J_{xxx}\over J_{xx}}\nonumber\\
&+\epsilon^2\sigma^2x^2[4\xi\xi_x|J_{xx}|J_{xxx}-\xi^2J^2_{xxx}-\lambda_1 J^2_{xx}]-\epsilon^2\sigma^2x^2\xi_{x}^2J^2_{xx}-\epsilon^2\sigma^2x^2\xi\xi_{xx}J^2_{xx}\nonumber\\
&-\lambda_2{1\over 1-S}[(1-R)J]^{\rho\over S}J_x^{1-{1\over S}}\geq -2\xi^2 J^3_{xx}[(1-R)J]^{{\rho\over S}}{1\over S}(J_x)^{-{1\over S}-1}\nonumber.
\end{align}

Since $Z_x(x_0)=0$, by (\ref{eq:first derivative}), at $x_0$,
we may replace
$2\xi^2J_{xx}J_{xxx}+2\lambda_1J_xJ_{xx}-\lambda_2J_x$
by the term $-2\xi\xi_xJ^2_{xx}$. Therefore, we may simplify (\ref{eq:HJB unconstrained-middle step}) as
\begin{align}
\label{eq:HJB unconstrained-last step}
&-2\xi\xi_xJ^2_{xx}\Big[rx+\epsilon^2\sigma^2x-[(1-R)J]^{\rho\over S}(J_x)^{-{1\over S}}\Big]-2\delta\nu \xi^2J^2_{xx}-2\lambda_1\delta\nu J_x^2+\lambda_2\delta\nu J\\
&-2\lambda_1(3\kappa-r)J_x^2-2\kappa\lambda_2{ J_x^2 \over |J_{xx}|}-\lambda_2{1\over 1-S}[(1-R)J]^{\rho\over S}J_x^{1-{1\over S}}\nonumber\\
&-2\kappa J_x^2\xi\xi_{xx}-\epsilon^2\sigma^2x^2\xi_{x}^2J^2_{xx}-\epsilon^2\sigma^2x^2\xi\xi_{xx}J^2_{xx}-2\kappa\xi_x^2J_x^2\nonumber\\
&+\Big[-4(\kappa-r-{\epsilon^2\sigma^2\over 2})\xi^2J_{xx}^2+4\kappa\xi^2J_xJ_{xxx}-6\kappa\xi^2{J_x^2J^2_{xxx}\over J^2_{xx}}+8\kappa \xi\xi_x{J_x^2J_{xxx}\over |J_{xx}|}\Big]\nonumber\\
&+\epsilon^2\sigma^2x^2[(4\xi\xi_x-{2\xi^2\over x})|J_{xx}|J_{xxx}-\xi^2J^2_{xxx}-\lambda_1 J^2_{xx}]+\epsilon^2\sigma^2\lambda_2 xJ_x\nonumber\\
&+2\xi^2J_{xx}\Bigg[[(1-R)J]^{{\rho\over S}-2}\Bigg\{{S-R\over 1-S}\Big[{R(S-1)\over S}(J_x)^2+(1-R)JJ_{xx}\Big](J_x)^{1-{1\over S}}\nonumber\\
&-2{S-R\over S}(J_x)^{2-{1\over S}}J_{xx}[(1-R)J]\Bigg\}\Bigg]\geq -2\xi^2 J^3_{xx}[(1-R)J]^{{\rho\over S}}{1\over S}(J_x)^{-{1\over S}-1}\nonumber.
\end{align}
By the property of quadratic function, when $\epsilon$ small enough, at $x_0$ we have
\begin{align}
\label{eq:quadratic 1}
&-4(\kappa-r-{\epsilon^2\sigma^2\over 2})\xi^2J_{xx}^2+4\kappa\xi^2J_xJ_{xxx}-6\kappa\xi^2{J_x^2J^2_{xxx}\over J^2_{xx}}+8\kappa \xi\xi_x{J_x^2J_{xxx}\over |J_{xx}|}
\leq C({\xi_x^2}J_x^2+\xi^2J^2_{xx})
\end{align}
for some positive constant $C$. Similarly, at $x_0$, choose $\lambda_1$ sufficiently large, we have
\begin{align}
\label{eq: quadratic 2}
    &(4\xi\xi_x-{2\xi^2\over x})|J_{xx}|J_{xxx}-\xi^2J^2_{xxx}-\lambda_1 J^2_{xx}\leq J^2_{xx}\Big[(2\xi_x-{\xi\over x})^2-\lambda_1\Big]\leq 0.
\end{align}

Now, plugging (\ref{eq:quadratic 1}), (\ref{eq: quadratic 2}) into (\ref{eq:HJB unconstrained-last step}) and leave out all the negative terms (note that we are considering the case $S<R<1$), at $x_0$, we have
\begin{align*}
& J^2_{xx}\Big\{-2\xi\xi_{x}[rx+\epsilon^2\sigma^2x-[(1-R)J]^{\rho\over S}(J_x)^{-{1\over S}}]-\epsilon^2\sigma^2x^2\xi\xi_{xx}+C\xi^2\Big\}+J_x^2\Big[{\xi^2_x}-2\kappa\xi\xi_{xx}\\
&-2\lambda_1(3\kappa-r)\Big] +2\xi^2J_{xx}\Bigg[-2{S-R\over S}(J_x)^{2-{1\over S}}J_{xx}[(1-R)J]\Bigg\}\Bigg]+\lambda_2(\delta \nu J+\epsilon^2 \sigma^2 x)\\
&  \geq -2\xi^2 J^3_{xx}[(1-R)J]^{{\rho\over S}}{1\over S}(J_x)^{-{1\over S}-1}.
\end{align*}

Recalling from (\ref{eq: bound for V and V_x}), $J$ and $J_x$ are uniformly bounded, therefore, at $x_0$, there exist positive constants $k_1,k_2,k_3,k_4$ that
\begin{equation*}
J^2_{xx}\Big[k_1 \xi|\xi_x|+k_2\xi|\xi_{xx}|+C\xi^2\Big]+k_3\geq k_4 \xi^2|J^3_{xx}|.
\end{equation*} Since $x_0\notin \text{supp}~\xi$, this implies that $\xi(x_0)>0$.
Applying the properties (iii) of $\xi$ and $\xi\leq 1$, one can find positive constants $k_5$ and $k_6$  such that
$$k_5J^2_{xx}(x_0)+k_6\geq |J^3_{xx}(x_0)|=(J^2_{xx}(x_0))^{3/2}.$$
Therefore, it implies that
$$|J_{xx}(x_0)|\leq k_7,$$ where $k_7$ is independent of $\epsilon$. Finally,  we can get the desired result (\ref{eq:bound sec der}) by repeating the last part of the proof in Case 2 (a).

\renewcommand {\theequation}{C-\arabic{equation}} \setcounter
{equation}{0}
\renewcommand {\thelemma}{C.\arabic{lemma}} \setcounter
{theorem}{0}
\setcounter{equation}{0}
\section*{Appendix C: Proof of Lemmas \ref{lemma:example-unconstrained} - \ref{lemma:example-constrained}}
\label{appen:regions}
{\bf Proof of Lemma \ref{lemma:example-unconstrained}:} 

In the unconstrained region, the value function ${V}(\cdot)$ satisfies the ODE (\ref{eq:V-U1}).
%\beq
%(\lambda + \delta)V(W) = u(W) - \kappa \frac{(V'(W)^2}{V''(W)}.
%\eeq
By differentiating the ODE equation once and twice, we obtain
\begin{align*}
\delta\nu J_x =& {S-R\over 1-S}[(1-R)J]^{{\rho\over S}-1}(J_x)^{2-{1\over S}} -{1\over \sigma^2 L}[(1-R)J]^{\rho\over S}J_x^{-{1\over S}}Y+{\mu \over \sigma^2L}[(1-R)J]^{\rho\over S}J_x^{1-{1\over S}}\\
&-2\kappa J_x + \frac{\kappa (J_x)^2 J_{xxx}}{(J_{xx})^2}
\end{align*}
and
\begin{align*}
	\delta \nu J_{xx} =&  {S-R\over 1-S}\Bigg[({\rho\over S}-1)[(1-R)J]^{{\rho\over S}-2}(1-R)J_x^{3-{1\over S}}+[(1-R)J]^{{\rho\over S}-1}(2-{1\over S})J_x^{1-{1\over S}}{Y-\mu J_x\over \sigma^2 L}\Bigg] \\
    &-{1\over \sigma^2 L}\Big[[{\rho\over S}(1-R)J]^{{\rho\over S}-1}(1-R)J_x^{1-{1\over S}}-[(1-R)J]^{\rho\over S}{1\over S}J_x^{-1-{1\over S}}J_{xx}\Big]Y\\
    &-{1\over \sigma^2 L}[(1-R)J]^{\rho\over S}J_x^{-{1\over S}}Y'+{\mu \over \sigma^2L}\Big[{\rho\over S}[(1-R)J]^{{\rho\over S}-1}(1-R)J_x^{2-{1\over S}}\\
    &+[(1-R)J]^{\rho \over S}(1-{1\over S})J_x^{-{1\over S}}({Y-\mu J_x\over\sigma^2 L})\Big]\\
	& - 2 \kappa {J}_{xx} + \frac{\kappa (J_x)^2 J_{xxxx}}{(J_{xx})^2} + \frac{2 \kappa J_{x} J_{xxx}}{({J}_{xx})^3}  [({J}_{xx})^2 -J_x J_{xxx}].
\end{align*}
By the definition of $Y(W)$, the last two equations imply
\begin{align*}
\delta\nu Y &= \mu\Big\{{S-R\over 1-S}[(1-R)J]^{{\rho\over S}-1}(J_x)^{2-{1\over S}} -{1\over \sigma^2 L}[(1-R)J]^{\rho\over S}J_x^{-{1\over S}}Y+{\mu \over \sigma^2L}[(1-R)J]^{\rho\over S}J_x^{1-{1\over S}}\Big\}\\
&+\sigma^2L\Bigg({S-R\over 1-S}\Bigg[({\rho\over S}-1)[(1-R)J]^{{\rho\over S}-2}(1-R)J_x^{3-{1\over S}}\\
&+[(1-R)J]^{{\rho\over S}-1}(2-{1\over S})J_x^{1-{1\over S}}{Y-\mu J_x\over \sigma^2 L}\Bigg] -{1\over \sigma^2 L}\Big[[{\rho\over S}(1-R)J]^{{\rho\over S}-1}(1-R)J_x^{1-{1\over S}}\\
&-[(1-R)J]^{\rho\over S}{1\over S}J_x^{-1-{1\over S}}J_{xx}\Big]Y-{1\over \sigma^2 L}[(1-R)J]^{\rho\over S}J_x^{-{1\over S}}Y'\\
    &+{\mu \over \sigma^2L}\Big[{\rho\over S}[(1-R)J]^{{\rho\over S}-1}(1-R)J_x^{2-{1\over S}}+[(1-R)J]^{\rho \over S}(1-{1\over S})J_x^{-{1\over S}}({Y-\mu J_x\over\sigma^2 L})\Big]\Bigg)\\
&- 2 \kappa Y+ \frac{\kappa (J_x)^2}{J_{xx}^2} Y'' + \frac{2 \kappa J_x {J_{xxx}}}{({J_{xx}})^3} \left\{ \frac{{J_{xx}}}{\sigma^2 L} Y - \frac{{J}_x}{\sigma^2 L} Y' \right\}.
\end{align*}

\begin{proof}[Proof of Lemma \ref{lemma:example-constrained}]

In the constrained region ${\cal B}$, by differentiating the ODE (\ref{eq:V-B}) of $V(W)$ once and twice, we have
\begin{align*}
\delta\nu J_{x} =  & {S-R\over 1-S}[(1-R)J]^{{\rho\over S}-1}(J_x)^{2-{1\over S}} -{1\over \sigma^2 L}[(1-R)J]^{\rho\over S}J_x^{-{1\over S}}Y+{\mu \over \sigma^2L}[(1-R)J]^{\rho\over S}J_x^{1-{1\over S}}\\
&+ \mu L {J_{xx}} + \frac{1}{2} \sigma^2 L^2 {J_{xxx}}
\end{align*}
and
\begin{align*}
\delta\nu {J}_{xx} = &  {S-R\over 1-S}\Bigg[({\rho\over S}-1)[(1-R)J]^{{\rho\over S}-2}(1-R)J_x^{3-{1\over S}}+[(1-R)J]^{{\rho\over S}-1}(2-{1\over S})J_x^{1-{1\over S}}{Y-\mu J_x\over \sigma^2 L}\Bigg] \\
    &-{1\over \sigma^2 L}\Big[[{\rho\over S}(1-R)J]^{{\rho\over S}-1}(1-R)J_x^{1-{1\over S}}-[(1-R)J]^{\rho\over S}{1\over S}J_x^{-1-{1\over S}}J_{xx}\Big]Y\\
    &-{1\over \sigma^2 L}[(1-R)J]^{\rho\over S}J_x^{-{1\over S}}Y'+{\mu \over \sigma^2L}\Big[{\rho\over S}[(1-R)J]^{{\rho\over S}-1}(1-R)J_x^{2-{1\over S}}\\
    &+[(1-R)J]^{\rho \over S}(1-{1\over S})J_x^{-{1\over S}}({Y-\mu J_x\over\sigma^2 L})\Big]+ \mu LJ_{xxx} + \frac{1}{2} \sigma^2 L^2 J_{xxxx}.
\end{align*}
Then,
\begin{align*}
\delta\nu Y &=\mu\Big\{{S-R\over 1-S}[(1-R)J]^{{\rho\over S}-1}(J_x)^{2-{1\over S}} -{1\over \sigma^2 L}[(1-R)J]^{\rho\over S}J_x^{-{1\over S}}Y+{\mu \over \sigma^2L}[(1-R)J]^{\rho\over S}J_x^{1-{1\over S}}\Big\}\\
&+\sigma^2L\Bigg({S-R\over 1-S}\Bigg[({\rho\over S}-1)[(1-R)J]^{{\rho\over S}-2}(1-R)J_x^{3-{1\over S}}\\
&+[(1-R)J]^{{\rho\over S}-1}(2-{1\over S})J_x^{1-{1\over S}}{Y-\mu J_x\over \sigma^2 L}\Bigg] -{1\over \sigma^2 L}\Big[[{\rho\over S}(1-R)J]^{{\rho\over S}-1}(1-R)J_x^{1-{1\over S}}\\
&-[(1-R)J]^{\rho\over S}{1\over S}J_x^{-1-{1\over S}}J_{xx}\Big]Y-{1\over \sigma^2 L}[(1-R)J]^{\rho\over S}J_x^{-{1\over S}}Y'\\
    &+{\mu \over \sigma^2L}\Big[{\rho\over S}[(1-R)J]^{{\rho\over S}-1}(1-R)J_x^{2-{1\over S}}+[(1-R)J]^{\rho \over S}(1-{1\over S})J_x^{-{1\over S}}({Y-\mu J_x\over\sigma^2 L})\Big]\Bigg)\\
&+\mu L Y' + \frac{1}{2} \sigma^2 L^2 Y''.
\end{align*}
\end{proof}

\section*{Appendix D: Existence of viscosity solution for $R>1$}
\label{appen:existence}
\renewcommand {\theequation}{D-\arabic{equation}} \setcounter
{equation}{0}
\renewcommand {\thelemma}{D.\arabic{lemma}} \setcounter
{theorem}{0}
\setcounter{equation}{0}

When $R>1$,  for any $x>0$ and $c\in\mathcal{C}(x)$, we cannot guarantee $$V_0^c=\mathbb{E}\left[\int_{0}^{\infty}e^{-\delta s} f(c_{s},V_{s}^{c})ds\right]>-\infty$$ anymore.  However,  by Proposition \ref{pro:basic pro},  we can only consider the following admissible consumptions set
$$\tilde{\mathcal{C}}(x)=\left\{~c\in\mathcal{C}(x)~|~ V_0^c=\mathbb{E}\left[\int_{0}^{\infty}e^{-\delta s} f(c_{s},V_{s}^{c})ds\right]\geq \delta^{-\nu} \frac{(rx)^{1-R}}{1-R}\right\}.$$
Therefore, Lemma \ref{lem:bsde1} also holds for any $c\in\tilde{\mathcal{C}}(x)$. We can also get the following result. 

\begin{lemma}\label{lem:bsde-rs>1}
Let $1<R<S$ and finite time horizon $T>0$. For any $\zeta\in \mathcal{F}_{T}$ such that $\zeta<0$ and $\mathbb{E}[((1-R)\zeta)^{\frac{1}{\nu}}+\int_{0}^{T}e^{-\delta s}c_{s}^{1-S}ds]<+\infty$, then BSDE
\begin{align}\label{eq:bsde-fix-terminal--rs>1}
 V_t=\zeta+\int_{t}^{T}e^{-\delta s} f(c_{s},V_{s})ds
 -\int^T_tZ_sdW_s,\ \ 0\leq t\leq T
\end{align}admits a unique solution $(V, Z)$ such that $V$ is strictly negative, and of class (D), with $\int_{0}^{T}Z_{s}^{2}ds<+\infty$.
\end{lemma}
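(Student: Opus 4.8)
The plan is to replicate the proof of Lemma~\ref{lem:bsde-rs<1}, carefully reversing every sign for the regime $1<R<S$, in which $\nu\in(0,1)$, $\rho=1-\frac1\nu<0$, and the aggregator \eqref{eq:aggregator} satisfies $f(c,v)<0$ whenever $c>0$ and $v<0$. I therefore seek a strictly negative solution, and I first record that $f$ is decreasing in $v$ on $\{v<0\}$, since $\partial_v f(c,v)=\frac{c^{1-S}}{1-S}\,\rho\,(1-R)\,\big((1-R)v\big)^{\rho-1}<0$. Because $1-S<0$, the singularities of $f$ now sit at $c=0$ and at $v\to0^-$, so I would truncate from below by setting, for $m>0$,
\[
f_m(c,v)=\frac{(c\vee\frac1m)^{1-S}}{1-S}\Big((1-R)v\vee\tfrac1m\Big)^{\rho},\qquad \zeta_m=\zeta\vee(-m).
\]
For each fixed $m$, $f_m$ is bounded and globally Lipschitz in $v$ (its $v$-derivative is bounded by a constant depending on $m$), and $\zeta_m$ is bounded, so the associated Lipschitz BSDE has a unique solution $(V^{c,m},Z^{c,m})$. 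One checks directly that both $f_m$ and $\zeta_m$ are decreasing in $m$ and converge to $f$ and $\zeta$; hence the comparison theorem gives that $V^{c,m}$ is decreasing in $m$, and comparing with the solution having null generator and terminal $\zeta_m$ yields $V^{c,m}\le\mathbb E[\zeta_m\mid\mathcal F_t]<0$, so each $V^{c,m}$ is strictly negative.

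The key step is a lower bound on $V^{c,m}$ uniform in $m$, obtained by applying It\^o's formula to the positive process $\big((1-R)V^{c,m}\big)^{1/\nu}$. Writing $U=(1-R)V^{c,m}>0$ and using $\frac{1-R}{1-S}=\nu$ one has $(1-R)f_m=\nu\,(c\vee\frac1m)^{1-S}\,(U\vee\frac1m)^{\rho}$; since $1/\nu>1$ the second-order It\^o term enters with a favourable sign, and after taking conditional expectations the bounds $(U\vee\frac1m)^{\rho}U^{-\rho}\le1$ and $(c\vee\frac1m)^{1-S}\le c^{1-S}$ give
\[
\big((1-R)V_t^{c,m}\big)^{1/\nu}\le\mathbb E\Big[\big((1-R)\zeta\big)^{1/\nu}+\int_t^T e^{-\delta s}c_s^{1-S}\,ds\ \Big|\ \mathcal F_t\Big].
\]
The right-hand side is finite a.s.\ by hypothesis and independent of $m$; it bounds $(1-R)V^{c,m}$ from above, hence bounds $V^{c,m}$ from below by a fixed negative process, and shows that $\{V^{c,m}\}_m$ is of class (D). Consequently $V:=\lim_{m\to\infty}V^{c,m}$ exists, is strictly negative, and is of class (D).

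To identify $V$ as the first component of a genuine solution, I would pass to the power transform $\tilde V=\big((1-R)V\big)^{1/\nu}$, under which \eqref{eq:bsde-fix-terminal--rs>1} becomes a quadratic BSDE with generator $e^{-\delta s}c_s^{1-S}+\frac{\rho\nu}{2}\frac{\hat z^2}{\tilde v}$, terminal value $\big((1-R)\zeta\big)^{1/\nu}$ of finite expectation, and integrable driver $e^{-\delta s}c_s^{1-S}$ (this is exactly the structure already met in Proposition~\ref{pro:ez-concave}, with coefficient $\frac{\rho}{2(1-\rho)}=\frac{\rho\nu}{2}$). The quadratic coefficient $\frac{\rho\nu}{2}<0$ has the right sign, and the uniform bound just derived is precisely the a priori estimate required to run the localization argument of \cite{bh06} for quadratic BSDEs with unbounded terminal data; this yields $(\tilde V,\hat Z)$, and transforming back produces $(V,Z)$ solving \eqref{eq:bsde-fix-terminal--rs>1} with $\int_0^T Z_s^2\,ds<\infty$. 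Finally, since $f$ is decreasing in $v$, uniqueness follows exactly as in Proposition~2.2 of \cite{X17}, now applied to strictly negative solutions of class (D).

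The hard part will be the passage to the limit in the third step: the transformed generator is quadratic in $\hat z$ and singular in $\tilde v$ as $\tilde v\downarrow0$ (equivalently as $V\uparrow0$), so naive stability estimates fail, and one must exploit the good sign of the quadratic term together with the class-(D) bound to obtain $L^2$-control of $\hat Z$ (and hence of $Z$) and to identify the limit. The other delicate point is the sign bookkeeping in the a priori estimate, where the inequalities must be oriented so that an upper bound on $(1-R)V^{c,m}$ translates into the desired lower bound on the negative solution $V^{c,m}$.
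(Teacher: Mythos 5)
Your proposal is correct and follows essentially the same route as the paper: the same lower truncation $f_m(c,v)=\frac{(c\vee\frac1m)^{1-S}}{1-S}\big((1-R)v\vee\frac1m\big)^{\rho}$ with $\zeta\vee(-m)$, the comparison theorem to get monotonicity in $m$, the uniform a priori bound via It\^o applied to $\big((1-R)V^{c,m}\big)^{1/\nu}=\big((1-R)V^{c,m}\big)^{1-\rho}$ (your bound matches the paper's since $\nu(1-\rho)=1$), and then the localization argument of Briand--Hu for existence and Proposition 2.2 of Xing for uniqueness. Your additional care about strict negativity via $V^{c,m}\le\mathbb{E}[\zeta_m\mid\mathcal F_t]<0$ and your explicit discussion of the limit passage are consistent with, and slightly more detailed than, the paper's condensed argument.
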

\proof
In this situation, one can construct the generator  as follows:  for any $m>0$,
 $$f_{m}(c,v)=\frac{(c\vee \frac{1}{m})^{1-S}}{1-S}\left(((1-R)v)\vee \frac{1}{m})\right)^{\rho},$$and $f_{m}$ is globally Lipschitz continuous in $v$, moreover,
 $\frac{m^{S-1-\rho}}{1-S}\leq f_{m}\leq 0$.
 Then the truncated BSDE
 \begin{align}\label{eq:bsde-fix-terminal-m-s>1}
 V_t^{c,m}=\zeta\vee (-m)+\int_{t}^{T}e^{-\delta s} f_{m}(c_{s},V_{s}^{c,m})ds
 -\int^T_tZ^{c,m}_sdW_s,\ \ 0\leq t\leq T
\end{align}admits a unique solution $(V^{c,m}, Z^{c,m})$. Besides, $f_{m}$ and the terminal random value are both decreasing with respect to $m$,  by comparison theorem,  it implies $V^{c,m}$ is also decreasing with $m$, and nonpositive (taking generator and terminal random value as 0).

Similar to Lemma \ref{lem:bsde-rs<1}, it suffices to find a lower bound for $V^{c,m}$ that is independent of $m$.
Taking the conditional expectation, one has that
\begin{align*}
&((1-R)V_{t}^{c,m})^{1-\rho}\\
\leq&\mathbb{E}\left[ ((1-R)(\zeta \vee (-m)))^{1-\rho}~\big|~\mathcal{F}_{t}\right]\\
&~~+\mathbb{E}\left[ \int_{t}^{T}  \nu(1-\rho)((1-R)V_{s}^{c,m})^{-\rho} e^{-\delta s}(c_{s}\vee \frac{1}{m})^{1-S} (((1-R)V_{s}^{c,m})\vee \frac{1}{m})^{\rho}ds
~\big|~\mathcal{F}_{t}\right]\\
\leq &\mathbb{E}\left[ ((1-R)\zeta)^{1-\rho}+\nu(1-\rho) \int_{t}^{T}e^{-\delta s}c_{s}^{1-S}ds ~\big|~\mathcal{F}_{t}\right].
\end{align*}
Thus, it implies that
\begin{align*}
V_{t}^{c,m}&\geq \frac{1}{1-R}\left(\mathbb{E}\left[ ((1-R)\zeta)^{1-\rho}+\nu(1-\rho) \int_{t}^{T}e^{-\delta s}c_{s}^{1-S}ds ~\big|~\mathcal{F}_{t}\right]\right)^{\nu}=:{\underline{V}}_{t}^{c}.
\end{align*}\hfill$\Box$

For any $N>0$, define
 $${\mathcal{A}}^N(x):=\{(\pi,c)\in \mathcal{A}(x)~|~c\in\tilde{\mathcal{C}}(x), 
 c_t\geq 1/N, ~\forall t \ge 0\},\ \ N>0, $$ 
 and
\begin{eqnarray*}
\mathbf{H}_N(x,k,p,q)&=&\sup_{{|\pi|\leq g(x)}}[\pi(\mu-r)p+\frac{1}{2}\sigma^2\pi^2q]+\sup_{c\geq 1/N}[f(c,k)-cp]+rxp,\\
&&\ \ \ \ \ \ \ \ \ \  \ \ \ \ \ \ \ \ \  \ \ \ \  \ \ \  \ \ \ \ \ \ \ \ \ \ \ \ \ (x,k,p,q)\in
\mathbb{R}_+\times \mathbb{R}_+\times \mathbb{R}\times \mathbb{R}.
\end{eqnarray*}
By the similar proof procedure of Proposition \ref{pro: dpp} and Theorem \ref{existenceN}, 
$J^N$ defined in (\ref{valueN}) satisfies DPP (\ref{dppzhou}) and is a viscosity solution of the approximation  HJB equation (\ref{eq:HJBN}). By the same proof procedure of Theorem \ref{existence}, we can show that the optimal value
                          function $J$ defined by (\ref{eq:ez-problem}) is a
                          viscosity solution to  HJB equation (\ref{eq:HJB}).

%Then, by the localization argument in \cite{bh06}, one obtains the existence of a solution to the BSDE \eqref{eq:bsde-fix-terminal}.

\begin{thebibliography}{99}

\bibitem[Ahn, Choi and Lim (2019)]{ACL19} Ahn, S., Choi, K. , \&   Lim, B.  (2019). Optimal Consumption and Investment under Time-Varying Liquidity Constraints. \textit{Journal of Financial Quantitative Analysis}, 54, 1643-1681.
	
\bibitem[Aurand and Huang (2021)]{AH21}Aurand, J., \& Huang, Y.J. (2021). Mortality and healthcare: a stochastic control analysis under Epstein-Zin preferences. \textit{SIAM Journal on Control and Optimization}, 59(5), 4051-4080.

%\bibitem[Aurand and Huang (2023)]{AH23}Aurand, J., \& Huang, Y.J. (2023). Epstein‐Zin utility maximization on a random horizon.  \textit{Mathematical Finance}, 33, 1370-1411.	

 \bibitem[Backus, Chernov and Zin (2014)]{BCZ2014} Backus, D., Chernov, M., \& Zin, S. (2014). Sources of entropy in representative agent models. \textit{The Journal of Finance}, 69, 51-99.

    
\bibitem[Bansal and Yaron(2004)]{BY04}Bansal, R., \& Yaron, A. (2004). Risks for the long run: A potential resolution of asset pricing puzzles. \textit{Journal of Finance}, 59(4), 1481-1509.

\bibitem[Bauerle and Jaskewicz (2018)]{BJ18}Bauerle, M., \& Jaskiewicz, A. (2018). Stochastic optimal growth model with risk sensitive preferences. \textit{Journal of Economic Theory}, 173, 181-200.

%\bibitem[Bensoussan, Jang and Park (2016)]{BJP16}Bensoussan, A., Jang, B.G.,  \& Park,  S. (2016).  Unemployment risks and optimal retirement in an incomplete market. \textit{Operations Research}, 64(4), 1015-1032.

%\bibitem[Black and Perold (1992)]{BP92} Black, F., \&  Perold, A. (1992). Theory of constant proportion portfolio insurance. \textit{Journal of Economic Dynamics and Control}, 16, 403-426.


%\bibitem[Blanchet-Scalliet et al. (2008)]{BEJM08}Blanchet-Scalliet, C., El Karoui, N., Jeanblanc, M., \& Martellini, L. (2008). Optimal investment decisions when time-horizon is uncertain. \textit{Journal of Mathematical Economics}, 44(11), 1100-1113.

\bibitem[Borovicka and Stachurski (2020)]{BS20} Borovicka, J., \& Stachurski, J. (2020). Necessary and sufficient conditions for existence and uniqueness of recursive utilities. \textit{Journal of Finance}, 75 (3), 1457-1493.

\bibitem[Borovicka (2020)]{B20} Borovicka, J. (2020). Survival and long-run dynamics with heterogeneous beliefs under recursive preferences. \textit{Journal of Political Economy}, 128 (1), 206-251.


\bibitem[Briand and Hu(2006)]{bh06}Briand,P., \&  Hu, Y. (2006).  BSDE with quadratic growth and unbounded terminal value. \textit{Probab. Theory Relat. Fields}, 136, 604-618.


%\bibitem[Borovicka (2020)]{B20} Borovicka, J. (2020). Survival and long-run dynamics with heterogeneous beliefs under recursive preferences. \textit{Journal of Political Economy}, 128 (1), 206-251.

%\bibitem[Choi and Shim (2006)]{CS06}Choi, K. J.,\& Shim, G.  (2006).  Disutility, Optimal Retirement, and Portfolio Selection. \textit{Math. Finance}, 16, 443-467.

%\bibitem[Choi, Shim and Shin (2008)]{CSS08} Choi, K., Shim, G.,  \& Shin,  Y.  (2008).  Optimal portfolio, consumption-leisure and retirement choice problem with CES utility. \textit{Math. Finance}, 18(3), 445-472.


%\bibitem[Choi, Jin and Lim (2023)]{CJL23} Choi, K., Jin; Kwak, M.,  Lim, B., 2023. Optimal Recursive Utility Maximization with Debt-to-Income Limits. Working paper.

\bibitem[Coen-Pirani (2005)]{C05} Coen-Pirani, D. (2005). Margin requirements and equilibrium asset prices. \textit{Journal of  Monetary Economics}  52(2), 449–75.


\bibitem [Crandall, Ishii and Lions (1992)]{c92} Crandall, M. G., Ishii, H., \& Lions, P. L. (1992). User’s guide to viscosity solutions of second order partial differential equations. \textit{Bulletin of the American Mathematical Society},  27, 1–67.

\bibitem [Cuoco and Liu (2000)]{CL00} Cuoco, D., and Liu, H. (2000). A Martingale Characterization of Consumption Choices and Hedging Costs with Margin Requirements. \textit{Mathematical Finance}, 10, 355-385.

\bibitem[Cvitanic and Karatzas (1992)]{CK92} Cvitanic, J., \& Karatzas, I. (1992). Convex duality in constrained portfolio optimization. 
  \textit{Annals of Applied Probability}, 2, 767-818.


\bibitem[Davis and Norman (1990)]{DN90}Davis, M. H. A., \& Norman, A. R. (1990). Portfolio selection with transaction costs. \textit{Mathematics of Operations
Research}, 15(4), 676-713.


\bibitem[Duffie and Epstein (1992)]{DE92}Duffie, D., \& Epstein, L. G. (1992). Stochastic differential utility. \textit{Econometrica}, 60(2), 353-394.

\bibitem[Dybvig (1995)]{D95}Dybvig, P. (1995). Duesenberry’s ratcheting of consumption: optimal dynamic consumption and investment given intolerance for any decline in standard of living. \textit{Review of Economic Studies}, 62, 287-313.

\bibitem[Dybvig and Liu (2011)]{DL11}Dybvig, P., \& Liu, L. (2011). Verification Theorems for Models of Optimal Consumption and
Investment with Retirement and Constrained Borrowing. \textit{Mathematics of Operations Research}, 36 (4), 620-635.

\bibitem[El Karoui, Peng and Quenez(1997)]{EPQ97} El Karoui, N., Peng, S.,  \&  Quenez,  M.C.  (1997).   Backward stochastic differential equations in finance.  \textit{Mathematical Finance}, 7, 1-71.

\bibitem[Elie and Touzi (2008)]{ET08}Elie, R.,  \& Touzi, N. (2008). Optimal lifetime consumption and investment under a drawdown constraint. \textit{Finance and Stochastics},  12, 299-330.


\bibitem[Epstein and Zin (1989)]{EZ89}Epstein, L., \& Zin, S. (1989). Substitution, risk aversion, and the temporal behavior of consumption and asset returns: A theoretical framework. \textit{Econometrica}, 57(4), 937-969.

\bibitem[Fleming and Soner (2006)]{FS06}Fleming, W., \& Soner, H. (2006). \textit{Controlled Markov Processes and Viscosity Solutions}, Springer-Verlag, New York.


\bibitem[Garleanu and Panageas (2015)]{GP15} Garleanu, N., \& Panageas, S. (2015). Young, old, conservative, and bold: the implications of heterogeneity and finite lives for asset pricing. \textit{Journal of Political Economy}, 123, 670-685.


\bibitem[Gikhman and Skorohod (1972)]{GS72}
Gikhman, I. I., \& Skorohod, A.V. (1972).  \textit{Stochastic Differential Equations}. Springer-Verlag, Berlin, New York.

\bibitem[Grossman and Vila (1992)]{GV92}Grossman, S., \&  Vila, J.  (1992). Optimal Dynamic Trading Strategies with Leverage Constraints, \textit{Journal of Financial Quantitative Analysis}, 27, 151-168.

\bibitem[Heimer and Imas (2022)]{HI22}Heimer, M., \& Imas, A. (2022). Biased by Choice: How Financial Constraints Can Reduce Financial Mistakes  \textit{Review of Financial Studies}, 35(4), 1643-1681.


\bibitem[Herdegen, Hobson and Jerome (2023a)]{HHJ23a}Herdegen, M.,  Hobson, D., \& Jerome, J. (2023a). The infinite-horizon investment-consumption problem for Epstein-Zin stochastic differential utility. I: Foundations.  \textit{Finance and Stochastics}, 27(1), 127-158.

\bibitem[Herdegen, Hobson and Jerome (2023b)]{HHJ23b}Herdegen, M.,  Hobson, D., \& Jerome, J. (2023b). The infinite-horizon investment-consumption problem for Epstein-Zin stochastic differential utility. II: Existence, uniqueness and verification for $\nu\in(0,1)$.   \textit{Finance and Stochastics}, 27(1), 159-188.

\bibitem[Herdegen, Hobson and Jerome (2025)]{HHJ25}Herdegen, M.,  Hobson, D., \& Jerome, J. (2025).
Proper solutions for Epstein-Zin stochastic differential utility. \textit{Finance and Stochastics}, 29, 885-932.

\bibitem[Herdegen, Hobson and Tse (2024)]{HHT24} Herdegen, M.,  Hobson, D., \& Tse, A. (2024).
Portfolio optimization under transaction costs with recursive
preferences. Available online at: arXiv:2402.08387v1.

%\bibitem[Farhi and Panageas (2007)]{FP07}Farhi, E., \&  Panageas, M. (2007).  Saving and Investing for Early Retirement: A Theoretical Analysis.\textit{Journal of Financial Economics}, 83, 87-121.

%\bibitem[Hu, Imkeller and M\"{u}ller (2005)]{HIM05}Hu, Y., Imkeller, P., \& M\"{u}ller, M. (2005). Utility maximization in incomplete markets. \textit{The Annals of Applied Probability}, 15(3), 1691-1712.

%\bibitem[Hu and Ji (2017)]{hj17} Hu, M., \& S. Ji. (2017). Dynamic programming principle for stochastic recursive optimal control problem driven by a $G$-Brownian motion. \textit{Stochastic Processes and their applications}, 127(1),107-134.

\bibitem[Hu, Liang and Tang (2024)]{hu24} Hu, Y.,  Liang,  G., \&  Tang,  S. (2024).  Utility maximization in constrained and unbounded financial markets: Applications to indifference valuation, regime switching, consumption and Epstein-Zin recursive utility. \textit{arXiv preprint}, 1707.00199v5.  


\bibitem[Jeon and Kwak (2024)]{JK24}Jeon, J., \& Kwak, M. (2024). Optimal consumption and investment with welfare constraints. \textit{Finance and Stochastics}, 28, 391–451.

%\bibitem[Karatzas and Shreve (1991)]{KS91} Karatzas, I.,\& Shreve, S. (1991). \textit{Brownian Motion and Stochastic Calculus}. Springer Science \& Business Media.

%\bibitem[Karatzas et al (1986)]{KLSS86}Karatzas, I., Lehoczky, J. P., Sethi, S. P., \& Shreve, S. E. (1986). Explicit solution of a general consumption/investment problem. \textit{Mathematics of Operations Research}, 11(2), 261-294.


%\bibitem[King, Billingham, and Otto (2003)]{KBO2003} King, A., J. Billingham., \& Otto, S. (2003). {\em Differential Equations: Linear, Nonlinear, Ordinary, Partial}.  Cambridge University.

\bibitem[Kraft, Seiferling and Seifried (2017)]{KSS17}Kraft, H., Seiferling, T., \& Seifried, F. (2017). Optimal consumption and investment with Epstein-Zin recursive utility.   \textit{Finance and Stochastics},  21, 187-226.

\bibitem[Kraft, Seifried and Steffensen (2013)] {KSS13}Kraft, H., Seifried, F.,  \&  Steffensen, M. (2013).  Consumption-portfolio optimization with recursive utility in incomplete markets.  \textit{Finance and Stochastics}, 17, 161-196.



\bibitem[Krykov (1987)]{K87}Krykov, N.V. (1987).  Nonlinear Elliptic and Parabolic Equations of the Second Order. D. Reidel, Dordrecht, Holland.

\bibitem[Larsen and Žitković (2013)]{LZ13} Larsen, K., \& Žitković, G. (2013). On utility maximization under convex portfolio constraints. 
  \textit{Annals of Applied Probability}, 23 (2), 665-692.


\bibitem[Li and Zhao (2019)]{LZ} Li, J., \& Zhao, N. (2019). Representation of asymptotic values for nonexpansive stochastic control systems. 
  \textit{Stochastic Processes and Their Applications}, 129, 634-673.

\bibitem[Marinacci and Montrucchio (2010)]{MM10}Massimo, M.,  \& Montrucchio, L. (2010).  Unique solutions for stochastic recursive utilities.  \textit{Journal of Economic Theory}, 145,  1776-1804.

\bibitem[Matoussi and Xing (2018)]{MX18}Matoussi, A.,  \&  Xing, H. (2018).  Convex duality for Epstein-Zin stochastic differential utility.  \textit{Mathematical Finance}, 28,  991-1019.

\bibitem[Melnyk, Muhle-Karbe and Serfried (2020)]{MMS20}Melnyk, Y., Muhle-Karbe, J.,  \&  Seifried, F. T. (2020). Lifetime investment and consumption with recursive preferences and small transaction costs.  \textit{Mathematical Finance},  30, 1135-1167.
\bibitem[Peng (1997)]{peng}  Peng, S. (1997). BSDE and stochastic optimizations, in: J. Yan, S. Peng, S. Fang, L. Wu (Eds.),
   Topics in Stochastic Analysis, Science Press, Beijing, (in Chinese).
%\bibitem[Merton (1971)]{M71}Merton, R.C. (1971). Optimum consumption and portfolio rules in a continuous-time model.  \textit{Journal of  Economic Theory},   3, 373-413.

%\bibitem[Pang, Tian and Tian (2024)]{PTT24} Pang, T., Tian, D., \&   Tian, W. (2024). Optimal portfolio choice with comfortable consumption. Working paper. Available online at  https://ssrn.com/abstract=4965211.

\bibitem[Protter (2005)]{P05}Protter, P. (2005). \textit{Stochastic Integration and Differential Equations.}  2nd edn., Version 2.1.  Springer. 

\bibitem[Pu and Zhang (2018)]{PZ18} Pu, J., \&
Zhang, Q. (2018). Dynamic programming principle and associated Hamilton-Jacobi-Bellman equation for stochastic recursive control problem with non-Lipschitz aggregator. \textit{ESAIM: Control, Optimisation and Calculus of Variations}, 355-376.

%\bibitem[Rebelo (1991)]{R91} Rebelo, S. (1991). Long-Run Policy Analysis and Long-Run Growth. {\em Journal of Political Economy}, 99(3), 500-521.

\bibitem[Schorfheide, Song and Yaron (2018)]{SSY18} Schorfheide, F., Song, D., \& Yaron, A. (2018). Identifying long-run risks: A bayesian mixed-frequency approach. \textit{Econometrica}, 86(2), 617-654.

\bibitem[Schroder and Skiadas (1999)]{SS99}Schroder, M.,  \&  Skiadas, C. (1999). Optimal consumption and portfolio selection with stochastic differential utility.  \textit{Journal of  Economic Theory}, 89, 68-126.

\bibitem[Schroder and Skiadas(2003)]{SS03}Schroder, M., \& Skiadas, C. (2003). Optimal lifetime consumption-portfolio strategies under trading constraints and generalized recursive preferences. {\em Stochastic Processes and Their Applications}, 108, 155-202.

%\bibitem[Schroder and Skiadas(2005)]{SS05}Schroder, M., \& Skiadas, C. (2005). Lifetime consumption-portfolio choice under trading constraints, recursive preferences, and nontradeable income. {\em Stochastic Processes and Their Applications}, 115, 1-30.

\bibitem[Seiferling and Seifried (2016)]{SS16}Seiferling, T.,  \&  Seifried, F. (2016). Epstein–Zin stochastic differential utility: Existence, uniqueness, concavity, and utility gradients.  Available online at https://ssrn.com/abstract=2625800.

\bibitem[Shigeta (2022)]{S22} Shigeta, Y. (2022). Quasi-hyperbolic discounting under recursive utility and consumption–investment decisions. \textit{Journal of Economic Theory} 204, 105518. 

\bibitem[Shigeta (2025)]{S25} Shigeta, Y. (2025). An Economic Interpretation and Mathematical Analysis of Epstein-Zin Stochastic Differential Utility in Infinite Horizon When $\theta < 0$. \textit{Finance and Stochastics}, forthcoming. 

\bibitem[Stachurski, Wilms and Zhang (2024)]{SWZ24} Stachurski, J., O. Wilms., and J. Zhang. (2024). Asset pricing with time preference shocks: Existence and
uniqueness. \textit{Journal of Economic Theory} 216, 105781. 

	\bibitem[Strulovici and Szydlowski (2015)]{SS15} Strulovici, B., \&  Szydlowski, M. (2015).  On the Smoothness of Value Functions and the Existence of Optimal Strategies in Diffusion Models.  {\em Journal of Economic Theory}, 159, 1016-1055.

\bibitem[Tian, Tian and Zhu (2025)]{TTZ25}Tian, D., Tian, W., \&   Zhu, Z.  (2025).  Optimal  comfortable consumption under Epstein-Zin utility. arXiv:2502.15138.


\bibitem[Tian and Zhu (2022)]{TZ22}Tian, W., \&   Zhu, Z.  (2022).  A Problem Choice Problem Under Risk Capacity Constraint.  \textit{Annals of Finance}, 18, 285-326.

\bibitem[Vila and Zariphopoulou (1997)]{VZ97} Vila, J., \&  Zariphopoulou, T.  (1997). Optimal consumption and portfolio choice with borrowing constraints. \textit{Journal of Economic Theory}, 77, 402-431.

\bibitem[Wang, Wang and Yang (2016)]{WWY16}Wang, C., Wang, N.,  \&  Yang, J. (2016). Optimal consumption and savings with stochastic income and recursive utility.   \textit{Journal of  Economic Theory},   165, 292-331.

\bibitem[Xing (2017)]{X17}Xing, H. (2017). Consumption-investment optimization with Epstein-Zin utility in incomplete markets.  \textit{Finance and Stochastics}, 21, 227-262.

\bibitem[Xu and Yi (2016)]{XY16} Xu, Z., \& Yi, F. (2016).  An optimal consumption investment model with constraint on consumption. \textit{Mathematical Control and Related Fields}, 6(3), 517-534.

%\bibitem[Yang and Koo (2018)]{YK18} Yang, Z., \& Koo, H. K. (2018).  Optimal consumption and portfolio selection with early retirement option. \textit{Mathematics of Operations Research}, 43(4), 1378-1404.

\bibitem[Zariphopoulou (1994)]{Z94}
Zariphopoulou, T.  (1994). Consumption and Investment Models with Constraints. \textit{SIAM Journal on Control and Optimization}, 32, 59-85.
\bibitem[Zhou (2024)]{zhou24}
    Zhou, J. (2024). Viscosity solutions to second order elliptic Hamilton-Jacobi-Bellman equations with infinite delay. \textit{The Annals of Applied Probability},  34(5), 4709–4757.
\end{thebibliography}
\end{document}